\newtheorem{theorem}{Theorem}
\newtheorem{proposition}{Proposition}
\newtheorem{definition}{Definition}
\newtheorem{lemma}{Lemma}
\newtheorem{corollary}{Corollary}
\newenvironment{proof}[1][Proof]{\noindent \textbf{#1.} }{\qedsymbol}
\newcommand{\qedsymbol}{\hspace{\fill}\rule{1.5ex}{1.5ex}}
\let\pdfoutput=\undefined\fi
\chardef\@x10\chardef\@xv60
\def\tcitime{
\def\@time{%
  \@minute\time\@hour\@minute\divide\@hour\@xv
  \ifnum\@hour<\@x 0\fi\the\@hour:%
  \multiply\@hour\@xv\advance\@minute-\@hour
  \ifnum\@minute<\@x 0\fi\the\@minute
  }}%
\def\x@hyperref#1#2#3{%
   % Turn off various catcodes before reading parameter 4
   \catcode`\~ = 12
   \catcode`\$ = 12
   \catcode`\_ = 12
   \catcode`\# = 12
   \catcode`\& = 12
   \y@hyperref{#1}{#2}{#3}%
}
\def\y@hyperref#1#2#3#4{%
   #2\ref{#4}#3
   \catcode`\~ = 13
   \catcode`\$ = 3
   \catcode`\_ = 8
   \catcode`\# = 6
   \catcode`\& = 4
}
\def\QCTOpt[#1]#2{%
  \def\QCTOptB{#1}
  \def\QCTOptA{#2}
}
\def\QCTNOpt#1{%
  \def\QCTOptA{#1}
  \let\QCTOptB\empty
}
\def\Qct{%
  \@ifnextchar[{%
    \QCTOpt}{\QCTNOpt}
}
\def\QCBOpt[#1]#2{%
  \def\QCBOptB{#1}%
  \def\QCBOptA{#2}%
}
\def\QCBNOpt#1{%
  \def\QCBOptA{#1}%
  \let\QCBOptB\empty
}
\def\Qcb{%
  \@ifnextchar[{%
    \QCBOpt}{\QCBNOpt}%
}
\def\PrepCapArgs{%
  \ifx\QCBOptA\empty
    \ifx\QCTOptA\empty
      {}%
    \else
      \ifx\QCTOptB\empty
        {\QCTOptA}%
      \else
        [\QCTOptB]{\QCTOptA}%
      \fi
    \fi
  \else
    \ifx\QCBOptA\empty
      {}%
    \else
      \ifx\QCBOptB\empty
        {\QCBOptA}%
      \else
        [\QCBOptB]{\QCBOptA}%
      \fi
    \fi
  \fi
}
\def\GRAPHICSPS#1{%
 \ifcase\GRAPHICSTYPE%\GRAPHICSTYPE=0
   \special{ps: #1}%
 \or%\GRAPHICSTYPE=1
   \special{language "PS", include "#1"}%
%%%\or%\GRAPHICSTYPE=2
%%%  #1%
 \fi
}%
\def\graffile#1#2#3#4{%
%%% \ifnum\GRAPHICSTYPE=\tw@
%%%  %Following if using psfig
%%%  \@ifundefined{psfig}{\input psfig.tex}{}%
%%%  \psfig{file=#1, height=#3, width=#2}%
%%% \else
  %Following for all others
  % JCS - added BOXTHEFRAME, see below
    \bgroup
	   \@inlabelfalse
       \leavevmode
       \@ifundefined{bbl@deactivate}{\def~{\string~}}{\activesoff}%
        \raise -#4 \BOXTHEFRAME{%
           \hbox to #2{\raise #3\hbox to #2{\null #1\hfil}}}%
    \egroup
}%
\def\draftbox#1#2#3#4{%
 \leavevmode\raise -#4 \hbox{%
  \frame{\rlap{\protect\tiny #1}\hbox to #2%
   {\vrule height#3 width\z@ depth\z@\hfil}%
  }%
 }%
}%
\let\nographics=\@msidraft
\newif\ifwasdraft
\def\GRAPHIC#1#2#3#4#5{%
   \ifnum\@msidraft=\@ne\draftbox{#2}{#3}{#4}{#5}%
   \else\graffile{#1}{#3}{#4}{#5}%
   \fi
}
\def\addtoLaTeXparams#1{%
    \edef\LaTeXparams{\LaTeXparams #1}}%
\newif\ifBoxFrame \BoxFramefalse
\newif\ifOverFrame \OverFramefalse
\newif\ifUnderFrame \UnderFramefalse
\def\BOXTHEFRAME#1{%
   \hbox{%
      \ifBoxFrame
         \frame{#1}%
      \else
         {#1}%
      \fi
   }%
}
\def\doFRAMEparams#1{\BoxFramefalse\OverFramefalse\UnderFramefalse\readFRAMEparams#1\end}%
\def\readFRAMEparams#1{%
 \ifx#1\end%
  \let\next=\relax
  \else
  \ifx#1i\dispkind=\z@\fi
  \ifx#1d\dispkind=\@ne\fi
  \ifx#1f\dispkind=\tw@\fi
  \ifx#1t\addtoLaTeXparams{t}\fi
  \ifx#1b\addtoLaTeXparams{b}\fi
  \ifx#1p\addtoLaTeXparams{p}\fi
  \ifx#1h\addtoLaTeXparams{h}\fi
  \ifx#1X\BoxFrametrue\fi
  \ifx#1O\OverFrametrue\fi
  \ifx#1U\UnderFrametrue\fi
  \ifx#1w
    \ifnum\@msidraft=1\wasdrafttrue\else\wasdraftfalse\fi
    \@msidraft=\@ne
  \fi
  \let\next=\readFRAMEparams
  \fi
 \next
 }%
\def\IFRAME#1#2#3#4#5#6{%
      \bgroup
      \let\QCTOptA\empty
      \let\QCTOptB\empty
      \let\QCBOptA\empty
      \let\QCBOptB\empty
      #6%
      \parindent=0pt
      \leftskip=0pt
      \rightskip=0pt
      \setbox0=\hbox{\QCBOptA}%
      \@tempdima=#1\relax
      \ifOverFrame
          % Do this later
          \typeout{This is not implemented yet}%
          \show\HELP
      \else
         \ifdim\wd0>\@tempdima
            \advance\@tempdima by \@tempdima
            \ifdim\wd0 >\@tempdima
               \setbox1 =\vbox{%
                  \unskip\hbox to \@tempdima{\hfill\GRAPHIC{#5}{#4}{#1}{#2}{#3}\hfill}%
                  \unskip\hbox to \@tempdima{\parbox[b]{\@tempdima}{\QCBOptA}}%
               }%
               \wd1=\@tempdima
            \else
               \textwidth=\wd0
               \setbox1 =\vbox{%
                 \noindent\hbox to \wd0{\hfill\GRAPHIC{#5}{#4}{#1}{#2}{#3}\hfill}\\%
                 \noindent\hbox{\QCBOptA}%
               }%
               \wd1=\wd0
            \fi
         \else
            \ifdim\wd0>0pt
              \hsize=\@tempdima
              \setbox1=\vbox{%
                \unskip\GRAPHIC{#5}{#4}{#1}{#2}{0pt}%
                \break
                \unskip\hbox to \@tempdima{\hfill \QCBOptA\hfill}%
              }%
              \wd1=\@tempdima
           \else
              \hsize=\@tempdima
              \setbox1=\vbox{%
                \unskip\GRAPHIC{#5}{#4}{#1}{#2}{0pt}%
              }%
              \wd1=\@tempdima
           \fi
         \fi
         \@tempdimb=\ht1
         %\advance\@tempdimb by \dp1
         \advance\@tempdimb by -#2
         \advance\@tempdimb by #3
         \leavevmode
         \raise -\@tempdimb \hbox{\box1}%
      \fi
      \egroup%
}%
\def\DFRAME#1#2#3#4#5{%
  \vspace\topsep
  \hfil\break
  \bgroup
     \leftskip\@flushglue
	 \rightskip\@flushglue
	 \parindent\z@
	 \parfillskip\z@skip
     \let\QCTOptA\empty
     \let\QCTOptB\empty
     \let\QCBOptA\empty
     \let\QCBOptB\empty
	 \vbox\bgroup
        \ifOverFrame 
           #5\QCTOptA\par
        \fi
        \GRAPHIC{#4}{#3}{#1}{#2}{\z@}%
        \ifUnderFrame 
           \break#5\QCBOptA
        \fi
	 \egroup
  \egroup
  \vspace\topsep
  \break
}%
\def\FFRAME#1#2#3#4#5#6#7{%
 %If float.sty loaded and float option is 'h', change to 'H'  (gp) 1998/09/05
  \@ifundefined{floatstyle}
    {%floatstyle undefined (and float.sty not present), no change
     \begin{figure}[#1]%
    }
    {%floatstyle DEFINED
	 \ifx#1h%Only the h parameter, change to H
      \begin{figure}[H]%
	 \else
      \begin{figure}[#1]%
	 \fi
	}
  \let\QCTOptA\empty
  \let\QCTOptB\empty
  \let\QCBOptA\empty
  \let\QCBOptB\empty
  \ifOverFrame
    #4
    \ifx\QCTOptA\empty
    \else
      \ifx\QCTOptB\empty
        \caption{\QCTOptA}%
      \else
        \caption[\QCTOptB]{\QCTOptA}%
      \fi
    \fi
    \ifUnderFrame\else
      \label{#5}%
    \fi
  \else
    \UnderFrametrue%
  \fi
  \begin{center}\GRAPHIC{#7}{#6}{#2}{#3}{\z@}\end{center}%
  \ifUnderFrame
    #4
    \ifx\QCBOptA\empty
      \caption{}%
    \else
      \ifx\QCBOptB\empty
        \caption{\QCBOptA}%
      \else
        \caption[\QCBOptB]{\QCBOptA}%
      \fi
    \fi
    \label{#5}%
  \fi
  \end{figure}%
 }%
\def\makeactives{
  \catcode`\"=\active
  \catcode`\;=\active
  \catcode`\:=\active
  \catcode`\'=\active
  \catcode`\~=\active
}
   \gdef\activesoff{%
      \def"{\string"}%
      \def;{\string;}%
      \def:{\string:}%
      \def'{\string'}%
      \def~{\string~}%
      %\bbl@deactivate{"}%
      %\bbl@deactivate{;}%
      %\bbl@deactivate{:}%
      %\bbl@deactivate{'}%
    }
\def\FRAME#1#2#3#4#5#6#7#8{%
 \bgroup
 \ifnum\@msidraft=\@ne
   \wasdrafttrue
 \else
   \wasdraftfalse%
 \fi
 \def\LaTeXparams{}%
 \dispkind=\z@
 \def\LaTeXparams{}%
 \doFRAMEparams{#1}%
 \ifnum\dispkind=\z@\IFRAME{#2}{#3}{#4}{#7}{#8}{#5}\else
  \ifnum\dispkind=\@ne\DFRAME{#2}{#3}{#7}{#8}{#5}\else
   \ifnum\dispkind=\tw@
    \edef\@tempa{\noexpand\FFRAME{\LaTeXparams}}%
    \@tempa{#2}{#3}{#5}{#6}{#7}{#8}%
    \fi
   \fi
  \fi
  \ifwasdraft\@msidraft=1\else\@msidraft=0\fi{}%
  \egroup
 }%
\def\TEXUX#1{"texux"}
\def\limfunc#1{\mathop{\rm #1}}%
\long\def\QQQ#1#2{%
     \long\expandafter\def\csname#1\endcsname{#2}}%
\long\def\QQA#1#2{}%
\def\QTR#1#2{{\csname#1\endcsname {#2}}}%
\def\EXPAND#1[#2]#3{}%
\def\NOEXPAND#1[#2]#3{}%
\def\LaTeXparent#1{}%
\def\ChildStyles#1{}%
\def\ChildDefaults#1{}%
\def\QTagDef#1#2#3{}%
  \providecommand{\UNICODE}[2][]{\protect\rule{.1in}{.1in}}
  \providecommand{\U}[1]{\protect\rule{.1in}{.1in}}
\def\QQfnmark#1{\footnotemark}
 \def\abstract{%
  \if@twocolumn
   \section*{Abstract (Not appropriate in this style!)}%
   \else \small 
   \begin{center}{\bf Abstract\vspace{-.5em}\vspace{\z@}}\end{center}%
   \quotation 
   \fi
  }%
   \def\registered{\relax\ifmmode{}\r@gistered
                    \else$\m@th\r@gistered$\fi}%
 \def\r@gistered{^{\ooalign
  {\hfil\raise.07ex\hbox{$\scriptstyle\rm\text{R}$}\hfil\crcr
  \mathhexbox20D}}}}{}%
\newdimen\theight
\def\newfmtname{LaTeX2e}
  \DeclareOldFontCommand{\rm}{\normalfont\rmfamily}{\mathrm}
  \DeclareOldFontCommand{\sf}{\normalfont\sffamily}{\mathsf}
  \DeclareOldFontCommand{\tt}{\normalfont\ttfamily}{\mathtt}
  \DeclareOldFontCommand{\bf}{\normalfont\bfseries}{\mathbf}
  \DeclareOldFontCommand{\it}{\normalfont\itshape}{\mathit}
  \DeclareOldFontCommand{\sl}{\normalfont\slshape}{\@nomath\sl}
  \DeclareOldFontCommand{\sc}{\normalfont\scshape}{\@nomath\sc}
\def\alpha{{\Greekmath 010B}}%
\def\beta{{\Greekmath 010C}}%
\def\gamma{{\Greekmath 010D}}%
\def\delta{{\Greekmath 010E}}%
\def\epsilon{{\Greekmath 010F}}%
\def\zeta{{\Greekmath 0110}}%
\def\eta{{\Greekmath 0111}}%
\def\theta{{\Greekmath 0112}}%
\def\iota{{\Greekmath 0113}}%
\def\kappa{{\Greekmath 0114}}%
\def\lambda{{\Greekmath 0115}}%
\def\mu{{\Greekmath 0116}}%
\def\nu{{\Greekmath 0117}}%
\def\xi{{\Greekmath 0118}}%
\def\pi{{\Greekmath 0119}}%
\def\rho{{\Greekmath 011A}}%
\def\sigma{{\Greekmath 011B}}%
\def\tau{{\Greekmath 011C}}%
\def\upsilon{{\Greekmath 011D}}%
\def\phi{{\Greekmath 011E}}%
\def\chi{{\Greekmath 011F}}%
\def\psi{{\Greekmath 0120}}%
\def\omega{{\Greekmath 0121}}%
\def\varepsilon{{\Greekmath 0122}}%
\def\vartheta{{\Greekmath 0123}}%
\def\varpi{{\Greekmath 0124}}%
\def\varrho{{\Greekmath 0125}}%
\def\varsigma{{\Greekmath 0126}}%
\def\varphi{{\Greekmath 0127}}%
\def\nabla{{\Greekmath 0272}}
\def\FindBoldGroup{%
   {\setbox0=\hbox{$\mathbf{x\global\edef\theboldgroup{\the\mathgroup}}$}}%
}
\def\Greekmath#1#2#3#4{%
    \if@compatibility
        \ifnum\mathgroup=\symbold
           \mathchoice{\mbox{\boldmath$\displaystyle\mathchar"#1#2#3#4$}}%
                      {\mbox{\boldmath$\textstyle\mathchar"#1#2#3#4$}}%
                      {\mbox{\boldmath$\scriptstyle\mathchar"#1#2#3#4$}}%
                      {\mbox{\boldmath$\scriptscriptstyle\mathchar"#1#2#3#4$}}%
        \else
           \mathchar"#1#2#3#4% 
        \fi 
    \else 
        \FindBoldGroup
        \ifnum\mathgroup=\theboldgroup % For 2e
           \mathchoice{\mbox{\boldmath$\displaystyle\mathchar"#1#2#3#4$}}%
                      {\mbox{\boldmath$\textstyle\mathchar"#1#2#3#4$}}%
                      {\mbox{\boldmath$\scriptstyle\mathchar"#1#2#3#4$}}%
                      {\mbox{\boldmath$\scriptscriptstyle\mathchar"#1#2#3#4$}}%
        \else
           \mathchar"#1#2#3#4% 
        \fi     	    
	  \fi}
\newif\ifGreekBold  \GreekBoldfalse
\let\SAVEPBF=\pbf
\def\pbf{\GreekBoldtrue\SAVEPBF}%
  \newcounter{equationnumber}  
  \def\mathletters{%
     \addtocounter{equation}{1}
     \edef\@currentlabel{\theequation}%
     \setcounter{equationnumber}{\c@equation}
     \setcounter{equation}{0}%
     \edef\theequation{\@currentlabel\noexpand\alph{equation}}%
  }
    \def\BibTeX{{\rm B\kern-.05em{\sc i\kern-.025em b}\kern-.08em
                 T\kern-.1667em\lower.7ex\hbox{E}\kern-.125emX}}}{}%
\def\AmS{{\protect\usefont{OMS}{cmsy}{m}{n}%
                A\kern-.1667em\lower.5ex\hbox{M}\kern-.125emS}}}{}%
\def\@@eqncr{\let\@tempa\relax
    \ifcase\@eqcnt \def\@tempa{& & &}\or \def\@tempa{& &}%
      \else \def\@tempa{&}\fi
     \@tempa
     \if@eqnsw
        \iftag@
           \@taggnum
        \else
           \@eqnnum\stepcounter{equation}%
        \fi
     \fi
     \global\tag@false
     \global\@eqnswtrue
     \global\@eqcnt\z@\cr}
\def\TCItag{\@ifnextchar*{\@TCItagstar}{\@TCItag}}
\def\@TCItag#1{%
    \global\tag@true
    \global\def\@taggnum{(#1)}}
\def\@TCItagstar*#1{%
    \global\tag@true
    \global\def\@taggnum{#1}}
\def\dsum{\mathop{\displaystyle \sum }}%
\def\ExitTCILatex{\makeatother }
\if@compatibility\message{amsmath already loaded}\fi\aftergroup\ExitTCILatex}
\if@compatibility\message{amstex already loaded}\fi\aftergroup\ExitTCILatex}
\if@compatibility\message{amsgen already loaded}\fi\aftergroup\ExitTCILatex}
\let\DOTSI\relax
\def\RIfM@{\relax\ifmmode}%
\def\FN@{\futurelet\next}%
\def\iint{\DOTSI\intno@\tw@\FN@\ints@}%
\def\iiint{\DOTSI\intno@\thr@@\FN@\ints@}%
\def\iiiint{\DOTSI\intno@4 \FN@\ints@}%
\def\idotsint{\DOTSI\intno@\z@\FN@\ints@}%
\def\ints@{\findlimits@\ints@@}%
\newif\iflimtoken@
\newif\iflimits@
\def\findlimits@{\limtoken@true\ifx\next\limits\limits@true
 \else\ifx\next\nolimits\limits@false\else
 \limtoken@false\ifx\ilimits@\nolimits\limits@false\else
 \ifinner\limits@false\else\limits@true\fi\fi\fi\fi}%
\def\multint@{\int\ifnum\intno@=\z@\intdots@                          %1
 \else\intkern@\fi                                                    %2
 \ifnum\intno@>\tw@\int\intkern@\fi                                   %3
 \ifnum\intno@>\thr@@\int\intkern@\fi                                 %4
 \int}%                                                               %5
\def\multintlimits@{\intop\ifnum\intno@=\z@\intdots@\else\intkern@\fi
 \ifnum\intno@>\tw@\intop\intkern@\fi
 \ifnum\intno@>\thr@@\intop\intkern@\fi\intop}%
\def\intic@{%
    \mathchoice{\hskip.5em}{\hskip.4em}{\hskip.4em}{\hskip.4em}}%
\def\negintic@{\mathchoice
 {\hskip-.5em}{\hskip-.4em}{\hskip-.4em}{\hskip-.4em}}%
\def\ints@@{\iflimtoken@                                              %1
 \def\ints@@@{\iflimits@\negintic@
   \mathop{\intic@\multintlimits@}\limits                             %2
  \else\multint@\nolimits\fi                                          %3
  \eat@}%                                                             %4
 \else                                                                %5
 \def\ints@@@{\iflimits@\negintic@
  \mathop{\intic@\multintlimits@}\limits\else
  \multint@\nolimits\fi}\fi\ints@@@}%
\def\intkern@{\mathchoice{\!\!\!}{\!\!}{\!\!}{\!\!}}%
\def\plaincdots@{\mathinner{\cdotp\cdotp\cdotp}}%
\def\intdots@{\mathchoice{\plaincdots@}%
 {{\cdotp}\mkern1.5mu{\cdotp}\mkern1.5mu{\cdotp}}%
 {{\cdotp}\mkern1mu{\cdotp}\mkern1mu{\cdotp}}%
 {{\cdotp}\mkern1mu{\cdotp}\mkern1mu{\cdotp}}}%
\def\RIfM@{\relax\protect\ifmmode}
\def\text{\RIfM@\expandafter\text@\else\expandafter\mbox\fi}
\let\nfss@text\text
\def\text@#1{\mathchoice
   {\textdef@\displaystyle\f@size{#1}}%
   {\textdef@\textstyle\tf@size{\firstchoice@false #1}}%
   {\textdef@\textstyle\sf@size{\firstchoice@false #1}}%
   {\textdef@\textstyle \ssf@size{\firstchoice@false #1}}%
   \glb@settings}
\def\textdef@#1#2#3{\hbox{{%
                    \everymath{#1}%
                    \let\f@size#2\selectfont
                    #3}}}
\newif\iffirstchoice@
\def\Let@{\relax\iffalse{\fi\let\\=\cr\iffalse}\fi}%
\def\vspace@{\def\vspace##1{\crcr\noalign{\vskip##1\relax}}}%
\def\multilimits@{\bgroup\vspace@\Let@
 \baselineskip\fontdimen10 \scriptfont\tw@
 \advance\baselineskip\fontdimen12 \scriptfont\tw@
 \lineskip\thr@@\fontdimen8 \scriptfont\thr@@
 \lineskiplimit\lineskip
 \vbox\bgroup\ialign\bgroup\hfil$\m@th\scriptstyle{##}$\hfil\crcr}%
\def\Sb{_\multilimits@}%
\def\endSb{\crcr\egroup\egroup\egroup}%
\def\Sp{^\multilimits@}%
\newdimen\ex@
\def\rightarrowfill@#1{$#1\m@th\mathord-\mkern-6mu\cleaders
 \hbox{$#1\mkern-2mu\mathord-\mkern-2mu$}\hfill
 \mkern-6mu\mathord\rightarrow$}%
\def\leftarrowfill@#1{$#1\m@th\mathord\leftarrow\mkern-6mu\cleaders
 \hbox{$#1\mkern-2mu\mathord-\mkern-2mu$}\hfill\mkern-6mu\mathord-$}%
\def\leftrightarrowfill@#1{$#1\m@th\mathord\leftarrow
\mkern-6mu\cleaders
 \hbox{$#1\mkern-2mu\mathord-\mkern-2mu$}\hfill
 \mkern-6mu\mathord\rightarrow$}%
\def\overrightarrow{\mathpalette\overrightarrow@}%
\def\overrightarrow@#1#2{\vbox{\ialign{##\crcr\rightarrowfill@#1\crcr
 \noalign{\kern-\ex@\nointerlineskip}$\m@th\hfil#1#2\hfil$\crcr}}}%
\def\overleftarrow{\mathpalette\overleftarrow@}%
\def\overleftarrow@#1#2{\vbox{\ialign{##\crcr\leftarrowfill@#1\crcr
 \noalign{\kern-\ex@\nointerlineskip}$\m@th\hfil#1#2\hfil$\crcr}}}%
\def\overleftrightarrow{\mathpalette\overleftrightarrow@}%
\def\overleftrightarrow@#1#2{\vbox{\ialign{##\crcr
   \leftrightarrowfill@#1\crcr
 \noalign{\kern-\ex@\nointerlineskip}$\m@th\hfil#1#2\hfil$\crcr}}}%
\def\underrightarrow{\mathpalette\underrightarrow@}%
\def\underrightarrow@#1#2{\vtop{\ialign{##\crcr$\m@th\hfil#1#2\hfil
  $\crcr\noalign{\nointerlineskip}\rightarrowfill@#1\crcr}}}%
\def\underleftarrow{\mathpalette\underleftarrow@}%
\def\underleftarrow@#1#2{\vtop{\ialign{##\crcr$\m@th\hfil#1#2\hfil
  $\crcr\noalign{\nointerlineskip}\leftarrowfill@#1\crcr}}}%
\def\underleftrightarrow{\mathpalette\underleftrightarrow@}%
\def\underleftrightarrow@#1#2{\vtop{\ialign{##\crcr$\m@th
  \hfil#1#2\hfil$\crcr
 \noalign{\nointerlineskip}\leftrightarrowfill@#1\crcr}}}%
\def\qopnamewl@#1{\mathop{\operator@font#1}\nlimits@}
\let\nlimits@\displaylimits
\def\setboxz@h{\setbox\z@\hbox}
\def\varlim@#1#2{\mathop{\vtop{\ialign{##\crcr
 \hfil$#1\m@th\operator@font lim$\hfil\crcr
 \noalign{\nointerlineskip}#2#1\crcr
 \noalign{\nointerlineskip\kern-\ex@}\crcr}}}}
 \def\rightarrowfill@#1{\m@th\setboxz@h{$#1-$}\ht\z@\z@
  $#1\copy\z@\mkern-6mu\cleaders
  \hbox{$#1\mkern-2mu\box\z@\mkern-2mu$}\hfill
  \mkern-6mu\mathord\rightarrow$}
\def\leftarrowfill@#1{\m@th\setboxz@h{$#1-$}\ht\z@\z@
  $#1\mathord\leftarrow\mkern-6mu\cleaders
  \hbox{$#1\mkern-2mu\copy\z@\mkern-2mu$}\hfill
  \mkern-6mu\box\z@$}
\def\projlim{\qopnamewl@{proj\,lim}}
\def\injlim{\qopnamewl@{inj\,lim}}
\def\varinjlim{\mathpalette\varlim@\rightarrowfill@}
\def\varprojlim{\mathpalette\varlim@\leftarrowfill@}
\def\varliminf{\mathpalette\varliminf@{}}
\def\varliminf@#1{\mathop{\underline{\vrule\@depth.2\ex@\@width\z@
   \hbox{$#1\m@th\operator@font lim$}}}}
\def\varlimsup{\mathpalette\varlimsup@{}}
\def\varlimsup@#1{\mathop{\overline
  {\hbox{$#1\m@th\operator@font lim$}}}}
\def\align{\@verbatim \frenchspacing\@vobeyspaces \@alignverbatim
You are using the "align" environment in a style in which it is not defined.}
\let\csname endalign*\endcsname =\endtrivlist
\def\alignat{\@verbatim \frenchspacing\@vobeyspaces \@alignatverbatim
You are using the "alignat" environment in a style in which it is not defined.}
\let\csname endalignat*\endcsname =\endtrivlist
\def\xalignat{\@verbatim \frenchspacing\@vobeyspaces \@xalignatverbatim
You are using the "xalignat" environment in a style in which it is not defined.}
\let\csname endxalignat*\endcsname =\endtrivlist
\def\gather{\@verbatim \frenchspacing\@vobeyspaces \@gatherverbatim
You are using the "gather" environment in a style in which it is not defined.}
\let\csname endgather*\endcsname =\endtrivlist
\def\multiline{\@verbatim \frenchspacing\@vobeyspaces \@multilineverbatim
You are using the "multiline" environment in a style in which it is not defined.}
\let\csname endmultiline*\endcsname =\endtrivlist
\def\arrax{\@verbatim \frenchspacing\@vobeyspaces \@arraxverbatim
You are using a type of "array" construct that is only allowed in AmS-LaTeX.}
\def\tabulax{\@verbatim \frenchspacing\@vobeyspaces \@tabulaxverbatim
You are using a type of "tabular" construct that is only allowed in AmS-LaTeX.}
\let\csname endarrax*\endcsname =\endtrivlist
\let\csname endtabulax*\endcsname =\endtrivlist
 \def\endequation{%
     \ifmmode\ifinner % FLEQN hack
      \iftag@
        \addtocounter{equation}{-1} % undo the increment made in the begin part
        $\hfil
           \displaywidth\linewidth\@taggnum\egroup \endtrivlist
        \global\tag@false
        \global\@ignoretrue   
      \else
        $\hfil
           \displaywidth\linewidth\@eqnnum\egroup \endtrivlist
        \global\tag@false
        \global\@ignoretrue 
      \fi
     \else   
      \iftag@
        \addtocounter{equation}{-1} % undo the increment made in the begin part
        \eqno \hbox{\@taggnum}
        \global\tag@false%
        $$\global\@ignoretrue
      \else
        \eqno \hbox{\@eqnnum}% $$ BRACE MATCHING HACK
        $$\global\@ignoretrue
      \fi
     \fi\fi
 } 
 \newif\iftag@ \tag@false
 \def\TCItag{\@ifnextchar*{\@TCItagstar}{\@TCItag}}
 \def\@TCItag#1{%
     \global\tag@true
     \global\def\@taggnum{(#1)}}
 \def\@TCItagstar*#1{%
     \global\tag@true
     \global\def\@taggnum{#1}}
     \def\tag{\@ifnextchar*{\@tagstar}{\@tag}}
     \def\@tag#1{%
         \global\tag@true
         \global\def\@taggnum{(#1)}}
     \def\@tagstar*#1{%
         \global\tag@true
         \global\def\@taggnum{#1}}
\def\dfrac#1#2{{\displaystyle {#1 \over #2}}}%
\def\baselinestretch{1.25}
\def\b0{\mbox{\boldmath $0$}}
\begin{document}

\title{{\Huge {Distributed Decision Through Self-Synchronizing%
}}\\
\vspace{-0.1cm} {\Huge Sensor Networks in the Presence of Propagation}\\
\vspace{-0.1cm} {\Huge Delays and Asymmetric Channels}}
\author{Gesualdo Scutari, Sergio Barbarossa and Loreto Pescosolido \\
%EndAName
{\small Dpt. INFOCOM, Univ. of Rome \textquotedblleft La Sapienza
\textquotedblright, Via Eudossiana 18, 00184 Rome, Italy}\\
{\small E-mail: \texttt{$\{$scutari, sergio, loreto$\}$@infocom.uniroma1.it}}%
.\thanks{%
This work has been partially funded by the WINSOC project, a Specific
Targeted Research Project (Contract Number 0033914) co-funded by the INFSO
DG of the European Commission within the RTD activities of the Thematic
Priority Information Society Technologies, and by ARL/ERO Contract
N62558-05-P-0458. \newline
Part of this work was presented in the \textit{IEEE Workshop on Signal
Processing Advances in Wireless Communications, (SPAWC-2006)}, July 2006,
for undirected graphs.}}
\date{{\small Manuscript submitted on January 10, 2007. Finally accepted on September 15, 2007.}}
\maketitle
\vspace{-0.9cm}
\begin{abstract}
In this paper we propose and analyze a distributed algorithm for
achieving globally optimal decisions, either estimation or
detection, through a self-synchronization mechanism among linearly
coupled integrators initialized with local measurements. We model
the interaction among the nodes as a directed graph with weights
(possibly) dependent on the radio channels and we pose special
attention to the effect of the propagation delay occurring in the
exchange of data among sensors, as a function of the network
geometry. We derive necessary and sufficient conditions for the
proposed system to reach a consensus on globally optimal decision
statistics. One of the major results proved in this work is that a
consensus is reached with exponential convergence speed for any
bounded delay condition if and only if the directed graph is
quasi-strongly connected. We provide a closed form expression for
the global consensus, showing that the effect of delays is, in
general, the introduction of a bias in the final decision. Finally,
we exploit our closed form expression to devise a double-step
consensus mechanism able to provide an unbiased estimate with
minimum extra complexity, without the need to know or estimate the
channel parameters.
\end{abstract}

\vspace{-0.5cm}

\section{Introduction and Motivations}

Endowing a sensor network with self-organizing capabilities is undoubtedly a
useful goal to increase the resilience of the network against node failures
(or simply switches to sleep mode) and avoid potentially dangerous
congestion conditions around the sink nodes. Decentralizing decisions
decreases also the vulnerability of the network against damages to the sink
or control nodes. Distributed computation over a network
%without the presence of a
%fusion center
and its application to statistical consensus theory has a long
history (see, e.g., \cite{Eisenberg-Gale59, DeGroot74}), including
the pioneering work of Tsitsiklis, Bertsekas and Athans on
asynchronous agreement problem for
discrete-time distributed decision-making systems \cite%
{Tsitsiklis-Bertsekas-Athans} and parallel computing \cite%
{Tsitsiklis-thesis, Tsitsiklis-Bertsekas-book}. A sensor network may
be seen indeed as a sort of distributed computer that has to
evaluate a function of the measurements gathered by each sensor,
possibly without the need of a fusion
center. %A simple, yet
%significant, form of in-network distributed computing is achieving a consensus
%about one common observed phenomenon, without the presence of a
%fusion center.
This problem may be addressed by taking into account the vast literature on
distributed consensus/agreement algorithms. These techniques have received
great attention in the recent years within the literature on cooperative
control and multiagent systems \cite{Borkar-Varaiya}$-$\cite%
{Blondel-Tsitsiklis}.
%Distributed strategies that achieve various form of consensus have received
%great attention in the recent years within the literature on cooperative control and multiagent systems  \cite{Borkar-Varaiya}$-$\cite{Blondel-Tsitsiklis}. %\cite{Olfat-Saber-Murray-cdc03, Olfati-Saber, ren-et-al,
%Chai-Wu-2005, Xiao-Boyd}
In particular, the conditions for achieving a consensus over a
common specified value, like a linear combination of the
observations, was solved for networked \emph{continuous-time}
dynamic systems by Olfati-Saber and Murray, under a variety of
network topologies, also allowing for topology
variations during the time necessary to achieve consensus \cite%
{Olfati-Saber-2003, Olfati-Saber}. The \emph{discrete-time} case was
addressed by Tsitsiklis in \cite{Tsitsiklis-thesis} (see also \cite%
{Tsitsiklis-Bertsekas-book}). Many recent works focused on the distributed
computation of more general functions than the average of the initial
measurements. These include average-max-min consensus \cite%
{Corts-Automatica-06}, geometric mean consensus \cite{Olfati-Franco}, and
power mean consensus \cite{Bauso-Giarre}. A study on the class of smooth
functions that can be computed by distributed consensus algorithms was
recently addressed in \cite{Corts-Automatica-sub}. Application of consensus
algorithms to data fusion problem and distributed filtering was proposed in
\cite{Olfati-Shamma, Olfati-Kalman}. The study of the so called \textit{%
alignment} problem, where all the agents eventually reach an agreement, but
without specifying how the final value is related to the initial
measurements, was carried out in \cite{Jadbabaie03}-\cite{Blondel-Tsitsiklis}%
. Two recent excellent tutorials on distributed consensus and agreement
techniques are \cite{Olfati-Saber-Murray-ProcIEEE07} and \cite%
{Ren-Beard-Control-Magazine}.

Consensus may be also seen as a form of self-synchronization among coupled
dynamical systems. In \cite{Barbarossa-iwwan05, Barbarossa-Scutari-Journal},
it was shown how to use the self-synchronization capabilities of a set of
nonlinearly coupled dynamical systems to reach the \textit{globally optimal}
Best Linear Unbiased Estimator (BLUE), assuming symmetric communication
links. As well known, the BLUE estimator coincides with the Maximum
Likelihood (ML) estimate for the linear observation model with additive
Gaussian noise. In particular, it was shown in \cite%
{Barbarossa-Scutari-Journal} that reaching a consensus on the state
derivative, rather than on the state itself (as in \cite{Olfati-Saber-2003}$%
- $\cite{Ren-Beard-Control-Magazine}), allows for better resilience
against coupling noise. In \cite{Schizas-Ribeiro-Giannakis}, the
authors provided a discrete-time decentralized consensus algorithm
to compute the BLUE, based on linear coupling among the nodes, as a
result of a distributed optimization incorporating the coupling
noise.

The consensus protocols proposed in \cite{Olfati-Saber-2003}, \cite%
{ren-et-al}$-$\cite{Jin-Murray} and \cite{Ren-Beard-Control-Magazine}$-$\cite%
{Schizas-Ribeiro-Giannakis} assume that the interactions among the nodes
occur instantaneously, i.e., without any propagation delay. However, this
assumption is not valid for large scale networks, where the distances among
the nodes are large enough to introduce a nonnegligible communication delay.
There are only a few recent works that study the consensus or the agreement
problem, in the presence of propagation delays, namely \cite{Olfati-Saber},
\cite{Strogatz}$-$\cite{Scutari-Barbarossa-Delay-SPAWC}, \cite{Chellaboina-Haddad-CDC06} focusing on \emph{%
continuous-time} systems, and \cite{Tsitsiklis-Bertsekas-Athans,
Tsitsiklis-thesis}, \cite[Ch. 7.3]{Tsitsiklis-Bertsekas-book}, \cite%
{Blondel-Tsitsiklis}, dealing with \emph{discrete-time} protocols.
%The discrete-time case was addressed in \cite{Tsitsiklis-Bertsekas-Athans, Tsitsiklis-thesis},
%\cite[Ch. 7.3]{Tsitsiklis-Bertsekas-book} and \cite{Blondel-Tsitsiklis}, where the authors studied the convergence of alternative asynchronous linear discrete-time algorithms to a common unknown final value.
More specifically, in \cite{Olfati-Saber, Strogatz} the authors
provide sufficient conditions for the convergence of a linear
average consensus protocol, in the case of time-invariant
\emph{homogeneous} delays (i.e., equal delay for all the links) and
assuming \emph{symmetric} communication links. The most appealing
feature of the dynamical system in \cite{Olfati-Saber} is the
convergence of all the state variables to a \emph{known}, \emph{%
delay-independent}, function (equal to the average) of the
observations. However, this desired property is paid in terms of
convergence, since, in the presence of homogeneous delays, the
system of \cite{Olfati-Saber} is able to reach a consensus if and
only if the (common) delay is smaller than a given,
topology-dependent, value. Moreover, the assumption of homogeneous
delays and symmetric links is not appropriate for modeling the
propagation delay in a typical wireless network, where the delays
are proportional to the distance among the nodes and communication
channels are typically \emph{asymmetric}.

The protocol of \cite{Olfati-Saber} was generalized in \cite%
{Papachristodoulou-CDC06} to time-invariant \emph{inhomogeneous}
delays (but symmetric channels) and in \cite{Lee-Spong-06} to
asymmetric channels. The dynamical systems studied in
\cite{Papachristodoulou-CDC06, Lee-Spong-06} are guaranteed to reach
an agreement, for any given set of finite propagation delays,
provided that the network is \textit{strongly connected}. Similar
results, under weaker (sufficient) conditions on the (possibly
time-varying) network topology, were obtained in
\cite{Tsitsiklis-Bertsekas-Athans, Tsitsiklis-thesis}, \cite[Ch.7.3]
{Tsitsiklis-Bertsekas-book} and \cite{Blondel-Tsitsiklis} for the
convergence of discrete-time asynchronous agreement algorithms.
However, since the final agreement value is not a known function of
the local measurements, agreement algorithms proposed in the cited
papers are mostly appropriate for the alignment of mobile agents,
but they cannot be immediately used to distributively compute
prescribed functions of the sensors' measurements, like decision
tests or global parameter estimates. In
\cite{Chellaboina-Haddad-CDC06}, the authors studied the convergence
properties of the agreement algorithms proposed in
\cite{Papachristodoulou-CDC06}$-$\cite{Wang-Xiao-06} and provided a
closed form expression of the achievable consensus in the case of
strongly connected balanced digraphs, under the assumption that the
initial conditions are nonnegative and the state trajectories remain
in the nonnegative orthant of the state space. However, the final
agreement value depends on delays, network topology and initial
conditions of each node, so that the bias is practically
unavoidable. In summary, in the presence of propagation delays,
classical distributed protocols
reaching consensus/agreement  on the state \cite%
{Borkar-Varaiya}$-$\cite{Ren-Beard-Control-Magazine}, \cite%
{Schizas-Ribeiro-Giannakis}$-$\cite{Wang-Xiao-06},
\cite{Chellaboina-Haddad-CDC06} cannot be used to achieve prescribed
functions of the sensors' measurements that are not biased by the
channel parameters.

Ideally, we would like to have a totally decentralized system able
to reach a global consensus on a final value which is a
\emph{known}, \emph{delay-independent}, function of the local
measurements (as in \cite{Olfati-Saber}), for
\emph{any} given set of \emph{inhomogeneous} propagation delays (as in \cite%
{Papachristodoulou-CDC06}) and \emph{asymmetric} channels (as in \cite%
{Lee-Spong-06}). In this paper, we propose a distributed dynamical
system having all the above desired features. More specifically, we
consider a set of linearly coupled first-order dynamical systems and
we fully characterize its convergence properties, for a network with
arbitrary topology (not necessarily strongly
connected, as opposed to \cite{Papachristodoulou-CDC06}$-$\cite{Wang-Xiao-06}%
) and (possibly) asymmetric communication channels. In particular,
we consider an interaction model among the sensors that is directly
related to the physical channel parameters. The network is modeled
as a weighted directed graph, whose weights are directly related to
the (flat-fading) channel coefficients between the nodes and to the
transmit power.  Furthermore, the geometry-dependent propagation
delays between each pair of nodes, as well as possible time offsets
among the nodes, are properly taken into account. The most appealing
feature of the proposed system is that a consensus on a globally
optimal decision statistic is achieved, for \textit{any} (bounded)
set of inhomogeneous delays and for \textit{any} set of (asymmetric)
communication channels, with the only requirement that the network
be
quasi-strongly connected and the channel coefficients be nonnegative.%
\footnote{%
This last requirement, if not immediately satisfied, requires some
form of phase compensation at the receiver.}

In particular, our main contributions are the following: i) We provide the
necessary and sufficient conditions ensuring local or global convergence,
for \textit{any} set of finite propagation delays and network topology; ii)
We prove that the convergence is exponential, with convergence rate
%and provide a lower bound on the convergence rate,
depending, in general, on the channel parameters and propagation delays;
%(not only for strongly connected networks, as
%in \cite{Papachristodoulou-CDC06, Lee-Spong-06, Wang-Xiao-06});
iii) We derive a closed form expression for the final consensus, as
a function of the attenuation coefficients and propagation delays of
each link; iv) We show how to get a final, {\it unbiased} function
of the sensor's measurements, which coincides with the globally
optimal decision statistics that it would have been computed by a
fusion center having error-free access to all the nodes. The paper
is organized as follows. Section \ref{Problem formulation} describes
the proposed first-order linearly coupled dynamical system and shows
how to design the system's parameters and the local processing so
that the state derivative of each node converges, asymptotically, to
the globally optimal decision statistics. Section
\ref{Sec:Sync-with-delays} contains the main results of the paper,
namely the necessary and sufficient conditions ensuring the global
or local convergence of the proposed dynamical system, in the
presence of inhomogeneous propagation delays and asymmetric
channels. Finally, Section \ref{Sec:Numerical-Results} contains
numerical results
validating our theoretical findings and draws some conclusions.%
%\vspace{-0.3cm}

\section{Reaching Globally Optimal Decisions Through Self-Synchronization}

\label{Problem formulation}
%It was recently shown in \cite{Giridhar-Kumar05}
%that, in many applications, an efficient way to design a sensor network
%should follow the so called \textit{data-centric} viewpoint, according to
%which the goal of the network is to compute a function of the measurements,
%where what is important is the measurement per se, and not necessarily which
%node has taken which measurement.
It was recently shown in \cite{Giridhar-Kumar05} that, in many
applications, an efficient sensor network design should incorporate
some sort of in-network processing. In this paper, we show first a
class of functions that can be computed with a totally distributed
approach. Then, we illustrate the distributed mechanism able to
achieve the globally optimal decision tests.

\subsection{Forms of Consensus Achievable With a Decentralized Approach}

\label{SUb-Sec_consensus_examples} If we denote by $y_{i}$, $i=1,\ldots ,N$,
the (scalar) measurement taken from node $i$, in a network composed of $N$
nodes, we have shown in \cite{Barbarossa-iwwan05, Barbarossa-Scutari-Journal}
that it is possible to compute any function of the collected data
expressible in the form
\begin{equation}
f(y_{1},y_{2},\ldots ,y_{N})=h\left( \frac{\dsum_{i=1}^{N}c_{i}g_{i}(y_{i})}{%
\dsum_{i=1}^{N}c_{i}}\right) ,  \label{f}
\end{equation}%
where $\left\{ c_{i}\right\} $ are positive coefficients and $g_{i}(\cdot )$%
, $i=1,\ldots ,N,$ and $h(\cdot )$ are arbitrary (possibly nonlinear) real
functions on $\mathbb{R}$, i.e., $g_{i},h:%
%TCIMACRO{\U{211d} }%
%BeginExpansion
\mathbb{R}
%EndExpansion
\mapsto
%TCIMACRO{\U{211d} }%
%BeginExpansion
\mathbb{R}
%EndExpansion
$, in a totally decentralized way, i.e., without the need of a sink node. In
the vector observation case, the function may be generalized to the vector
form
\begin{equation}
\mathbf{f}(\mathbf{y}_{1}\mathbf{,y}_{2}\mathbf{,\ldots ,y}_{N})=\mathbf{h}%
\left( \left( \sum_{i=1}^{N}\mathbf{C}_{i}\right) ^{-1}\left( \sum_{i=1}^{N}%
\mathbf{C}_{i}\boldsymbol{g}_{i}\mathbf{(y}_{i}\mathbf{)}\right) \right) ,
\label{f_vect}
\end{equation}%
where $\boldsymbol{y}_{i}=\{y_{i,k}\}_{k=1}^{L}$ is the vector containing
the observations $\{y_{i,k}\}_{k=1}^{L}$ taken be sensor $i,$ and $%
\boldsymbol{g}_{i}(\cdot )$ and $\mathbf{h}(\cdot )$ are arbitrary (possibly
nonlinear) real functions on $\mathbb{R}^{L}$, i.e., $\boldsymbol{g}_{i},$ $%
\mathbf{h}:%
%TCIMACRO{\U{211d} }%
%BeginExpansion
\mathbb{R}
%EndExpansion
^{L}\mapsto
%TCIMACRO{\U{211d} }%
%BeginExpansion
\mathbb{R}
%EndExpansion
$, and $\left\{ \mathbf{C}_{i}\right\} $ are arbitrary square positive
definite matrices.

Even though the class of functions expressible in the form (\ref{f}) or (\ref%
{f_vect}) is not the most general one, nevertheless it includes many cases
of practical interest, as shown in the following examples.\newline
\vspace{-0.3cm}

\noindent \textbf{Example 1: ML or BLUE estimate.}\textit{\ } Let us
consider the case where each sensor observes a vector in the form
\begin{equation}
\mathbf{y}_{i}=\mathbf{A}_{i}\mbox{\boldmath
$\xi$}+\mathbf{w}_{i},\qquad i=1,\ldots ,N,  \label{Linmod}
\end{equation}%
where $\mathbf{y}_{i}$ is the $M\times 1$ observation vector, $%
\mbox{\boldmath $\xi$}$ is the $L\times 1$ unknown common parameter vector, $%
\mathbf{A}_{i}$ is the $M\times L$ mixing matrix of sensor $i$, and $\mathbf{%
w}_{i}$ is the observation noise vector, modeled as a circularly symmetric
Gaussian vector with zero mean and covariance matrix $\mathbf{R}_{i}$. We
assume that the noise vectors affecting different sensors are statistically
independent of each other, and that each matrix $\mbox{\boldmath $A$}_{i}$
is full column rank, which implies $M\geq L$. As well known, in this case
the globally optimal %maximum likelihood
ML estimate of $\mbox{\boldmath
$\xi$}$ is \cite{Kay-book}:
\begin{equation}
\mathbf{\hat{\xi}}_{ML}=\mathbf{f}(\mathbf{y}_{1}\mathbf{,y}_{2}\mathbf{%
,\ldots ,y}_{N})=\left( \sum_{i=1}^{N}\mathbf{A}_{i}^{T}\mathbf{R}_{i}^{-1}%
\mathbf{A}_{i}\right) ^{-1}\left( \sum_{i=1}^{N}\mathbf{A}_{i}^{T}\mathbf{R}%
_{i}^{-1}\mathbf{y}_{i}\right) .  \label{ML_estimate}
\end{equation}%
This expression is a special case of (\ref{f_vect}), with $\mathbf{C}_{i}=%
\mathbf{A}_{i}^{T}\mathbf{R}_{i}^{-1}\mathbf{A}_{i}$ and $\boldsymbol{g}_{i}(%
\mathbf{y}_{i})=(\mathbf{A}_{i}^{T}\mathbf{R}_{i}^{-1}\mathbf{A}_{i})^{-1}%
\mathbf{A}_{i}^{T}\mathbf{R}_{i}^{-1}\mathbf{y}_{i}$. If the noise pdf is
unknown, (\ref{ML_estimate}) still represents a meaningful estimator, as it
is the BLUE %best linear unbiased estimator (BLUE)
\cite{Kay-book}.\newline
\vspace{-0.3cm}

\noindent \textbf{Example 2: Detection of a Gaussian process with unknown
variance embedded in Gaussian noise with known variance. } Let us consider
now a detection problem. Let $y_{i}[k]$ denote the signal observed by sensor
$i$, at time $k$. The detection problem can be cast as a binary hypothesis
test, where the two hypotheses are
\begin{equation}
\begin{array}{l}
\mathcal{H}_{0}:\quad y_{i}[k]=w_{i}[k], \\
\mathcal{H}_{1}:\quad y_{i}[k]=s_{i}[k]+w_{i}[k],%
\end{array}%
\quad i=1,\ldots ,N,\ \ \text{\ }k=1,\ldots ,K,  \label{Binary-det}
\end{equation}%
where $s_{i}[k]$ is the useful signal and $w_{i}[k]$ is the additive noise.
Let us consider the case where the random sequences $s_{i}[k]$ and $w_{i}[k]$
are spatially uncorrelated and modeled as zero mean independent Gaussian
random processes. A meaningful model consists in assuming that the noise
variance is known, let us say equal to $\sigma _{w}^{2}$, whereas the useful
signal variance is not. Under these assumptions, the optimal detector
consists in computing the generalized likelihood ratio test (GLRT) \cite%
{Kay-book, Pescosolido-Barbarossa}
\begin{equation}
T(\mathbf{y})=f(\mathbf{y}_{1},\mathbf{y}_{2},\ldots ,\mathbf{y}_{N})=\frac{1%
}{K}\sum_{i=1}^{N}\sum_{k=1}^{K}y_{i}^{2}[k]\left( \frac{1}{\sigma _{w}^{2}}-%
\frac{1}{\widehat{P}_{i}+\sigma _{w}^{2}}\right) -\sum_{i=1}^{N}\log \left(
\frac{\widehat{P}_{i}+\sigma _{w}^{2}}{\sigma _{w}^{2}}\right) ,
\label{GLRT}
\end{equation}%
and comparing it with a threshold that depends on the desired false alarm
rate. In (\ref{GLRT}), the term $\widehat{P}_{i}$ denotes the ML estimate of
the signal power at node $i$, given by%
\begin{equation*}
\widehat{P}_{i}=\left( \dfrac{1}{K}\dsum_{k=1}^{K}y_{i}^{2}[k]-\sigma
_{w}^{2}\right) ^{+},
\end{equation*}%
where $\left( x\right) ^{+}\triangleq \max \left( 0,x\right) .$ Also in this
case, it is easy to check that (\ref{GLRT}) is a special case of (\ref{f}),
with $c_{i}=1$ and $g_{i}(\mbox{\boldmath $y$}_{i})=\frac{1}{K}%
\sum_{k=1}^{K}y_{i}^{2}[k]\left( \frac{1}{\sigma _{w}^{2}}-\frac{1}{\widehat{%
P}_{i}+\sigma _{w}^{2}}\right) -\log \left( \frac{\widehat{P}_{i}+\sigma
_{w}^{2}}{\sigma _{w}^{2}}\right) $ \cite{Pescosolido-Barbarossa}.

These are only two examples, but the expressions (\ref{f}) or
(\ref{f_vect}) can be used to compute or approximate more general
functions of the collected data, like, maximum and minimum
\cite{Corts-Automatica-06}, geometric mean \cite{Olfati-Franco},
power mean \cite{Bauso-Giarre} and so on, through appropriate choice
of the parameters involved.

\subsection{How to Achieve the Consensus in a Decentralized Way\label%
{System-model}}

The next question is how to achieve the aforementioned
optimal decision statistics with a network having no fusion center. In \cite%
{Barbarossa-iwwan05, Barbarossa-Scutari-Journal} we proposed an
approach to solve this problem using a nonlinear interaction model
among the nodes, based on an undirected graph and with no
propagation delays in the exchange of information among the sensors.
In this paper, we consider a linear interaction model, but we
generalize the approach to a network where the propagation delays
are taken into account and the network is described by a weighted
\emph{directed} graph (or \emph{digraph,} for short), which is a
fairly general model to capture the non reciprocity of the
communication links governing the interaction among the nodes. %In
%Appendix \ref{Appendix_overview-Graph Theory}, we recall the basic
%notions and properties of digraphs, used in the paper, and we derive
%some algebraic properties of the matrices describing the graph, as
%they will be used to derive the forms of consensus achievable for a
%given topology.

The proposed sensor network is composed of $N$ nodes, each equipped
with four basic components: i) a \textit{transducer} that senses the
physical parameter of interest (e.g., temperature, concentration of
contaminants, radiation, etc.); ii) a \textit{local processing unit}
that processes the measurement taken by the node; iii) a
\textit{dynamical system}, whose state is initialized with the local
measurements and it evolves interactively with the states of nearby
sensors; iv) a \textit{radio interface} that transmits the state of
the dynamical system and receives the state transmitted by the other
nodes, thus ensuring the interaction among the nodes\footnote{The
state value is transmitted by modulating an appropriate carrier.
Because of space limitations, it is not possible to go into the
details of this aspect in this paper. In parallel works, we have
shown that a pulse position modulation of ultrawideband signals may
be a valid candidate for implementing the interaction mechanism.
Some preliminary remarks on the implementation of the proposed
protocol can be found in \cite{Barbarossa-Scutari-Magazine}.}.

\vspace{0.2cm}\noindent \textbf{Scalar observations}\textit{.} In the scalar
observation case, the dynamical system present in node $i$ evolves according
to the following linear functional differential equation
\begin{equation}
\begin{array}{l}
\dot{{x}}_{i}(t;\mathbf{y})=g_{i}(y_{i})+\dfrac{K}{c_{i}}\dsum\limits_{j\in
\mathcal{N}_{i}}a_{ij}\,\left( x_{j}(t-\tau _{ij};\mathbf{y})-x_{i}(t;%
\mathbf{y})\right) ,\quad \,\,t>0, \\
x_{i}(t;\mathbf{y})=\widetilde{\phi }_{i}(t),\quad t\in \lbrack -\tau ,\ 0],%
\end{array}%
\quad
\begin{array}{l}
i=1,\ldots ,N,%
\end{array}
\label{linear delayed system}
\end{equation}%
where $x_{i}(t;\mathbf{y})$ is the state function associated to the $i$-th
sensor that depends on the set of measurements $\mathbf{y}=\{y_{i}\}%
_{i=1}^{N}$; $g_{i}(y_{i})$ is a function of the local observation
$y_{i},$ whose form depends on the specific decision test; $K$ is a
positive coefficient measuring the global coupling strength; $c_{i}$
is a positive coefficient that may be adjusted to achieve the
desired consensus; $\tau _{ij}=T_{ij}+d_{ij}/c$ is a delay
incorporating the propagation delay due to traveling the internode
distance $d_{ij}$, at the speed of light $c$, plus a possible time
offset $T_{ij}$ between nodes $i$ and $j$. The sensors are assumed
to be fixed so that all the delays are constant. We also assume,
realistically, that the maximum delay is bounded, with maximum value
$\tau =\max_{ij}\tau _{ij}.$ The coefficients $a_{ij}$ are assumed
to be nonnegative and, in general, dependent on transmit powers and
channel parameters. For example,  $a_{ij}$ may represent the
amplitude of the signal received from node $i$ and transmitted
from node $j$. In such a case, we have $a_{ij}=\sqrt{P_{j}|h_{ij}|^{2}/d_{ij}^{\eta }}$%
, where $P_{j}$ is the power of the signal transmitted from node $j$; $h_{ij}
$ is a fading coefficient describing the channel between nodes $j$ and $i$; $%
\eta $ is the path loss exponent. The nonnegativity of $a_{ij}$ requires
some form of channel compensation at the receiver side, like in the maximal
ratio receiver, for example, if the channel coefficients $\{h_{ij}\}_{ij}$
are complex variables. Furthermore, we assume, realistically, that node $i$
\textquotedblleft hears\textquotedblright\ node $j$ only if the power
received from $i$ exceeds a given threshold. In such a case, $a_{ij}\neq 0$,
otherwise $a_{ij}=0$. The set of nodes that sensor $i$ hears is denoted by $%
\mathcal{N}_{i}=\{j=1,\ldots ,N:a_{ij}\neq 0\}.$ Observe that, in general, $%
a_{ij}\neq a_{ji}$, i.e., the channels are asymmetric. It is worth
noticing that the state function of, let us say, node $i$ depends,
directly, only on the measurement $y_{i}$ taken by the node itself
and only indirectly on the measurements gathered by the other nodes.
In other words, even though the state $x_{i}(t;\mathbf{y})$ gets to
depend, eventually, on all the measurements, through the interaction
with the other nodes, each node needs to know only its own
measurement.

Because of the delays, the state evolution (\ref{linear delayed system})
for, let us say, $t>0$, is uniquely defined provided that the initial state
variables $x_{i}(t;\mathbf{y})$ are specified in the interval from $-\tau $ to $0$. The
initial conditions of (\ref{linear delayed system}) are assumed to be taken
in the set of continuously differentiable and bounded functions $\widetilde{%
\phi }_{i}(t)$ mapping the interval $[-\tau ,\ 0]$ to $%
%TCIMACRO{\U{211d} }%
%BeginExpansion
\mathbb{R}
%EndExpansion
$ (see Appendix \ref{Appendix_overview} for more details).

\vspace{0.2cm}\noindent \textbf{Vector observations}. If each sensor
measures a set of, let us say, $L$ physical parameters , the coupling
mechanism (\ref{linear delayed system}) generalizes into the following
expression\footnote{%
We assume that the coupling coefficients $a_{ij}$ are the same for
all estimated parameters. This assumption is justified by the fact
that all the parameters are sent through the same physical channel.}%
\begin{equation}
\begin{array}{l}
\dot{\mathbf{x}}_{i}(t;\mathbf{y})=\mathbf{g}_{i}(\mathbf{y}_{i})+K\ \mathbf{%
Q}_{i}^{-1}\dsum\limits_{j\in \mathcal{N}_{i}}a_{ij}\,\left( \mathbf{x}%
_{j}(t-\tau _{ij};\mathbf{y})-\mathbf{x}_{i}(t;\mathbf{y})\right) ,\quad
\,t>0, \\
\mathbf{x}_{i}(t;\mathbf{y})=\widetilde{\mathbf{\phi }}_{i}(t),\quad t\in
\lbrack -\tau ,\ 0],%
\end{array}%
\quad
\begin{array}{l}
i=1,\ldots ,N,%
\end{array}
\label{vec_dyn_systems_}
\end{equation}%
where $\ \mathbf{x}_{i}(t)$ is the $L$-size vector state of the $i$-th node;
$\boldsymbol{g}_{i}(\mathbf{y}_{i})$ is a vector function of the local
observation $\boldsymbol{y}_{i}=\{y_{i,k}\}_{k=1}^{L}$, i.e. $\boldsymbol{g}%
_{i}:\mathbb{R}^{L}\mapsto \mathbb{R}^{L}$; and $\mathbf{Q}_{i}$ is an $%
L\times L$ non-singular matrix that is a free parameter to be chosen
according to the specific purpose of the sensor network.

As in (\ref{linear delayed system}), the initial conditions of (\ref%
{vec_dyn_systems_}) are assumed to be arbitrarily taken in the set of
continuously differentiable and bounded (vectorial) functions $\widetilde{%
\mathbf{\phi }}_{i}(t)$ mapping the interval $[-\tau ,\ 0]$ to $%
%TCIMACRO{\U{211d} }%
%BeginExpansion
\mathbb{R}
%EndExpansion
^{L},$ w.l.o.g..

\subsection{Self-Synchronization}

Differently from most papers dealing with average consensus problems \cite%
{Borkar-Varaiya}$-$\cite{Ren-Beard-Control-Magazine} and \cite%
{Schizas-Ribeiro-Giannakis}$-$\cite{Wang-Xiao-06},
%\cite{Olfati-Saber-2003, ren-et-al, Xiao-Boyd, Jadbabaie03,
%Olfati-Saber-Murray-ProcIEEE07, Chai-Wu-2005, Borkar-Varaiya,
%HongScaglione, Lucarelli04, Olfati-Saber, Strogatz,
%Papachristodoulou-CDC06}
where the global consensus was intended to be the situation where all
dynamical systems reach the same state value, we adopt here the alternative
definition already introduced in our previous work \cite%
{Barbarossa-Scutari-Journal}. We define the consensus (through network
synchronization) with respect to the state \textit{derivative}, rather than
to the state, as follows.\vspace{-0.3cm}

\begin{definition}
\label{Definition_sync-state}Given the dynamical system in (\ref{linear
delayed system}) (or (\ref{vec_dyn_systems_})), a solution $\{\mathbf{x}%
_{i}^{\star }(t;\mathbf{y})\}$ of (\ref{linear delayed system}) (or (\ref%
{vec_dyn_systems_})) is said to be a \emph{synchronized state} of the
system, if\vspace{-0.2cm}
\begin{equation}
\dot{\mathbf{x}}_{i}^{\star }(t;\mathbf{y})=\mathbf{\omega }^{\star }(%
\mathbf{y}),\quad \forall i=1,2,\ldots ,N.\vspace{-0.2cm}
\label{Def_synch_state}
\end{equation}%
The system (\ref{linear delayed system}) (or (\ref{vec_dyn_systems_})) is
said to \emph{globally} synchronize if there exists a synchronized state as
in (\ref{Def_synch_state}), and \emph{all} the state derivatives
asymptotically converge to this common value, for \emph{any} given set of
initial conditions $\{\widetilde{\mathbf{\phi }}_{i}\},$ i.e.,
\begin{equation}
\lim_{t\mapsto \infty }\Vert \dot{\mathbf{x}}_{i}(t;\mathbf{y})-\mathbf{%
\omega }^{\star }(\mathbf{y})\Vert =0,\,\,\,\,\,\,\,\forall
\widetilde{\mathbf{\phi }}_{i},\,\,\, i=1,2,\ldots ,N,
\label{def_eq}
\end{equation}%
where $\Vert \cdot \Vert $ denotes some vector norm and $\{\mathbf{x}_{i}(t;%
\mathbf{y})\}$ is a solution of (\ref{linear delayed system}) (or (\ref%
{vec_dyn_systems_})). The synchronized state is said to be \emph{globally
asymptotically stable} if the system globally synchronizes, in the sense
specified in (\ref{def_eq}). The system (\ref{linear delayed system}) (or (%
\ref{vec_dyn_systems_})) is said to \emph{locally} synchronize if there
exist disjoint subsets of the nodes, called clusters, where the nodes in
each cluster have state derivatives converging, asymptotically, to the same
value, for any given set of initial conditions $\{\widetilde{\mathbf{\phi }}%
_{i}\}$. \vspace{-0.3cm}
\end{definition}

Observe that, according to Definition \ref{Definition_sync-state},
if there exists a globally asymptotically stable synchronized state,
then it must necessarily be \emph{unique }(in the derivative). In
the case of local synchronization instead, the system may have
multiple synchronized clusters, each of them with a different
synchronized state. As it will be shown in the next section, one of
the reasons to introduce a novel definition of consensus,
different from the classical one \cite{Borkar-Varaiya}$-$\cite%
{Ren-Beard-Control-Magazine}, \cite{Schizas-Ribeiro-Giannakis}$-$\cite%
{Wang-Xiao-06}, is that the convergence on the state derivative,
rather than on the state, is not affected by the presence of
propagation delays and there is a way to make the final consensus
value to coincide with the globally optimal decision statistic. In
the ensuing sections, we will provide necessary and
sufficient conditions for the system in (\ref{linear delayed system}) (or (%
\ref{vec_dyn_systems_})) to locally/globally synchronize according to
Definition \ref{Definition_sync-state}, along with the closed form
expression of the synchronized state.

\section{Necessary and Sufficient Conditions for Self-Synchronization \label%
{Sec:Sync-with-delays}}

The problem we address now is to check if, in the presence of propagation
delays and asymmetric communication links, the systems (\ref{linear delayed
system})\ and (\ref{vec_dyn_systems_}) can still be used to achieve globally
optimal decision statistics in the form (\ref{f}) and (\ref{f_vect}), in a
totally distributed way. To derive our main results, we rely on some basic
notions of digraph theory. To make the paper self-contained, in Appendix \ref%
{Sec_basic_defs_digraphs} we recall the basic definitions of weak,
quasi-strong and strong connectivity (WC, QSC, and SC, for short) of
a digraph. In Appendix \ref{Sec:Spectral_properties_of_digraph}, we
recall the algebraic properties of the Laplacian matrix associated
to a digraph and we derive the properties of its left eigenvectors,
as they will play a fundamental role in computing the achievable
forms of consensus. In Appendix \ref{Appendix_overview}, we provide
sufficient conditions for the marginal stability of linear delayed
differential equations, as they will be instrumental to prove the
main results of this paper.

%The presence of propagation delays and the nonreciprocity of the
%communication links, make system (\ref{linear delayed system}) (or (\ref%
%{vec_dyn_systems_})) quite different from the one given in our previous
%works \cite{Barbarossa-iwwan05, Barbarossa-Scutari-Journal}.
The next theorem provides necessary and sufficient conditions for the
proposed decentralized approach to achieve the desired consensus in the
presence of propagation delays and asymmetric communication links.%
%\vspace{-0.3cm}
\begin{theorem}
\label{Theorem_delay-linear_stability}Let ${\mathscr{G}=}\{{%
%TCIMACRO{\TeXButton{V}{\mathscr{V}}}%
%BeginExpansion
\mathscr{V}%
%EndExpansion
,%
%TCIMACRO{\TeXButton{E}{\mathscr{E}}}%
%BeginExpansion
\mathscr{E}}\}$ be the digraph associated to the network in (\ref{linear delayed system}%
), with Laplacian matrix $\mathbf{L}.$ Let $\mathbf{\gamma }=[\gamma
_{1},\ldots ,\gamma _{N}]^{T}$ be a left eigenvector of $%
\mathbf{L}$ corresponding to the zero eigenvalue$,$ i.e., $\mathbf{\gamma }%
^{T}\mathbf{L}=\mathbf{0}_{N}^{T}$. Given the system in (\ref{linear delayed system}),
assume that the following conditions are satisfied:\vspace{-0.4cm}

\begin{description}
\item[a1] The coupling gain $K$ and the coefficients $\left\{ c_{i}\right\}$
are positive; the coefficients $\{a_{ij}\}$ are
non-negative;\vspace{-0.2cm}

\item[a2] The propagation delays $\{\tau _{ij}\}$ are time-invariant and finite, i.e., $\tau
_{ij}\leq \tau =\max_{i\neq j}\tau _{ij}<+\infty ,$ $\forall i\neq j;$%
\vspace{-0.2cm}

\item[a3] The initial conditions are taken in the set of continuously differentiable and bounded
functions mapping the interval $[-\tau ,\ 0]$ to $%
%TCIMACRO{\U{211d} }%
%BeginExpansion
\mathbb{R}
%EndExpansion
^{N}.$\vspace{-0.2cm}
\end{description}

Then, system (\ref{linear delayed system}) globally synchronizes for
\emph{any} given set of propagation delays, if and only if the
digraph ${\mathscr{G}}$ is
Quasi-Strongly Connected (QSC). The synchronized state is given by%
\begin{equation}
\omega ^{\star
}(\mathbf{y})=\frac{\dsum_{i=1}^{N}\gamma
_{i}c_{i}g_i(y_i)}{\dsum_{i=1}^{N}\gamma
_{i}c_{i}+K\dsum_{i=1}^{N}\dsum_{j\in \mathcal{N}_{i}}\gamma
_{i}a_{ij}\tau _{ij}},%\quad \quad \forall q=1,\ldots ,N,
\label{bias_Theo}
\end{equation}%
where $\gamma _{i}>0$ if and only if node $i$ can reach all the
other nodes of the digraph through a strong path, or $\gamma _{i}=0,$
otherwise.

The convergence is exponential, with rate arbitrarily close to $r\triangleq -\min_i\{\left \vert\limfunc{Re}\{s_{i}\}\right \vert: p(s_{i})=0\,\, \text{and}\,\, s_i\neq 0\}$, where $p(s)$ is the characteristic function associated to system (\ref{linear delayed system}) (cf. Appendix \ref{Appendix_overview}).
\end{theorem}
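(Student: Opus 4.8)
The plan is to reduce the analysis of~(\ref{linear delayed system}) to that of its \emph{derivative} dynamics, locate the spectrum of the resulting delay differential equation, and then use a conservation law to identify the synchronized value. Differentiating~(\ref{linear delayed system}) in $t$ (admissible for $t>0$ by the smoothing property of retarded equations) kills the constant forcing $g_{i}(y_{i})$, so $v_{i}\triangleq\dot{x}_{i}$ obeys the \emph{homogeneous} delayed consensus system $c_{i}\dot{v}_{i}(t)=K\sum_{j\in\mathcal{N}_{i}}a_{ij}\big(v_{j}(t-\tau_{ij})-v_{i}(t)\big)$, i.e. $\mathbf{C}\dot{\mathbf{v}}(t)=-K\big(\mathbf{D}\,\mathbf{v}(t)-\mathbf{A}_{\tau}(\mathbf{v}_{t})\big)$ with $\mathbf{C}=\limfunc{diag}(c_{i})$, $\mathbf{D}=\limfunc{diag}\big(\sum_{j}a_{ij}\big)$ and $(\mathbf{A}_{\tau}(\mathbf{v}_{t}))_{i}=\sum_{j}a_{ij}v_{j}(t-\tau_{ij})$. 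By Definition~\ref{Definition_sync-state}, global synchronization is exactly the statement that $\mathbf{v}(t)\to\omega^{\star}(\mathbf{y})\,\mathbf{1}_{N}$ for every admissible initial condition. The associated characteristic function is $p(s)=\det\mathbf{M}(s)$ with $\mathbf{M}(s)\triangleq s\mathbf{C}+K\mathbf{D}-K\mathbf{A}(s)$, $(\mathbf{A}(s))_{ij}=a_{ij}e^{-s\tau_{ij}}$; since $\mathbf{M}(0)=K\mathbf{L}$ and $\mathbf{L}\mathbf{1}_{N}=\mathbf{0}_{N}$, the value $s=0$ is always a root.

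Two facts hold for an \emph{arbitrary} digraph. \emph{(i) Spectrum.} Every root $s_{0}$ of $p$ with $\limfunc{Re}(s_{0})\geq 0$ equals $0$: if $\mathbf{M}(s_{0})\mathbf{u}=\mathbf{0}$, $\mathbf{u}\neq\mathbf{0}$, and $k$ maximizes $|u_{i}|$, then the $k$-th row $(c_{k}s_{0}+Kd_{k})u_{k}=K\sum_{j}a_{kj}e^{-s_{0}\tau_{kj}}u_{j}$ (with $d_{k}=\sum_{j}a_{kj}$) gives, using $|e^{-s_{0}\tau_{kj}}|\le1$ and $|u_{j}|\le|u_{k}|$, the bound $|c_{k}s_{0}+Kd_{k}|\le Kd_{k}$, whereas $\limfunc{Re}(s_{0})\ge0$ forces $|c_{k}s_{0}+Kd_{k}|\ge c_{k}\limfunc{Re}(s_{0})+Kd_{k}\ge Kd_{k}$; equality throughout yields $\limfunc{Re}(s_{0})=\limfunc{Im}(s_{0})=0$ (the subcase $d_{k}=0$ gives $s_{0}=0$ directly). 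Since a retarded equation has only finitely many roots in each half-plane $\{\limfunc{Re}(s)\ge-\rho\}$ (Appendix~\ref{Appendix_overview}), all nonzero roots lie in the open left half-plane and $r<0$ of the statement is well defined (the real part of the rightmost nonzero root). \emph{(ii) Conservation law.} For any left null vector $\mathbf{\gamma}$ of $\mathbf{L}$, set $\Psi(t)\triangleq\mathbf{\gamma}^{T}\mathbf{C}\,\mathbf{x}(t)+K\sum_{i}\sum_{j\in\mathcal{N}_{i}}\gamma_{i}a_{ij}\int_{t-\tau_{ij}}^{t}x_{j}(\sigma)\,d\sigma$; differentiating, the state-dependent terms cancel because $\mathbf{\gamma}^{T}\mathbf{L}=\mathbf{0}_{N}^{T}$, leaving $\dot{\Psi}(t)\equiv\mathbf{\gamma}^{T}\mathbf{C}\,\mathbf{g}=\sum_{i}\gamma_{i}c_{i}g_{i}(y_{i})$, hence $\Psi(t)=\Psi(0)+\big(\mathbf{\gamma}^{T}\mathbf{C}\,\mathbf{g}\big)t$, where $\mathbf{g}=(g_{i}(y_{i}))_{i}$.

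Now split on connectivity, using that $\limfunc{rank}\mathbf{L}=N-1$ iff $\mathscr{G}$ is QSC, in which case the left null vector $\mathbf{\gamma}\ge\mathbf{0}$ is unique up to scaling, $\gamma_{i}>0$ exactly at the nodes reaching every node through a strong path, and $\limfunc{adj}\mathbf{L}=\kappa\,\mathbf{1}_{N}\mathbf{\gamma}^{T}$ with $\kappa\neq0$ (Appendix~\ref{Sec:Spectral_properties_of_digraph}). If $\mathscr{G}$ is QSC, then by Jacobi's formula and $\mathbf{M}'(0)=\mathbf{C}+K\mathbf{B}$, $B_{ij}=a_{ij}\tau_{ij}$, one gets $p'(0)=\limfunc{tr}\big(\limfunc{adj}(\mathbf{M}(0))\,\mathbf{M}'(0)\big)=K^{N-1}\kappa\,\mathbf{\gamma}^{T}(\mathbf{C}+K\mathbf{B})\mathbf{1}_{N}=K^{N-1}\kappa\big(\textstyle\sum_{i}\gamma_{i}c_{i}+K\sum_{i}\sum_{j\in\mathcal{N}_{i}}\gamma_{i}a_{ij}\tau_{ij}\big)\neq0$, so $s=0$ is a \emph{simple} zero of $p$ — hence a semisimple eigenvalue of the delay system with one-dimensional eigenspace $\limfunc{span}(\mathbf{1}_{N})$ and no secular term. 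Combined with (i) and the spectral decomposition of Appendix~\ref{Appendix_overview}, $\mathbf{v}(t)\to c\,\mathbf{1}_{N}$ with transients $O(e^{(r+\varepsilon)t})$ for every $\varepsilon>0$, i.e. exponential convergence at rate arbitrarily close to $r$; and dividing the two expressions for $\Psi(t)$ by $t$ and letting $t\to\infty$ gives $c\,\mathbf{\gamma}^{T}(\mathbf{C}+K\mathbf{B})\mathbf{1}_{N}=\mathbf{\gamma}^{T}\mathbf{C}\,\mathbf{g}$, i.e. $c=\omega^{\star}(\mathbf{y})$ as in~(\ref{bias_Theo}), the denominator being positive since $\mathbf{\gamma}\ge\mathbf{0}$, $\mathbf{\gamma}\neq\mathbf{0}$, $c_{i}>0$. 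Conversely, if $\mathscr{G}$ is not QSC, $\mathbf{L}$ admits two linearly independent left null vectors $\mathbf{\gamma}^{(1)},\mathbf{\gamma}^{(2)}$, and by (ii) a common limit $\omega^{\star}(\mathbf{y})$ of the $\dot{x}_{i}$ would have to satisfy $\omega^{\star}(\mathbf{y})\,\mathbf{\gamma}^{(\ell)T}(\mathbf{C}+K\mathbf{B})\mathbf{1}_{N}=\mathbf{\gamma}^{(\ell)T}\mathbf{C}\,\mathbf{g}$ for $\ell=1,2$; since $(\gamma^{(1)}_{i}c_{i})_{i}$ and $(\gamma^{(2)}_{i}c_{i})_{i}$ are linearly independent, for a suitable choice of the measurements these two scalar relations are incompatible, so the system does not globally synchronize.

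The main obstacle is the third step: establishing that quasi-strong connectivity is exactly what turns $s=0$ into a \emph{simple} characteristic root — the computation $p'(0)\neq0$, which rests on the spectral facts $\limfunc{rank}\mathbf{L}=N-1\Leftrightarrow$ QSC and $\limfunc{adj}\mathbf{L}=\kappa\,\mathbf{1}_{N}\mathbf{\gamma}^{T}$ with $\kappa\neq0$ — and then invoking the infinite-dimensional spectral/semigroup theory of retarded functional differential equations (Appendix~\ref{Appendix_overview}) to upgrade ``$s=0$ simple, all other roots in $\limfunc{Re}(s)<0$'' into genuine exponential convergence of $\mathbf{v}(t)$ to $\omega^{\star}(\mathbf{y})\,\mathbf{1}_{N}$, with the correct treatment of $\varepsilon$ when the rightmost nonzero root is not simple.
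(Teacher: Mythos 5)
Your proposal is correct in outline and reaches all the stated conclusions, but it takes a genuinely different route from the paper at several key steps, and in one place it is actually more careful. The paper never differentiates the equation: it first exhibits an affine solution $x_i^{\star}(t)=\omega^{\star}t+x_{i,0}^{\star}$, with $\omega^{\star}$ emerging as the unique $\alpha$ that makes the linear system $\mathbf{L}\mathbf{x}_{0}^{\star}=\tfrac{1}{K}\mathbf{D}_{\mathbf{c}}\Delta\,\boldsymbol{\omega}(\alpha)$ consistent (i.e.\ $\boldsymbol{\gamma}^{T}\mathbf{D}_{\mathbf{c}}\Delta\,\boldsymbol{\omega}(\alpha)=0$), then subtracts it to obtain the homogeneous system (\ref{system-translated_2}) in the \emph{state}, proves uniform boundedness of its trajectories by a maximum argument, and excludes characteristic roots in $\overline{\mathbb{C}}_{+}\setminus\{0\}$ by showing $\rho\left((s\mathbf{I}+\boldsymbol{\Delta})^{-1}\mathbf{H}(s)\right)<1$ there, via subharmonicity (Lemma~\ref{Lemma-Boyd}) and the maximum modulus principle. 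Your Ger\v{s}gorin-type row argument on a null vector of $\mathbf{M}(s_{0})$ reaches the same spectral conclusion more elementarily. Your conservation law $\dot{\Psi}\equiv\boldsymbol{\gamma}^{T}\mathbf{C}\,\mathbf{g}$ is a different, and quite clean, way both to identify $\omega^{\star}$ and to run the necessity argument, which the paper instead obtains from the block form (\ref{L_r_forest}) of $\mathbf{L}$ (your necessity step needs "for suitable measurements", but the paper's "in general different cluster values" is no stronger). Most notably, your computation $p'(0)=K^{N-1}\kappa\,\boldsymbol{\gamma}^{T}(\mathbf{C}+K\mathbf{B})\mathbf{1}_{N}$ closes a point the paper treats loosely: the paper infers simplicity of the root $s=0$ of the transcendental function $p(s)$ from Corollary~\ref{Lemma_spanning-tree}, a statement about the matrix $\mathbf{L}$ alone, whereas $p$ depends on $s$ also through the factors $e^{-s\tau_{ij}}$; your Jacobi-formula argument shows that simplicity in fact requires the denominator of (\ref{bias_Theo}) to be nonzero, i.e.\ it uses the sign conditions in \textbf{a1)}--\textbf{a2)} and not just QSC.

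The one genuine gap is where you invoke the expansion of Appendix~\ref{Appendix_overview} for the differentiated system. Lemma~\ref{theorem_stability_of_dynamical_system} requires (hypothesis \textbf{b1}) continuously differentiable initial data \emph{and bounded solutions}, and neither is verified for $\mathbf{v}=\dot{\mathbf{x}}$. The initial segment of $\mathbf{v}$ on $[-\tau,0]$ is $\dot{\widetilde{\boldsymbol{\phi}}}$, which assumption \textbf{a3)} only guarantees to be continuous; moreover $\mathbf{v}$ generically jumps at $t=0$ (the right-hand side of (\ref{linear delayed system}) at $t=0^{+}$ need not match $\dot{\widetilde{\phi}}_{i}(0^{-})$) and has kinks at $t=\tau_{ij}$, so "differentiate for $t>0$" needs more care than the appeal to smoothing suggests. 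More importantly, boundedness of $\mathbf{v}$ is not free: the paper devotes its Step~1 to the analogous bound, showing that $\|\boldsymbol{\Psi}(t)\|_{\infty}$ cannot exceed its initial supremum because at a first exit time the sign of $\dot{\Psi}_{j}$ would be contradicted. That argument applies verbatim to your $\mathbf{v}$-system (it has exactly the form (\ref{system-translated})), so the gap is fixable, but as written the hypotheses under which you invoke the spectral expansion are not established; alternatively, adopting the paper's translation $\Psi_{i}(t)=x_{i}(t)-(\omega^{\star}t+\overline{x}_{i,0})$ avoids differentiating the equation altogether and keeps $C^{1}$ initial data.
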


\begin{proof}
See Appendix \ref{proof_Theorem_delay-linear_stability}.
\end{proof}
 %\begin{theorem}
%\label{Theorem_delay-linear_stability}Let ${\mathscr{G}=}\{{%
%%TCIMACRO{\TeXButton{V}{\mathscr{V}}}%
%%BeginExpansion
%\mathscr{V}%
%%EndExpansion
%,%
%%TCIMACRO{\TeXButton{E}{\mathscr{E}}}%
%%BeginExpansion
%\mathscr{E}%
%%EndExpansion
%^{N}.}$\vspace{-0.2cm}
%\end{theorem}
%
%Then, system (\ref{linear delayed system}) globally synchronizes for \emph{%
%any} given set of propagation delays, if and only if the digraph ${%
%\mathscr{G}}$is quasi-strongly connected (QSC). The synchronized state is
%given by%
%\begin{equation}
%\dot{x}_{q}^{\star }\triangleq \omega ^{\star }=\frac{\dsum_{i=1}^{N}\gamma
%_{i}c_{i}g_i(y_i)}{\dsum_{i=1}^{N}\gamma
%_{i}c_{i}+K\dsum_{i=1}^{N}\dsum_{j\in \mathcal{N}_{i}}\gamma _{i}a_{ij}\tau
%_{ij}},\quad \quad \forall q=1,\ldots ,N,  \label{bias_Theo}
%\end{equation}%
%where $\gamma _{i}>0$ if and only if node $i$ can reach all the other nodes
%of the graph through a strong path, or $\gamma _{i}=0,$ otherwise. \qquad
%\qquad \qquad \vspace{-0.3cm}
%
%\begin{proof}
%See Appendix \ref{proof_Theorem_delay-linear_stability}.
%\end{proof}

Theorem \ref{Theorem_delay-linear_stability} has a very broad applicability,
as it does not make any particular reference to the network topology. If,
conversely, the topology has a specific structure, then we may have the
following forms of consensus.\footnote{%
We focus, w.l.o.g., only on WC digraphs. In the case of non WC digraphs,
Corollary \ref{Corollary} applies to each disjoint component of the digraph.}%
\vspace{-0.2cm}

\begin{corollary}
\label{Corollary}Given system (\ref{linear delayed system}), assume that
conditions \textbf{a1)-a3)} of Theorem \ref{Theorem_delay-linear_stability}
are satisfied. Then,\vspace{-0.3cm}

\begin{enumerate}
\item The system globally synchronizes and the synchronized state is given by%
\begin{equation}
\omega ^{\star }(\mathbf{y})=g_{r}(y_{r}),\quad \quad \,\,r\in \{1,2,\ldots
,N\},  \label{spanning-tree_sync_state}
\end{equation}%
if and only if the digraph ${\mathscr{G}}$ contains only one spanning
directed tree, with root node given by node $r$.\vspace{-0.2cm}

\item The system globally synchronizes and the synchronized state is given
by (\ref{bias_Theo}) with all $\gamma _{i}$'s positive if and only if the
digraph ${\mathscr{G}}$ is Strongly Connected (SC). The synchronized state
becomes
\begin{equation}
\omega ^{\star }(\mathbf{y})=\frac{\dsum_{i=1}^{N}c_{i}g_{i}(y_{i})}{%
\dsum_{i=1}^{N}c_{i}+K\dsum_{i=1}^{N}\dsum_{j\in \mathcal{N}_{i}}a_{ij}\tau
_{ij}},  \label{bias_Theo_balanced_G}
\end{equation}%
if and only if, in addition, the digraph ${\mathscr{G}}$ is balanced.

\item The system \emph{locally} synchronizes in $K$ disjoint clusters ${%
%TCIMACRO{\TeXButton{C}{\mathscr{C}}}%
%BeginExpansion
\mathscr{C}%
%EndExpansion
}_{1},\ldots ,{%
%TCIMACRO{\TeXButton{C}{\mathscr{C}}}%
%BeginExpansion
\mathscr{C}%
%EndExpansion
}_{K}\subseteq \{1,\ldots ,N\}$,\footnote{%
In general, the clusters $\mathscr{C}_{1},\ldots ,{\mathscr{C}}_{K}$ are not
a partition of the set of nodes $\{1,\cdots ,N\}$.} with synchronized state
derivatives for each cluster
\begin{equation}
\dot{x}_{q}^{\star }(t;\mathbf{y}_{k})=\frac{\dsum_{i\in {%
%TCIMACRO{\TeXButton{C}{\mathscr{C}}}%
%BeginExpansion
\mathscr{C}%
%EndExpansion
}_{k}}\gamma _{i}c_{i}g_{i}(y_{i})}{\dsum_{i\in {%
%TCIMACRO{\TeXButton{C}{\mathscr{C}}}%
%BeginExpansion
\mathscr{C}%
%EndExpansion
}_{k}}\gamma _{i}c_{i}+K\dsum_{i\in {%
%TCIMACRO{\TeXButton{C}{\mathscr{C}}}%
%BeginExpansion
\mathscr{C}%
%EndExpansion
}_{k}}\dsum_{j\in \mathcal{N}_{i}}\gamma _{i}a_{ij}\tau _{ij}},\quad \quad
\forall q\in {%
%TCIMACRO{\TeXButton{C}{\mathscr{C}}}%
%BeginExpansion
\mathscr{C}%
%EndExpansion
}_{k},\quad k=1,\ldots ,K,  \label{Eq-Corollary-cluster}
\end{equation}%
with $\mathbf{y}_{k}=\{y_{i}\}_{i\in {%
%TCIMACRO{\TeXButton{C}{\mathscr{C}}}%
%BeginExpansion
\mathscr{C}%
%EndExpansion
}_{k}},$ if and only if the digraph ${\mathscr{G}}$ is weakly connected (WC)
and contains a spanning directed forest with $K$ Root Strongly Connected
Components (RSCC).\footnote{%
Please, see Appendix \ref{Appendix_overview-Graph Theory}.1\ for the formal
definition of (root) strongly connected component. }
\end{enumerate}
\end{corollary}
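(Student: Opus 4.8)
The plan is to derive all three items directly from Theorem \ref{Theorem_delay-linear_stability}, reading off each special form of the consensus from the support and the multiplicity of the left null vector $\mathbf{\gamma}$ of the Laplacian $\mathbf{L}$. The single structural fact I will use repeatedly is the one already recorded in the theorem: whenever the digraph is QSC the zero eigenvalue of $\mathbf{L}$ is simple, its essentially unique left null vector $\mathbf{\gamma}$ satisfies $\gamma_i>0$ precisely when node $i$ reaches every other node through a strong path and $\gamma_i=0$ otherwise (cf. Appendix \ref{Sec:Spectral_properties_of_digraph}). Since QSC already guarantees global synchronization and the closed form (\ref{bias_Theo}) by Theorem \ref{Theorem_delay-linear_stability}, the remaining work is essentially combinatorial: translate each topological hypothesis into a statement about $\mathbf{\gamma}$ and then simplify (\ref{bias_Theo}).

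For item 1, a digraph is QSC exactly when it contains a spanning directed tree (Appendix \ref{Sec_basic_defs_digraphs}), and the set of admissible roots---nodes that reach every other node---is precisely the root strongly connected component (RSCC). Requiring a single, uniquely determined root $r$ is therefore equivalent to the RSCC being the singleton $\{r\}$. Under that hypothesis the reachability characterization forces $\gamma_i=0$ for all $i\neq r$ and $\gamma_r>0$, so (\ref{bias_Theo}) collapses to $c_r g_r(y_r)/\big(c_r+K\sum_{j\in\mathcal{N}_r}a_{rj}\tau_{rj}\big)$. The observation that removes the residual delay bias is that a singleton RSCC can carry no incoming edge: were $a_{rj}\neq0$ for some $j\neq r$, then $j$ would reach $r$---hence every node---while $r$ already reaches $j$, so $j$ and $r$ would lie in a common SCC, contradicting $\text{RSCC}=\{r\}$. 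Thus $\mathcal{N}_r=\emptyset$, the denominator reduces to $c_r$, and (\ref{spanning-tree_sync_state}) follows; the converse is obtained by reversing the implications.

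For item 2, strong connectivity means every node reaches every other, so by the characterization $\gamma_i>0$ for all $i$, and conversely all $\gamma_i>0$ forces SC; this is the first equivalence (using SC\,$\Rightarrow$\,QSC to invoke the theorem). For the balanced refinement I will use that on an SC digraph the zero eigenvalue is simple and its left null vector is unique up to scaling (Appendix \ref{Sec:Spectral_properties_of_digraph}). Since $\mathbf{L}\mathbf{1}=\mathbf{0}$ always holds, the all-ones vector is a \emph{left} null vector if and only if $\mathbf{1}^{T}\mathbf{L}=\mathbf{0}^{T}$, i.e. the column sums of $\mathbf{L}$ vanish, which is exactly the statement that each node's total incoming weight equals its total outgoing weight---the definition of a balanced digraph. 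By uniqueness this happens if and only if $\mathbf{\gamma}\propto\mathbf{1}$, and substituting equal $\gamma_i$ into (\ref{bias_Theo}) cancels the common factor and yields (\ref{bias_Theo_balanced_G}).

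Item 3 is where the main effort lies, since the digraph is no longer QSC. The plan is to pass to the condensation of $\mathscr{G}$ (the acyclic digraph obtained by contracting each SCC) and to identify the $K$ source vertices of the condensation with the $K$ RSCCs $\mathscr{C}_1,\dots,\mathscr{C}_K$ of the spanning directed forest. I will show that the zero eigenvalue of $\mathbf{L}$ then has geometric multiplicity exactly $K$, with a left null basis $\{\mathbf{\gamma}^{(k)}\}$ in which $\mathbf{\gamma}^{(k)}$ is supported on $\mathscr{C}_k$ (positive there, zero elsewhere); this multi-component analogue of the single-$\mathbf{\gamma}$ statement is the technical crux. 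Restricting the coupled system to an RSCC together with the nodes it alone can reach produces a QSC sub-digraph with root SCC $\mathscr{C}_k$, and applying Theorem \ref{Theorem_delay-linear_stability} to this sub-system shows that all its state derivatives converge to the value obtained by inserting $\mathbf{\gamma}^{(k)}$ into (\ref{bias_Theo}), which is precisely (\ref{Eq-Corollary-cluster}). The main obstacle is the reachability bookkeeping: one must argue that nodes reachable from a single RSCC asymptotically decouple and inherit that cluster's derivative (legitimizing the restriction), whereas nodes fed by two or more RSCCs belong to no single cluster---which is exactly why the clusters need not partition $\{1,\dots,N\}$ and why weak connectivity together with a $K$-component spanning directed forest is both necessary and sufficient for local synchronization into $K$ clusters.
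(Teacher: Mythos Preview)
Your proposal is correct and follows exactly the route the paper indicates: the paper's own proof is a single sentence stating that the corollary is a particular case of Theorem~\ref{Theorem_delay-linear_stability} once one exploits the structure of the left null vector $\mathbf{\gamma}$ of $\mathbf{L}$ established in Appendix~\ref{Sec:Spectral_properties_of_digraph} (Lemma~\ref{Lemma_eigenvector_Laplacian}), and that is precisely what you do, simply with the details spelled out.

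Two minor remarks. For item~1, your observation that $\mathcal{N}_r=\emptyset$ whenever the RSCC is the singleton $\{r\}$ is the key step that makes the delay term vanish; the paper leaves this implicit, so your argument is in fact more complete here. For item~3, your construction of the sub-system ``RSCC together with the nodes it alone can reach'' is correct and closed (as you note), but it is more than is strictly needed: since an RSCC has no incoming edges from outside, the dynamics restricted to $\mathscr{C}_k$ alone already form a self-contained SC system to which Theorem~\ref{Theorem_delay-linear_stability} applies directly---this is exactly the block-triangular decomposition (\ref{L_r_forest}) used in the necessity part of the paper's proof of Theorem~\ref{Theorem_delay-linear_stability}. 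Your larger sub-system yields the same formula (the extra nodes carry $\gamma_i=0$) and additionally identifies which downstream nodes inherit the cluster value, which is a nice refinement.
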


The proof of this corollary is a particular case of Theorem 1, except that
we exploit the structure of the left eigenvector corresponding to the zero
eigenvalue of the Laplacian matrix $\mathbf{L}$, as derived in Appendix \ref%
{Appendix_overview-Graph Theory}.2.
%From Theorem \ref{Theorem_delay-linear_stability}, we may also draw the
%following remarks.
The previous results can be extended to the vector case in (\ref%
{vec_dyn_systems_}), according to the following.\footnote{%
The proof is omitted because of space limitations, but it follows the same
guidelines as the proof for the scalar case, since the interaction among the
nodes is the same.}%\vspace{-0.2cm}

\begin{theorem}
\label{Theorem_delay-linear_stability_vect}Given the system (\ref%
{vec_dyn_systems_}), assume that conditions \textbf{a2)-a3)} of Theorem \ref%
{Theorem_delay-linear_stability} are satisfied and that the matrices $\{%
\mathbf{Q}_{i}\}$ are positive definite. Then, the system synchronizes for
\emph{any} given set of propagation delays, if and only if the digraph ${%
\mathscr{G}}$ is QSC. The synchronized state is given by
\begin{equation}
\mathbf{\omega }^{\star }(\mathbf{y})\triangleq \left( \sum_{i=1}^{N}\gamma
_{i}\mathbf{Q}_{i}+\mathbf{I}_{L}\otimes \left(
K\dsum\limits_{i=1}^{N}\dsum\limits_{j\in \mathcal{N}_{i}}^{N}\gamma
_{i}a_{ij}\tau _{ij}\right) \right) ^{-1}\left( \sum_{i=1}^{N}\gamma _{i}%
\mathbf{Q}_{i}\mathbf{g}_{i}(\mathbf{y}_{i})\right) ,  \label{bias_Theo_vec}
\end{equation}%
where $\otimes $ denotes the Kronecker product and $\gamma _{i}>0$ if and
only if node $i$ can reach all the other nodes of the digraph by a strong
path, or $\gamma _{i}=0,$ otherwise. The convergence is exponential, with
rate arbitrarily close to $r\triangleq -\min_{i}\{\left\vert \limfunc{Re}%
\{s_{i}\}\right\vert :p(s_{i})=0\,\,\text{and}\,\,s_{i}\neq 0\}$, where $p(s)
$ is the characteristic function associated to system (\ref{vec_dyn_systems_}%
).
\end{theorem}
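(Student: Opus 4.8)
The plan is to follow the scalar argument of Theorem \ref{Theorem_delay-linear_stability} (Appendix \ref{proof_Theorem_delay-linear_stability}) almost verbatim, the only structural change being that the scalar coupling gain $K/c_{i}$ is replaced by the matrix gain $K\mathbf{Q}_{i}^{-1}$, while the digraph $\mathscr{G}$, its Laplacian $\mathbf{L}$, the left null vector $\mathbf{\gamma}$, and hence the connectivity hypotheses are unchanged. First I would differentiate (\ref{vec_dyn_systems_}) in $t$: since each $\mathbf{g}_{i}(\mathbf{y}_{i})$ is constant, $\mathbf{v}_{i}(t)\triangleq\dot{\mathbf{x}}_{i}(t;\mathbf{y})$ satisfies the homogeneous delayed agreement dynamics $\dot{\mathbf{v}}_{i}(t)=K\mathbf{Q}_{i}^{-1}\sum_{j\in\mathcal{N}_{i}}a_{ij}\big(\mathbf{v}_{j}(t-\tau_{ij})-\mathbf{v}_{i}(t)\big)$, so that global synchronization of (\ref{vec_dyn_systems_}) in the sense of Definition \ref{Definition_sync-state} amounts to every $\mathbf{v}_{i}(t)$ converging to one common limit $\mathbf{\omega}^{\star}(\mathbf{y})$. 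Stacking the $\mathbf{v}_{i}$'s and grouping the links of common delay $\theta$ into nonnegative matrices $\mathbf{A}_{\theta}$ (with $\mathbf{D}=\mathrm{diag}(\sum_{j}a_{ij})$, so that $\mathbf{D}-\sum_{\theta}\mathbf{A}_{\theta}=\mathbf{L}$), the system becomes $\dot{\mathbf{v}}(t)=-K\widehat{\mathbf{Q}}^{-1}\big[(\mathbf{D}\otimes\mathbf{I}_{L})\mathbf{v}(t)-\sum_{\theta}(\mathbf{A}_{\theta}\otimes\mathbf{I}_{L})\mathbf{v}(t-\theta)\big]$ with $\widehat{\mathbf{Q}}$ block-diagonal with blocks $\mathbf{Q}_{i}$, and its characteristic function is $p(s)=\det\!\big(s\mathbf{I}_{NL}+K\widehat{\mathbf{Q}}^{-1}(\mathbf{L}(s)\otimes\mathbf{I}_{L})\big)$, where $\mathbf{L}(s)=\mathbf{D}-\sum_{\theta}\mathbf{A}_{\theta}e^{-s\theta}$ is the delayed Laplacian, $\mathbf{L}(0)=\mathbf{L}$. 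When every $\mathbf{Q}_{i}$ is a scalar multiple of $\mathbf{I}_{L}$ the system is just $L$ uncoupled copies of the scalar one and Theorem \ref{Theorem_delay-linear_stability} applies component-wise; the content of the proof lies in handling the genuinely vectorial case where the $\mathbf{Q}_{i}^{-1}$ cannot be simultaneously diagonalized.

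Second, I would establish the spectral picture through the marginal-stability machinery of Appendix \ref{Appendix_overview}. Under QSC one has to show: (i) every root of $p$ with $s\neq0$ satisfies $\mathrm{Re}(s)<0$, for \emph{any} bounded set of delays $\{\tau_{ij}\}$; and (ii) $s=0$ is a semisimple root of $p$ of algebraic multiplicity exactly $L$, with mode subspace the synchronized subspace $\{\mathbf{1}_{N}\otimes\mathbf{w}:\mathbf{w}\in\mathbb{R}^{L}\}$. Claim (ii) is the easy part: a QSC digraph forces $\mathbf{L}$ to have a simple zero eigenvalue (Appendix \ref{Sec:Spectral_properties_of_digraph}) with right null vector $\mathbf{1}_{N}$ and left null vector $\mathbf{\gamma}\geq\mathbf{0}$, $\mathbf{\gamma}\neq\mathbf{0}$, and a transversality computation (Jacobi's formula, using $\mathrm{adj}(\mathbf{L})\propto\mathbf{1}_{N}\mathbf{\gamma}^{T}$) reduces the non-degeneracy of the origin as a root of $p$ to the invertibility of the $L\times L$ matrix $\sum_{i}\gamma_{i}\mathbf{Q}_{i}+K\big(\sum_{i}\sum_{j}\gamma_{i}a_{ij}\tau_{ij}\big)\mathbf{I}_{L}$, which holds because $\sum_{i}\gamma_{i}\mathbf{Q}_{i}\succ\mathbf{0}$ (each $\mathbf{Q}_{i}\succ\mathbf{0}$, $\gamma_{i}\geq0$, some $\gamma_{i}>0$) and the scalar coefficient is $\geq0$. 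Claim (i) is the main obstacle and the only genuinely delay-sensitive assertion of the theorem (in contrast with the plain state-consensus protocols, whose convergence \emph{does} deteriorate with delay); I would prove it by the same diagonal-dominance / Lyapunov--Razumikhin estimate as in the scalar case, but now run in the $\mathbf{Q}_{i}$-weighted inner products instead of componentwise because of the non-commuting gains, positive definiteness of the $\mathbf{Q}_{i}$ being exactly what keeps those quadratic-form bounds valid. Granting (i) and (ii), $\mathbf{v}(t)$ converges exponentially, with rate arbitrarily close to the $r$ in the statement, to its projection on the synchronized subspace, i.e.\ $\dot{\mathbf{x}}_{i}(t;\mathbf{y})\to\mathbf{\omega}^{\star}(\mathbf{y})$ for all $i$.

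Third, I would pin down $\mathbf{\omega}^{\star}$ and settle the converse. Multiplying the $i$-th equation of (\ref{vec_dyn_systems_}) on the left by $\gamma_{i}\mathbf{Q}_{i}$, summing over $i$, and using $\mathbf{\gamma}^{T}\mathbf{L}=\mathbf{0}$, $\mathbf{L}\mathbf{1}_{N}=\mathbf{0}$ together with $\mathbf{x}_{j}(t-\tau_{ij})=\mathbf{x}_{j}(t)-\int_{t-\tau_{ij}}^{t}\dot{\mathbf{x}}_{j}(\sigma)\,d\sigma$, the undelayed coupling terms telescope away and one is left with
\begin{equation}
\dsum_{i=1}^{N}\gamma_{i}\mathbf{Q}_{i}\dot{\mathbf{x}}_{i}(t;\mathbf{y})=\dsum_{i=1}^{N}\gamma_{i}\mathbf{Q}_{i}\mathbf{g}_{i}(\mathbf{y}_{i})-K\dsum_{i=1}^{N}\dsum_{j\in\mathcal{N}_{i}}\gamma_{i}a_{ij}\int_{t-\tau_{ij}}^{t}\dot{\mathbf{x}}_{j}(\sigma)\,d\sigma .
\end{equation}
Letting $t\to\infty$ with $\dot{\mathbf{x}}_{j}(\sigma)\to\mathbf{\omega}^{\star}$, so that each integral tends to $\tau_{ij}\mathbf{\omega}^{\star}$, gives $\big(\sum_{i}\gamma_{i}\mathbf{Q}_{i}+K\sum_{i}\sum_{j}\gamma_{i}a_{ij}\tau_{ij}\mathbf{I}_{L}\big)\mathbf{\omega}^{\star}=\sum_{i}\gamma_{i}\mathbf{Q}_{i}\mathbf{g}_{i}(\mathbf{y}_{i})$; inverting the positive definite coefficient matrix (and noting that multiplying $\mathbf{I}_{L}$ by a scalar is the same as $\mathbf{I}_{L}\otimes(\cdot)$) yields (\ref{bias_Theo_vec}), and this also shows the limit does not depend on the initial conditions. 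For necessity, if $\mathscr{G}$ is not QSC then $\mathbf{L}$ has a zero eigenvalue of multiplicity $\geq2$ (Appendix \ref{Sec:Spectral_properties_of_digraph}), hence $s=0$ is a root of $p$ of multiplicity $\geq2L$; the surplus undamped modes drive distinct root-strongly-connected components to distinct derivative limits (as quantified by Corollary \ref{Corollary}), so no single $\mathbf{\omega}^{\star}$ can attract all state derivatives and global synchronization fails---hence QSC is necessary as well. Finally, the characterization that $\gamma_{i}>0$ iff node $i$ reaches every other node by a strong path is inherited unchanged from the structure of the left null vector of $\mathbf{L}$ derived in Appendix \ref{Sec:Spectral_properties_of_digraph}.
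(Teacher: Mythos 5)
The paper does not actually prove Theorem \ref{Theorem_delay-linear_stability_vect}: it only remarks in a footnote that the argument "follows the same guidelines" as the scalar proof of Appendix \ref{proof_Theorem_delay-linear_stability}, i.e., posit an affine synchronized solution $\mathbf{x}_i^{\star}(t)=\boldsymbol{\alpha}t+\mathbf{x}_{i,0}^{\star}$, fix $\boldsymbol{\alpha}=\boldsymbol{\omega}^{\star}$ from the compatibility condition with the left null vector of $\mathbf{L}$, change variables, and prove marginal stability of the resulting homogeneous LDDE. Your route differs in two substantive ways and is, in my assessment, sound. First, you differentiate the system and work with the derivative dynamics, then recover $\boldsymbol{\omega}^{\star}$ from the conservation law obtained by left-multiplying by $\gamma_i\mathbf{Q}_i$, summing, and letting $t\to\infty$; this is equivalent to the paper's compatibility condition $\sum_i\gamma_i\mathbf{Q}_i\Delta\boldsymbol{\omega}_i(\boldsymbol{\alpha})=\mathbf{0}$ but yields (\ref{bias_Theo_vec}) without solving for the offsets $\mathbf{x}_{i,0}^{\star}$. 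Second --- and this is precisely where a verbatim transplant of the scalar proof would stumble --- you observe that $s=0$ is now a characteristic root of multiplicity $L$ rather than a simple root, so Lemma \ref{theorem_stability_of_dynamical_system} does not apply as stated; your transversality computation, which produces the nonsingular $L\times L$ matrix $\sum_i\gamma_i\mathbf{Q}_i+K\bigl(\sum_i\sum_j\gamma_ia_{ij}\tau_{ij}\bigr)\mathbf{I}_L$, is exactly the condition for the resolvent to have a simple pole at the origin, and it is reassuring that the same matrix is the one inverted in (\ref{bias_Theo_vec}). The frequency-domain half of your claim (i) also extends cleanly: the $(i,j)$ block of the relevant transfer matrix reduces to $Ka_{ij}e^{-s\tau_{ij}}\left(s\mathbf{Q}_i+K\deg_{\mathrm{in}}(v_i)\mathbf{I}_L\right)^{-1}$, whose block row sums of spectral norms on the imaginary axis equal $\max_l K\deg_{\mathrm{in}}(v_i)/\left|j\omega\mu_{i,l}+K\deg_{\mathrm{in}}(v_i)\right|\leq 1$ with equality only at $\omega=0$ (here $\mu_{i,l}>0$ are the eigenvalues of $\mathbf{Q}_i$, which is where positive definiteness enters), and Lemma \ref{Lemma-Boyd} applies unchanged to the $NL\times NL$ block matrix.

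The one step you pass over too quickly is the boundedness hypothesis \textbf{b1)} of Lemma \ref{theorem_stability_of_dynamical_system}. The scalar Step 1 is a componentwise maximum principle that relies on the coupling gains $K/c_i$ being positive \emph{scalars}; with matrix gains $K\mathbf{Q}_i^{-1}$ the off-diagonal entries of $\mathbf{Q}_i^{-1}$ may be negative, the box $\{\Vert\boldsymbol{\Psi}\Vert_\infty\leq\beta\}$ is no longer invariant, and the $\mathbf{Q}_i$-weighted quadratic Razumikhin functional you invoke does not obviously close either, since $\mathbf{v}_j^T\mathbf{Q}_j\mathbf{v}_j\leq\mathbf{v}_i^T\mathbf{Q}_i\mathbf{v}_i$ does not control $\Vert\mathbf{v}_j\Vert_2$ by $\Vert\mathbf{v}_i\Vert_2$ when the $\mathbf{Q}_i$ differ across nodes. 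This is repairable --- solutions of a linear autonomous retarded equation are automatically of exponential order, which is all that the Laplace-transform expansion of \cite{Bellman-Cooke} requires, and boundedness then follows a posteriori from the root location --- but as written that step does not go through "by the same estimate as in the scalar case," and you should either supply the exponential-order argument or a genuinely vector-valued invariance argument.
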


%Theorems \ref{Theorem_delay-linear_stability} and \ref{Theorem_delay-linear_stability_vect} suggest the following remarks.\vspace{-0.2cm}
%\noindent \textbf{Remark 1: Robustness with respect to large propagation
%delays. }

\subsection{\textbf{Impact of propagation delays on convergence}}

The impact of delays in consensus-achieving algorithms has been analyzed in
a series of works \cite[Ch.7.3]{Tsitsiklis-Bertsekas-book}, \cite%
{Olfati-Saber}, \cite{Blondel-Tsitsiklis}, \cite{Strogatz}$-$\cite%
{Wang-Xiao-06}. Among these works, it is useful to distinguish between
consensus algorithms, \cite{Olfati-Saber, Olfati-Saber-Murray-ProcIEEE07},
where the states of all the sensors converge to a prescribed function
(typically the average) of the sensors' initial values, and agreement
algorithms, \cite[Ch.7.3]{Tsitsiklis-Bertsekas-book}, \cite%
{Blondel-Tsitsiklis}, \cite{Strogatz}$-$\cite{Wang-Xiao-06},
\cite{Chellaboina-Haddad-CDC06}, where the goal is to make all the
states to converge to a common value, but without specifying how
this value has to be related to the initial values. In our
application, we can only rely on consensus algorithms, where the
final consensus has to coincide with the globally optimal decision
statistic.

The consensus algorithms analyzed in \cite{Olfati-Saber,
Olfati-Saber-Murray-ProcIEEE07} assume the same delay value for all
the links, i.e., $\tau_{ij}=\tau$, and symmetric channels, i.e., $a_{ij}=a_{ji}$%
, for all $i\neq j$. Under these assumptions, the average consensus in \cite%
{Olfati-Saber, Olfati-Saber-Murray-ProcIEEE07} is reached if and only if the
common delay $\tau $ is upper bounded by \cite{Olfati-Saber}
\begin{equation}
\tau <\frac{\pi }{2}\frac{1}{\lambda _{N}},
\end{equation}%
where $\lambda _{N}$ denotes the maximum eigenvalue of the Laplacian
associated to the undirected graph of the network (cf. Appendix \ref%
{Appendix_overview-Graph Theory}). Since, for any graph with $N$ nodes, we
have \cite{Fielder}
\begin{equation}
\frac{N}{N-1}\ d_{\max }\leq \lambda _{N}\leq 2\ d_{\max },
\end{equation}%
with $d_{\max }$ denoting the maximum graph degree, increasing $d_{\max }$
imposes a strong constraint on the maximum (common) tolerable delay. This
implies, for example, that networks with hubs (i.e., nodes with very large
degrees) that are commonly encountered in scale-free networks \cite{Barabasi}%
, are fragile against propagation delays, when using the consensus
algorithms of \cite{Olfati-Saber, Olfati-Saber-Murray-ProcIEEE07}.

In our application, we were motivated to extend the approach of \cite%
{Olfati-Saber, Olfati-Saber-Murray-ProcIEEE07} to the general case of
inhomogeneous delays and asymmetric channel links. Nevertheless, in spite of
the less restrictive assumptions, Theorem \ref%
{Theorem_delay-linear_stability} shows that our proposed algorithm
is more robust against propagation delays, since its convergence
capability \textit{is not affected by the delays}. Moreover, our
approach is valid in the more general case of asymmetric
communications links and the final value is not simply the average
of the measurements, but a weighted average of functions of the
measurements that can be made to coincide with the desired globally
optimal decision statistic in the form (\ref{f}) or (\ref{f_vect}),
through a proper
choice of the coefficients $c_{i}$. %Finally, in our
%case, the network is only required to be quasi-strongly connected,
%not necessarily strongly connected.

An intuitive reason for explaining the main advantage of our
approach is related to the use of an alternative definition of
global consensus: As opposed to conventional methods requiring
consensus on the state value, i.e. \cite{Borkar-Varaiya}$-$\cite%
{Ren-Beard-Control-Magazine}, \cite{Strogatz}$-$\cite{Wang-Xiao-06},
we require the convergence over the state derivative, i.e., we only
require that the state trajectories converge towards parallel
straight lines. The slope must be the same for all the trajectories
and it has to coincide with the desired decision statistic. But the
constant terms of each line may differ from sensor to sensor. This
provides additional degrees of freedom that, eventually, make our
approach more robust against propagation delays or link
coefficients.

\vspace{-0.3cm}
%\noindent \textbf{Remark 2: Effect of network topology on consensus
%structure. }

\subsection{Effect of network topology on final consensus value}

Theorem \ref{Theorem_delay-linear_stability} generalizes all the previous
(only sufficient) conditions known in the literature \cite{Olfati-Saber},
\cite{Strogatz}$-$ \cite{Wang-Xiao-06} for the convergence of linear
agreement/consensus protocols in the presence of propagation delays, since
it provides a complete characterization of the synchronization capability of
the system for any possible degree of connectivity in the network (not only
for SC digraphs), as detailed next.
%One of the main consequences of Theorem
%\ref{Theorem_delay-linear_stability} is that our consensus algorithm
%converges under a very broad assumption about the network
%connectivity, as it only requires quasi-strong connectivity.

In general, the digraph modeling the interaction among the nodes may
have one of the following structures: i) the digraph contains only
one spanning directed tree, with a single root node, i.e., there
exists only one node that can reach all the other nodes in the
network through a strong directed path; ii) the digraph contains
more than one spanning directed tree, i.e., there exist multiple
nodes (possibly all the nodes), strongly connected to each other,
that can reach all the other nodes by a strong directed path; iii)
the digraph is weakly connected and contains a spanning forest,
i.e., there exists no node that can reach every other node through a
strong directed path; iv) the digraph is not even weakly connected.
The last case is the least interesting, as it corresponds to a set
of isolated subnetworks, that can be analyzed independently of each
other, as one of the previous cases. In the first two cases,
according to Theorem \ref{Theorem_delay-linear_stability}, system
(\ref{linear delayed system}) achieves a \emph{global} consensus,
whereas in the third case the system forms clusters of consensus
with, in general, different consensus values in each cluster, i.e.,
the system synchronizes only \textit{locally}.

In other words, a global consensus is possible \textit{if and only
if} there exists at least one node (the root node of the spanning
directed tree of the digraph) that can send its information,
directly or indirectly, to all the other nodes. If no such a node
exists, a global consensus cannot be reached. However, a
\emph{local} consensus is still achievable among all the nodes that
are able to influence each other.

Interestingly, the closed form expressions of the synchronized state given by (\ref{bias_Theo})
%(%\ref{spanning-tree_sync_state})
confirms the above statements: The observation $y_{i}$ of, let us
say, sensor $i$ affects the final consensus value if and only if
such an information can reach all the other nodes by a strong
directed path. As a by-product of this result, we have the following
special cases (Corollary 1): If there is only one node that can
reach all the others, then the final consensus depends only on the
observation taken from that node; on the other extreme, the final
consensus contains contributions from \emph{all} the nodes {\it if
and only if} the digraph is SC.

Moreover, if a node contributes to the final value, it does that
through a weight that depends on its in/out-degree. The set of weights $%
\{\gamma _{i}\}$ in (\ref{bias_Theo}) can be interpreted as a measure of the
\textquotedblleft symmetry/asymmetry\textquotedblright\ of the communication
links in the network: Some of these weights are equal to each other if and
only if the subdigraph associated to the corresponding nodes is balanced (or
undirected).

%As shown in App. A.1, a very powerful graph representation that
%greatly facilitates the understanding of the graph behavior, in
%terms of consensus, is the construction of the so called equivalent
%{\it condensation} graph (CG), obtained by replacing every strongly
%connected component of the initial graph by a single node in the CG.
%It turns out that the synchronization capabilities of the initial
%graph can be studied by looking at the, much simpler, CG.
The synchronized state, as given in (\ref{bias_Theo}), suggests also an
interesting interpretation of the consensus formation mechanism of system (%
\ref{linear delayed system}), based on the so called \textit{condensation}
digraph.\footnote{%
Please, refer to Appendix \ref{Sec_basic_defs_digraphs} for the
definition of condensation digraph and the procedure for reducing a
digraph into its condensation digraph.} From (\ref{bias_Theo}) in
fact, one infers that all the nodes that are SC to each other
(usually referred to as nodes of a strongly connected component
(SCC)) produce the same effect on the final consensus as an {\it
equivalent} single node that represents the consensus within that
SCC. In fact, one may easily check if system in (\ref{linear delayed
system}) locally or globally synchronizes and which nodes contribute
on the consensus, simply reducing the original digraph to its
equivalent \emph{condensation }digraph and looking for the existence
of a spanning directed tree in the condensation digraph. The only
SCCs of the original digraph that will provide a contribution on the
final consensus are the SCCs associated to the root nodes of the
condensation digraph.

As an additional remark, the possibility to form clusters of consensus,
rather than a global consensus, depends on the (channel) coefficients $a_{ij}$%
.  If, e.g., these coefficients have the expression as given in
Section \ref{System-model}, they may be altered by changing the
transmit powers $P_j$. As a consequence, the nodes with the highest
transmit power will be the most influential ones. If, for example,
we want to set a certain parameter on each node, like, e.g., a
decision threshold, we can use the same consensus mechanism used in
this paper by assigning, for example, the desired value to, let us
say, node $i$, and select the transmit powers so
that node $i$ is the only node that can reach every other node.%\bigskip
\vspace{-0.2cm}

\subsection{How to get unbiased estimates}

%state. }
The closed form expression of the synchronized state, as given in (\ref%
{bias_Theo}) (or in (\ref{bias_Theo_vec})), is valid for \emph{any} given
digraph (not only for undirected graphs as in \cite{Olfati-Saber}).
Expression (\ref{bias_Theo}) shows a dependence of the final consensus on
the network topology and propagation parameters, through the coefficients $%
\{a_{ij}\}$, $\{\gamma _{i}\}$ and the delays $\{\tau _{ij}\}$. This means
that, even if the propagation delays do not affect the convergence of the
proposed system, they introduce a bias on the final value, whose amount
depends on both the delays and coefficients $\{a_{ij}\}$. The effect of
propagation parameters and network topology on the final synchronized state
is also contained in the eigenvector $\mathbf{\gamma }$ of the Laplacian $%
\mathbf{L}$. This implies that the final consensus resulting from (\ref%
{bias_Theo}) cannot be made to coincide with the desired decision
statistics as given by (\ref{f}), except that in the trivial case
where all the delays are equal to zero and the digraph is balanced
(and thus strongly connected). However, expression (\ref{bias_Theo})
suggests a method to get rid of any bias, as detailed next.

The bias due to the propagation delays can be removed using the following
two-step algorithm.\footnote{%
We focus only on the scalar system in (\ref{linear delayed system}), because
of space limitation.} We let the system in (\ref{linear delayed system}) to
evolve twice: The first time, the system evolves according to (\ref{linear
delayed system}) and we denote by $\omega^\star(\mbox{\boldmath $y$})$ the
synchronized state; the second time, we set $g_i(y_i)=1$ in (\ref{linear
delayed system}), for all $i$, and the system is let to evolve again,
denoting the final synchronized state by $\omega^\star(\mbox{\boldmath $1$})$%
. From (\ref{bias_Theo}), if we take the ratio $\omega^\star(%
\mbox{\boldmath
$y$})/\omega^\star(\mbox{\boldmath $1$})$, we get
\begin{equation}
\frac{\omega^\star(\mbox{\boldmath $y$})}{\omega^\star(\mbox{\boldmath $1$})}%
=\frac{\dsum_{i=1}^{N}\gamma _{i}c_{i} g_{i}(y_i)}{\dsum_{i=1}^{N}\gamma
_{i}c_{i}},  \label{ratio}
\end{equation}%
which coincides with the ideal value achievable in the absence of delays.
Thus, this simple double-step algorithm allows us to remove the bias term
depending on the delays and on the channel coefficients, without requiring
the knowledge or estimate of neither set of parameters.

If the network is strongly connected and balanced, $\gamma _{i}=1,\forall i$
and then the compensated consensus coincides with the desired value (\ref{f}%
). If the network is unbalanced, the compensated consensus $\omega ^{\star }(%
\mbox{\boldmath $y$})/\omega ^{\star }(\mbox{\boldmath $1$})$ does
not depend on the (channel) coefficients $\{a_{ij}\}$, but it is
still biased, with a bias dependent on $\mathbf{\gamma }$, i.e., on
the network topology. Nevertheless, this residual bias can be
eliminated in a decentralized way according to the following
iterative algorithm. Let us denote by $N_r$ the number of nodes in
the RSCC of the digraph. At the beginning, every node sets
$g_{i}(y_{i})=1$ and $c_{i}=1,$ $i=1, \ldots, N$ and the network is
let to evolve. The final consensus value is denoted by $\omega ^{\star }(%
\mbox{\boldmath $1$})$. Then, the network is let to evolve $N_r$
times, according to the following protocol. At step $i$, with $i=1,
\ldots, N_r$, the nodes within the $i$-th SCC, set $g_{i}(y_{i})=1$,
while all the other nodes set $g_{k}(y_{k})=0$ for all $k\neq i$.
Let us denote by $\omega ^{\star
}(\mbox{\boldmath $e$}_{i})$ the final consensus value, where $%
\mbox{\boldmath $e$}_{i}$ is the canonical vector having all zeros,
except the $i$-th component, equal to one. Repeating this procedure
for all the SCC's, at the end of the $N_{r}$ steps, each node is now
able to compute the ratio $\omega ^{\star }(\mbox{\boldmath
$e$}_{i})/\omega ^{\star }(\mbox{\boldmath $1$})$, which coincides
with $\tilde{\gamma}_{i}:=\gamma _{i}/\sum_{k}\gamma _{k}$. Thus,
after $N_{r}+1$ steps, every node knows its own (normalized)
$\tilde{\gamma}_{i}$. This value is subsequently used to compensate
for the network unbalance as follows. The compensation is achieved
by simply setting, at each node $c_{i}=c_{i}/\tilde{\gamma}_{i}$. In
fact, with this setting, the final ratio $\omega ^{\star
}(\mbox{\boldmath $y$})/\omega ^{\star }(\mbox{\boldmath $1$}) $
coincides with the desired unbiased expression given by (\ref{f}),
for all $i$ such that $\gamma _{i}\neq 0$. If the digraph
$\mathscr{G}$ is SC, this procedure corresponds indeed to make the Laplacian matrix $\mathbf{L}(%
\mathscr{G})$ balanced, so that the left eigenvector of $\mathbf{L}$
associated to the zero eigenvalue be proportional to the vector $\mathbf{1}%
_{N}$. This procedure only needs some kind of coordination among the
nodes to make them working according to a described scheduling. It
is important to remark that, since the eigenvector $\mathbf{\gamma
}$ does not depend on the observations $\{y_{i}\} $, the proposed
algorithm is required to be performed at the start-up phase of the
network and repeated only if the network topology or the channels
change through time. In summary, we can eliminate the effect of both
delays and channel parameters on the final consensus value, thus
achieving the optimal decision statistics as given in (\ref{f}),
with a totally decentralized algorithm, at the price of a slight
increase of complexity and the need for some coordination among the
nodes.\footnote{In \cite{Chellaboina-Haddad-CDC06}, the authors
provided a closed form expression of the agreement obtainable with
an SC balanced network achieving consensus on the state. However,
that expression depends on the channel parameters and on the initial
conditions in a way that the
bias cannot be eliminated through a distributed procedure.}
\subsection{Asymptotic convergence rate}
Theorem \ref{Theorem_delay-linear_stability} generalizes previous results on
the convergence speed of classical linear consensus protocols (see, e.g.
\cite{Olfati-Saber, Olfati-Saber-Murray-ProcIEEE07}) to the case in which
there are propagation delays. In spite of the presence of delays, the
proposed system still converges to the consensus with \emph{exponential}
convergence rate, i.e., $\left\Vert \dot{\mathbf{x}}(t;\mathbf{y})-\mathbf{%
\omega }^{\star }(\mathbf{y})\right\Vert \leq O\left( e^{rt}\right) $, where
the convergence factor $r<0$ is defined in Theorem \ref%
{Theorem_delay-linear_stability}. Thus, the convergence speed is dominated
by the slowest \textquotedblleft mode" of the system. Moreover, as expected,
the presence of delays affects the convergence speed, as the roots of
characteristic equation $p(s)=0$ associated to system (\ref{linear delayed
system}) depend, in general, on both network topology and propagation
delays. Unfortunately, a closed form expression of this dependence is not
easy to derive, and the roots of $p(s)=0$ need to be computed numerically.

In the special case of negligible delays instead, we can provide a
bound of the convergence factor as a function of the channel
parameters through the eigenvalues of the Laplacian matrix
$\mathbf{L}$. Setting $\tau_{ij}=0$ in (\ref{linear
delayed system}), the characteristic equation associated to (%
\ref{linear delayed system}) becomes $p(s)=\left| s\mathbf{I}+\mathbf{L}%
\right| =0$, whose solutions are just the eigenvalues of $-\mathbf{L}$.
Under the assumptions of Theorem \ref{Theorem_delay-linear_stability}, it
follows that
\begin{equation}
r=-\min_i\{\limfunc{Re}\{\lambda_{i}\}: \lambda_i \in \sigma\{\mathbf{L}\},
\text{and}\,\, \lambda_i\neq 0\},  \label{characteristic-roots-nodelays}
\end{equation}
where $\sigma\{\mathbf{L}\}$ denotes the spectrum of Laplacian
matrix $\mathbf{L}$. The value of $r$ is negative if and only if the
digraph $\mathscr{G}$ is QSC
\cite[Lemma 2]{Wu-Linear_Algebra_2}. Using (\ref%
{characteristic-roots-nodelays}), we can obtain bounds on the convergence
rate as a function of the network topology, as shown next. When the digraph
is strongly connected we have\footnote{%
The inequality in (\ref{lambda_2}) follows from the fact that $\limfunc{Re}%
\{\lambda_{i}\}\geq |\kappa|$ for all nonzero eigenvalues $\{\lambda_{i}\}$
of $\mathbf{L}$ \cite{Wu-Linear_Algebra_2}.}
\begin{equation}
r\leq \kappa \triangleq -\lambda_2\left(\frac{1}{2}(\mathbf{D}_{\gamma}%
\mathbf{L}+\mathbf{L}^T\mathbf{D}_{\gamma})\right)<0,  \label{lambda_2}
\end{equation}
where $\mathbf{D}_{\gamma}\triangleq \limfunc{diag}(\gamma_1,\cdots,
\gamma_N)$, $\mathbf{\gamma}$ is the left eigenvector of $\mathbf{L}$
associated to the zero eigenvalue, normalized so that $\| \mathbf{\gamma}%
\|_{\infty}=1$, and $\lambda_2(\mathbf{A})$ denotes the second smallest
eigenvalue of the symmetric matrix $\mathbf{A}$. It follows from (\ref%
{lambda_2}) that system (\ref{linear delayed system}) reaches a consensus
with rate at least $\kappa$. As special cases of (\ref{lambda_2}) we have
the following: i) In the case of balanced digraphs, $\mathbf{D}_{\gamma}=%
\mathbf{I}$ and thus $\kappa=-\lambda_2\left(\frac{1}{2}(\mathbf{L}+\mathbf{L%
}^T)\right)$; ii) If the digraph is undirected (and connected) $\mathbf{D}%
_{\gamma}=\mathbf{I}$, $\mathbf{L}=\mathbf{L}^T$ and thus $%
\kappa=-\lambda_2\left(\mathbf{L}\right)$, where $\lambda_2\left(\mathbf{L}%
\right)$ is also known as the \emph{algebraic connectivity} of the digraph
\cite{Fielder}. For the case in which the digraph is QSC, some bounds of $r$
can be found in \cite{Wu-Linear_Algebra, Wu-synchronization} using the
generalization of the classical definition of algebraic connectivity.
Moreover, interestingly, the convergence rate of the system, under the
assumption of Theorem \ref{Theorem_delay-linear_stability}, can be related
to the convergence rates of the SCCs of condensation digraph associated to
the system. Because of lack of space, we suggest the interested reader to
check \cite{Wu-Linear_Algebra_2} for more details.

In conclusion, according to the above results, we infer that the
convergence rate of the proposed consensus algorithm, at least in
the absence of propagation delays, is the same as that of the
classical linear protocols achieving consensus on the state.%\vspace{-0.4cm}

\section{Numerical Results}

\label{Sec:Numerical-Results} In this section, we illustrate first some
examples of consensus, for different network topologies. Then, we show an
application of the proposed technique to an estimation problem, in the
presence of random link coefficients. In both examples, the analog system (%
\ref{linear delayed system}) is implemented in discrete time, with sampling
step size $T_s=10^{-3}$.\medskip

\noindent \textbf{Example 1: Different forms of consensus for different
topologies}\newline
In Figure \ref{Figure-RSCC}, we consider three topologies (top row), namely:
(a) a SC digraph, (b) a QSC digraph with three SCCs, and (c) a WC (not QSC)
digraph with a spanning forest composed by two trees. For each digraph, we
also sketch its decomposition into SCCs (each one enclosed in a circle),
corresponding to the nodes of the associated condensation digraph (whose
root SCC is denoted by RSCC). In the bottom row of Figure \ref{Figure-RSCC},
we plot the dynamical evolutions of the state derivatives of system (\ref%
{linear delayed system}) versus time, for the three network topologies,
together with the theoretical asymptotic values predicted by (\ref{bias_Theo}%
) (dashed line with arrows). As proved by Theorem \ref%
{Theorem_delay-linear_stability}, the dynamical system in Figure \ref%
{Figure-RSCC}a) achieves a global consensus, since the underlying
digraph is SC. The network of Figure \ref{Figure-RSCC}b), instead,
is not SC, but the system is still able to globally synchronize,
since there is a set of nodes, in the RSCC component, able to reach
all the other nodes. The final consensus, in such a case, contains
only the contributions of the nodes in the RSCC, since no other node
belongs to the root of a spanning directed tree of the condensation
digraph. Finally, the system in Figure \ref{Figure-RSCC}c) cannot
reach a global consensus since there is no node that can reach all
the others, but it does admit two disjoint clusters, corresponding
to the
two RSCCs, namely RSCC$_{1}$ and RSCC$_{2}$. The middle lines of Figure \ref%
{Figure-RSCC}c) refer to the nodes of the SCC component, not belonging to
either RSCC$_{1}$ or RSCC$_{2}$, that are affected by the consensus achieved
in the two RSCC components, but that cannot affect them. Observe that, in
all the cases, the state derivatives of the (global or local) clusters
converge to the values predicted by the closed form expression given in (\ref%
{bias_Theo}), (\ref{spanning-tree_sync_state}) or (\ref{Eq-Corollary-cluster}%
), depending on the network topology.
\begin{figure}[tbh]
\hspace{-0.9cm}\includegraphics[height=10 cm]{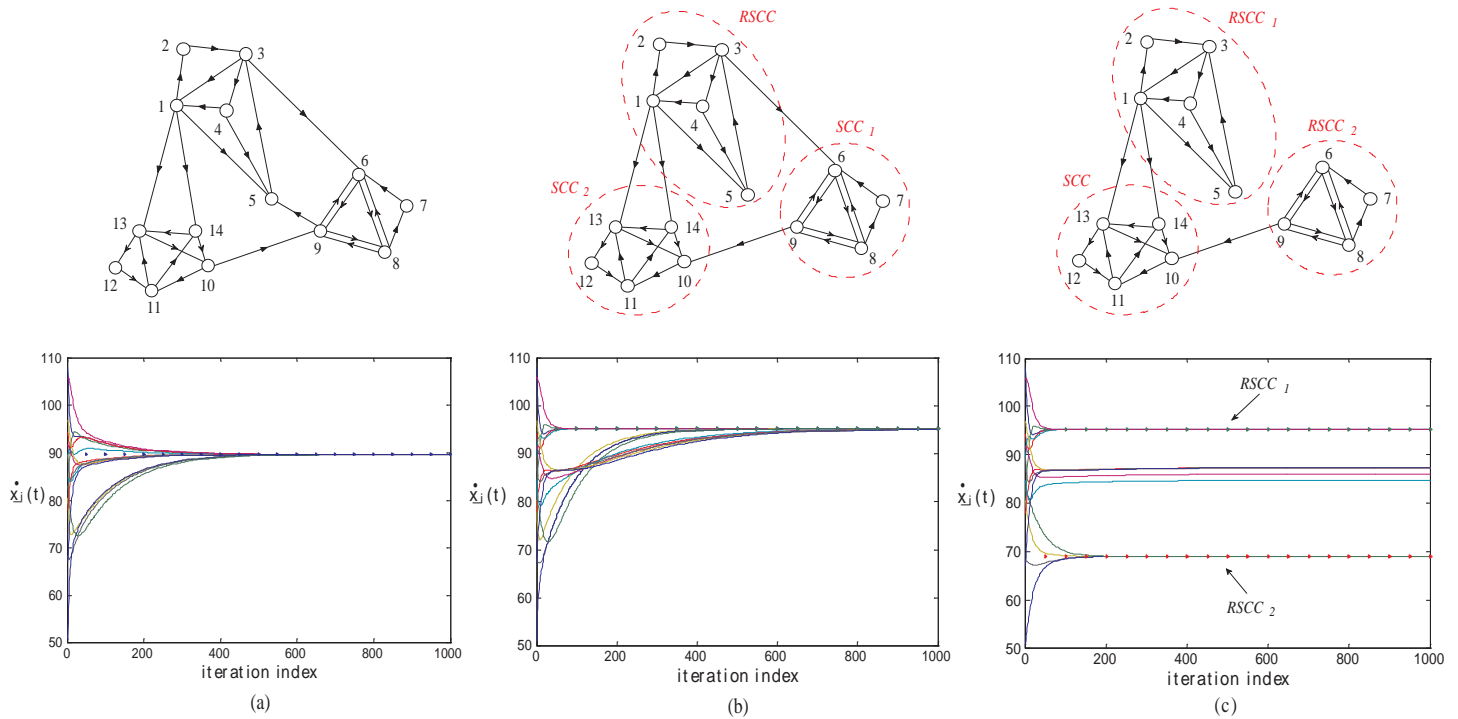}
\caption{{\protect\small Consensus for three different network topologies:
a) SC digraph; b) QSC digraph with three SCCs; c) WC digraph with a two
trees forest; $T_{s}=10^{-3}$, $\protect\tau =50T_{s}$, $K=30$, N=14. }}
\label{Figure-RSCC}
\end{figure}\medskip

%Fig. \ref{Figure-RSCC} also suggests an interesting interpretation
%of the mechanism by which the consensus propagates through the
%network. When the condensation digraph contains a spanning tree (as
%in Fig. \ref{Figure-RSCC}a) and Fig. \ref{Figure-RSCC}b)) the unique
%SCC corresponding to the root of the tree, called root SCC (RSCC),
%influences all the other SCCs outside the component, according to
%the following dynamics. First all nodes in the RSCC reach a
%consensus on the state derivatives. Then, they work as a single
%leader node that influences all remaining nodes according to the
%tree structure of the condensation digraph. Forced by the RSCC, as
%the remaining nodes synchronize to the synchronized state of the
%RSCC, they are absorbed in the RSCC, etc., until all the nodes of
%the condensation digraph (i.e. the SCCs of the original digraph)
%synchronize. If, instead, the condensation digraph contains a forest
%with $K$ trees (as in Fig. \ref{Figure-RSCC}c)), we have $K$ RSCCs,
%each of them working in the associated cluster as described above.

\noindent \textbf{Example 2: Distributed optimal decisions through
self-synchronization}. The behaviors shown in the previous example refer to
a given realization of the topology, with given link coefficients, and of
the observations. In this example, we report a global parameter representing
the variance obtained in the estimate of a scalar variable. Each sensor
observes a variable $y_{i}=A_{i}\xi +w_{i}$, where $w_{i}$ is additive zero
mean Gaussian noise, with variance $\sigma _{i}^{2}$. The goal is to
estimate $\xi $. The estimate is performed through the interaction system (%
\ref{linear delayed system}), with functions $g_{i}(y_{i})=y_{i}/A_{i}$ and
coefficients $c_{i}=A_{i}^{2}/\sigma _{i}^{2}$, chosen in order to achieve
the globally optimal ML estimate. The network is composed of $40$ nodes,
randomly spaced over a square of size $D$. The size of the square occupied
by the network is chosen in order to have a maximum delay $\tau =100T_{s}$.
We set the threshold on the amplitude of the minimum useful signal to zero,
so that, at least in principle, each node hears each other node. The
corresponding digraph is then SC. To simulate a practical scenario, the
channel coefficients $a_{ij}$ are generated as i.i.d. Rayleigh random
variables, to accommodate for channel fading. Each variable $a_{ij}$ has a
variance depending on the distance $d_{ij}$ between nodes $i$ and $j$, equal
to\footnote{%
We use the attenuation factor $1/(1+d_{ij}^{2})$ instead of $1/d_{ij}^{2}$
to avoid the undesired event that, for $d_{ij}<1$ the received power might
be greater than the transmitted power.} $\sigma
_{ij}^{2}=P_{j}/(1+d_{ij}^{2})$.

In Figure \ref{Figure-variance-vs-time}, we plot the estimated average state
derivative (plus and minus the estimation standard deviation), as a function
of the iteration index. Figure \ref{Figure-variance-vs-time}a) refers to the
case in which there is only observation noise, but there is no noise in the
exchange of information among the nodes. Conversely, Figure \ref%
{Figure-variance-vs-time}b) refers to the case where the node
interaction is noisy, so that the state evolution of each sensor is
described by the state equation
$\dot{z}_{i}(t)=\dot{x}_{i}(t)+v_{i}(t)$, with $\dot{x}_{i}(t)$
given by (\ref{linear delayed system}), where $v_{i}(t)$ is white
Gaussian noise; the SNR is $|\xi |^{2}/\sigma _{w}^{2}=20$ dB. The
averages are taken across all the nodes, for $100$ independent
realizations of the network, where, in each realization we generated
a new topology and a new set of channel coefficients and noise
terms. The results refer to the following cases of interest: a) ML
estimate achieved with a centralized system, with no communication
errors between nodes and fusion center (dotted lines); b) estimate
achieved with the proposed method, with no propagation delays, as a
benchmark term (dashed and dotted lines plus $\times $ marks for the
average value); c) estimate achieved with the proposed method, in
the presence of propagation delays (dashed lines plus triangles for
the average value); d) estimate achieved with the two-step
estimation method leading to (\ref{ratio}) (solid lines plus circles
for the average value).
\begin{figure}[tbp]
\centering\vspace{-0.4cm}
\subfigure[\,] {\includegraphics[height=6cm,
width=7.6cm]{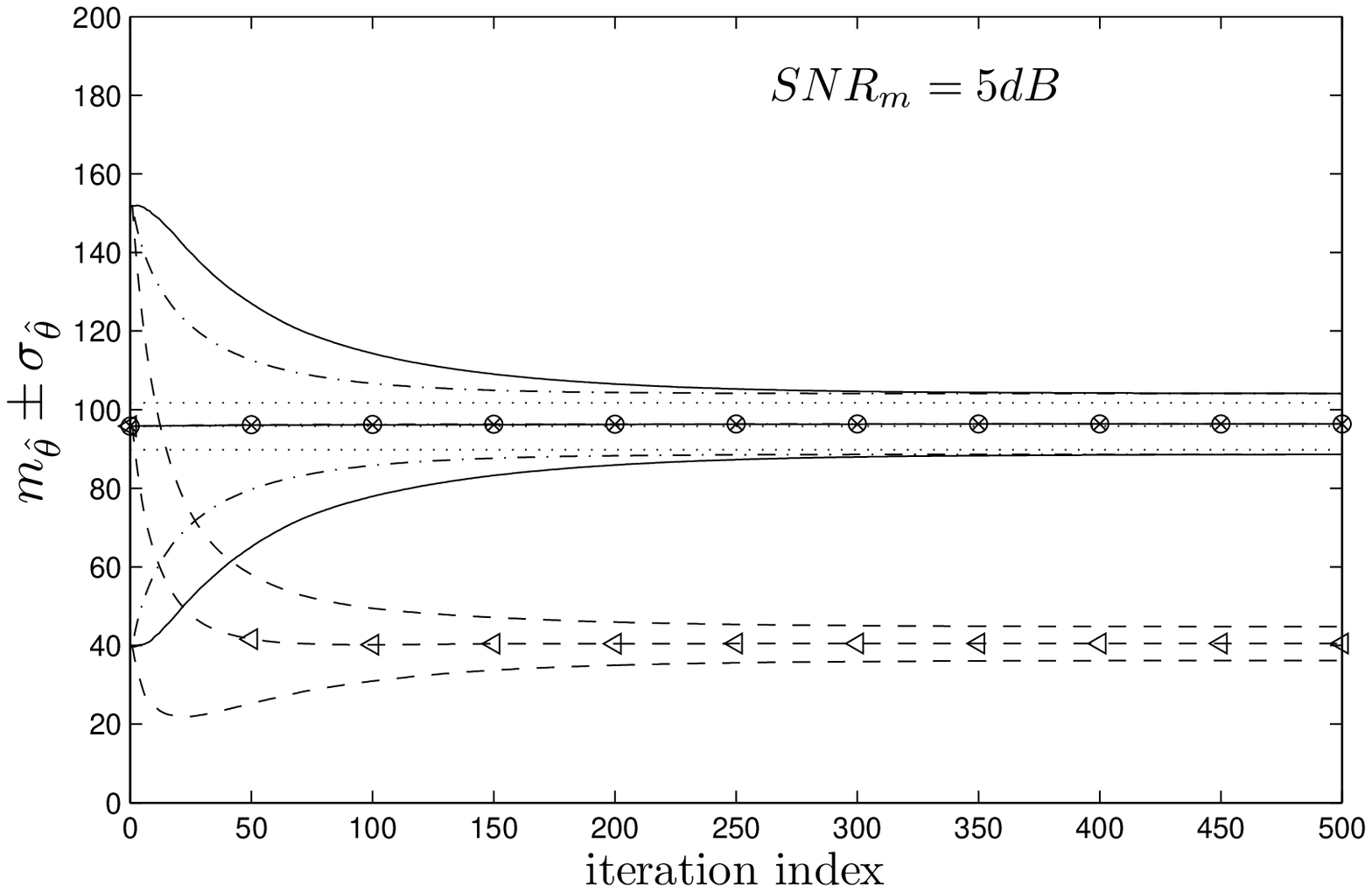} \hfill}\hspace{0.2cm}
\subfigure[]{\includegraphics[height=6cm,
width=7.9cm]{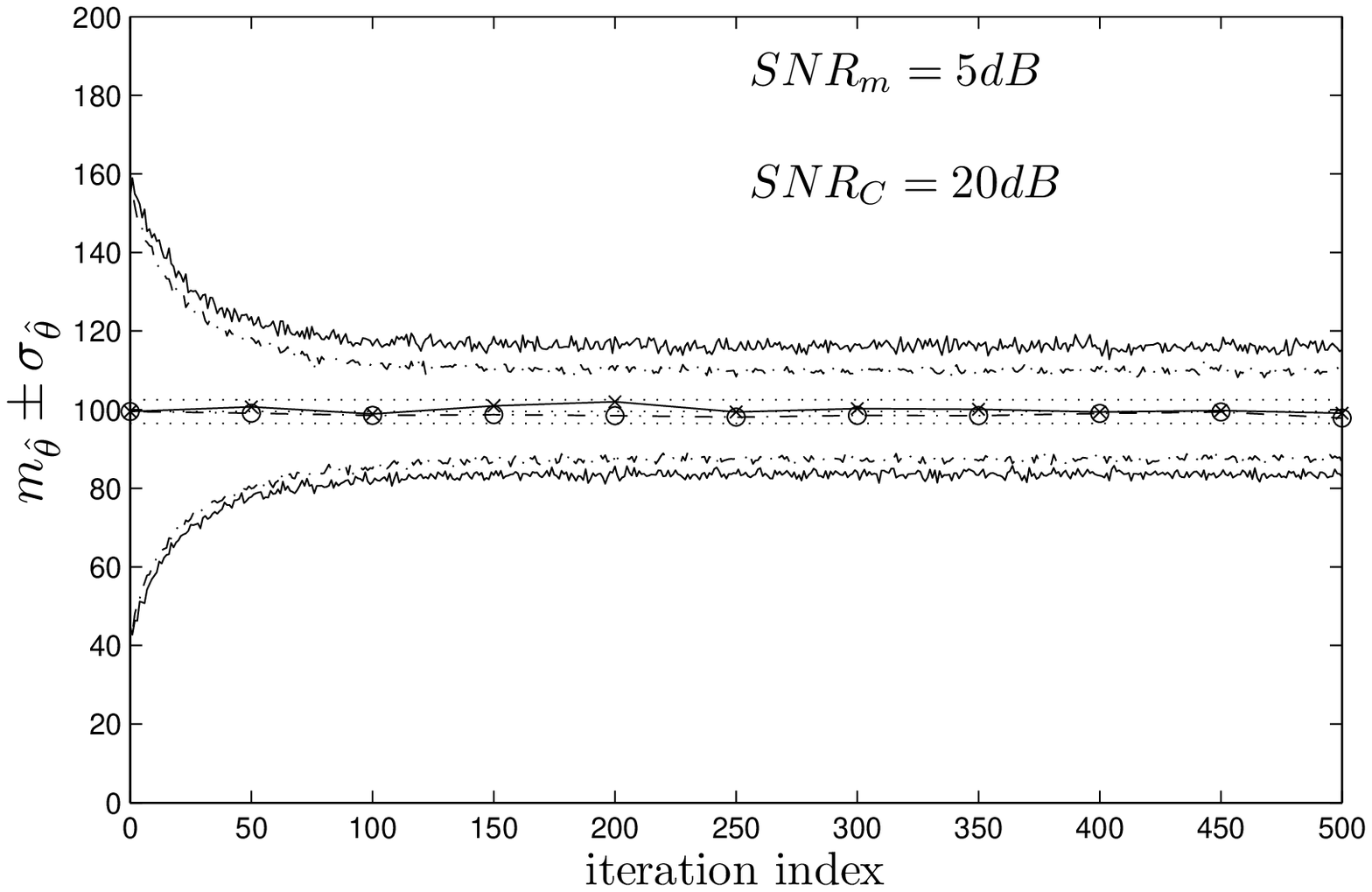}
\hfill}%\vspace{-0.4cm}
\vspace{-0.5cm}
\caption{{\protect\small Estimated parameter vs. convergence time in the
absence [subplot a)] and in the presence [subplot b)] of receiver noise: a)
ML estimate by a centralized system (dotted lines); b) estimate by the
proposed method, with no propagation delays, (dashed and dotted lines plus
\textquotedblleft $\times $\textquotedblright for the average value); c)
estimate by the proposed method, with propagation delays (dashed lines plus
\textquotedblleft $\lhd $\textquotedblright \thinspace\ for the average
value); d) estimate by the two-step estimation method leading to (\protect
\ref{ratio}) (solid lines plus \textquotedblleft $\circ $\textquotedblright
\thinspace\ for the average value). }}
\label{Figure-variance-vs-time}
\end{figure}
%\begin{figure}[tbh]
%\vspace{-0.4cm}\centering%
%\includegraphics[height=7.3cm]{distrib_est_performance}\vspace{-0.5cm}
%\caption{{\protect\small Estimated parameter vs. convergence time: a) ML estimate
%by a centralized system (dotted lines); b) estimate by the
%proposed method, with no propagation delays, (dashed and
%dotted lines plus \textquotedblleft $\times$\textquotedblright for the average value); c) estimate
%by the proposed method, with propagation delays (dashed lines
%plus \textquotedblleft $\star$\textquotedblright\, for the average value); d) estimate by the two-step
%estimation method leading to (\ref{ratio}) (solid lines plus \textquotedblleft $\circ$\textquotedblright\, for the
%average value). }}
%\label{Figure-variance-vs-time}
%\end{figure}
From Figure \ref{Figure-variance-vs-time} we can see that, in the
absence of delays, the (decentralized) iterative algorithm based on
(\ref{linear delayed system}) behaves, asymptotically, as the
(centralized) globally optimal ML estimator. In the presence of
delays, we observe a clear bias (dashed lines), due to the large
delay values, but with a final estimation variance still close to
the ML estimator's. Interestingly, the two-step procedure leading to
(\ref{ratio}) provides results very close to the optimal ML
estimator, with no apparent bias, in spite of the large delays and
the random channel fading coefficients. The only price paid with the
two-step procedure, besides time, is a slight increase of the
variance due to taking the ratio of two noisy consensus values, as evidenced in Fig. \ref%
{Figure-variance-vs-time} b).

\section{Concluding remarks}

In conclusion, in this paper we have proposed a totally
decentralized sensor network scheme capable to reach globally
optimal decision tests through local exchange of information among
the nodes, in the presence of asymmetric communication channels and
inhomogeneous time-invariant propagation delays. The method is
particularly useful for applications where the goal of the network
is to take decisions about a common event. Differently from the
average consensus protocols available in the literature, our system
globally synchronizes for \textit{any} set of (finite) propagation
delays \emph{if and only if} the underlying digraph is QSC, with a
final synchronized state that is a known function of the sensor
measurements. In general, the synchronized state depends on the
propagation parameters, such as delays and communication channels.
Nevertheless, exploiting our closed form expression for the final
consensus values, we have shown how to recover an unbiased estimate,
for any set of delays and channel coefficients, without the need to
knowing or estimating these coefficients. This desirable result is a
distinctive property of the consensus achievable on the state
derivative and cannot be obtained using classical
consensus/agreement algorithms that reach a consensus on the state
variables. If we couple the nice properties
mentioned above with the properties reported in \cite%
{Barbarossa-Scutari-Journal}, where we showed that, in the absence of
delays, the consensus protocol proposed in this paper and in \cite%
{Barbarossa-Scutari-Journal} is also robust against coupling noise, we have,
overall, a good candidate for a distributed sensor network.

As in many engineering problems, the advantages of our scheme come with
their own prices. Three issues that deserve further investigations are the
following: i) the states grow linearly with time; ii) the coefficients $%
a_{ij}$ are nonnegative; and iii) a change of topology affects the
convergence properties of the proposed scheme. The first issue has an impact
on the choice of the radio interface responsible for the exchange of
information between the nodes. To avoid the need for transmitting with a
high dynamic range, the nodes must transmit a nonlinear, bounded function of
the state value. One possibility, as proposed in \cite%
{Barbarossa-Scutari-Magazine}, is to associate the state value to the phase
of sinusoidal carrier or to the time shift of a pulse oscillator. The second
point requires that the receiver be able to compensate for possible sign
inversions. As far as the switching topology is concerned, it would be
useful to devise methods aimed at increasing the resilience of our method
against topology changes. But it is worth keeping in mind that the previous
aspects are only the reverse of the medal of a method capable to achieve a
globally optimal decision for any set of delays and for asymmetric channels.

\section{Appendix}

\appendix\vspace{-0.3cm}

\section{Directed Graphs}

\label{Appendix_overview-Graph Theory} The interaction among the sensors is
properly described by a directed graph. For the reader's convenience, in
this section, we briefly review the notation and basic results of graph
theory that will be used throughout this paper. For the reader interested in
a more in-depth study of this field, we recommend, for example, \cite%
{Fielder}$-$\cite{Godsil-Royle book}.\vspace{-0.2cm}

\subsection{Basic Definitions\label{Sec_basic_defs_digraphs}}

To take explicitly into account the possibility of unidirectional links
among the network nodes, we represent the information topology among the
nodes by their (weighted) directed graph.

\noindent \textbf{Directed graph}. Given $N$ nodes, a (weighted) directed
graph (or digraph) ${\mathscr{G}}$ is defined as ${\mathscr{G}=}\{{%
%TCIMACRO{\TeXButton{V}{\mathscr{V}}}%
%BeginExpansion
\mathscr{V}%
%EndExpansion
,%
%TCIMACRO{\TeXButton{E}{\mathscr{E}}}%
%BeginExpansion
\mathscr{E}%
%EndExpansion
}\},$ where ${%
%TCIMACRO{\TeXButton{V}{\mathscr{V}}}%
%BeginExpansion
\mathscr{V}%
%EndExpansion
}\triangleq \{v_{1},\ldots ,v_{N}\}$ is the set of nodes (or vertices) and ${%
%TCIMACRO{\TeXButton{E}{\mathscr{E}}}%
%BeginExpansion
\mathscr{E}%
%EndExpansion
}\subseteq {%
%TCIMACRO{\TeXButton{V}{\mathscr{V}}}%
%BeginExpansion
\mathscr{V}%
%EndExpansion
\times
%TCIMACRO{\TeXButton{V}{\mathscr{V}}}%
%BeginExpansion
\mathscr{V}%
%EndExpansion
}$ is a set of edges (i.e., \emph{ordered} pairs of the nodes), with the
convention that $e_{ij}\triangleq (v_{i},v_{j})\in $ ${%
%TCIMACRO{\TeXButton{E}{\mathscr{E}}}%
%BeginExpansion
\mathscr{E}%
%EndExpansion
}$ (i.e., $v_{i}$ and $v_{j}$ are the head and the tail of the edge $e_{ij},$
respectively) means that the information flows from $v_{j}$ to $v_{i}.$ A
digraph is weighted if a positive weight is associate to each edge,
according to a proper map ${%
%TCIMACRO{\TeXButton{W}{\mathscr{W}}}%
%BeginExpansion
\mathscr{W}%
%EndExpansion
:%
%TCIMACRO{\TeXButton{E}{\mathscr{E}}}%
%BeginExpansion
\mathscr{E}%
%EndExpansion
\rightarrow
%TCIMACRO{\U{211d} }%
%BeginExpansion
\mathbb{R}
%EndExpansion
}_{+},{\ }${such that }if $e_{ij}\triangleq (v_{i},v_{j})\in $ ${%
%TCIMACRO{\TeXButton{E}{\mathscr{E}}}%
%BeginExpansion
\mathscr{E}%
%EndExpansion
,}$ {then} ${%
%TCIMACRO{\TeXButton{W}{\mathscr{W}}}%
%BeginExpansion
\mathscr{W}%
%EndExpansion
(e_{ij})=a}_{ij}>0,$ otherwise ${a}_{ij}=0.$ We focus in the following on
weighted digraphs where the weights of loops $(v_{i},v_{i})$ are zero, i.e.,
$a_{ii}=0$ for all $i.$ \ If $(v_{i},v_{j})\in $ ${%
%TCIMACRO{\TeXButton{E}{\mathscr{E}}}%
%BeginExpansion
\mathscr{E}%
%EndExpansion
\Leftrightarrow }(v_{j},v_{i})\in $ ${%
%TCIMACRO{\TeXButton{E}{\mathscr{E}}}%
%BeginExpansion
\mathscr{E}%
%EndExpansion
}$ (and $a_{ij}=a_{ji},$ $\forall i\neq j$), then the graph is said to be
(weighted) \emph{undirected. }For any node $v_{i}\in {%
%TCIMACRO{\TeXButton{V}{\mathscr{V}}}%
%BeginExpansion
\mathscr{V}%
%EndExpansion
,}$\emph{\ }we define\emph{\ }the information neighbor of $v_{i}$ as
\begin{equation}
\mathcal{N}_{i}\triangleq \left\{ j=1,\ldots ,N:e_{ij}=(v_{i},v_{j})\in {%
%TCIMACRO{\TeXButton{E}{\mathscr{E}}}%
%BeginExpansion
\mathscr{E}%
%EndExpansion
}\right\} .  \label{neighbor_set}
\end{equation}%
The set $\mathcal{N}_{i}$ represents the set of indices of the nodes sending
data to node $i$.

The in-degree and out-degree of node $v_{i}\in {%
%TCIMACRO{\TeXButton{V}{\mathscr{V}}}%
%BeginExpansion
\mathscr{V}%
%EndExpansion
}$ are, respectively, defined as:%
\begin{equation}
\deg \nolimits_{\text{in}}(v_{i})\triangleq \dsum_{j=1}^{N}a_{ij},\quad
\text{and}\quad \deg _{\text{out}}(v_{i})\triangleq \dsum_{j=1}^{N}a_{ji}.
\label{in-out degree}
\end{equation}%
Observe that for undirected graphs, $\deg \nolimits_{\text{in}}(v_{i})=\deg
\nolimits_{\text{out}}(v_{i}).$

We may have the following class of digraphs.

\noindent \textbf{Balanced digraph}\emph{\ }The node $v_{i}$ of a digraph ${%
\mathscr{G}=}\{{%
%TCIMACRO{\TeXButton{V}{\mathscr{V}}}%
%BeginExpansion
\mathscr{V}%
%EndExpansion
,%
%TCIMACRO{\TeXButton{E}{\mathscr{E}}}%
%BeginExpansion
\mathscr{E}%
%EndExpansion
}\}$ is said to be \emph{balanced }if and only if its in-degree and
out-degree coincide, i.e., $\deg \nolimits_{\text{in}}(v_{i})=\deg
\nolimits_{\text{out}}(v_{i}).$ A digraph ${\mathscr{G}=}\{{%
%TCIMACRO{\TeXButton{V}{\mathscr{V}}}%
%BeginExpansion
\mathscr{V}%
%EndExpansion
,%
%TCIMACRO{\TeXButton{E}{\mathscr{E}}}%
%BeginExpansion
\mathscr{E}%
%EndExpansion
}\}$ is called \emph{balanced }if and only if all its nodes are balanced,
i.e.,
\begin{equation}
\dsum_{j=1}^{N}a_{ij}=\dsum_{j=1}^{N}a_{ji},\quad \forall i=1,\ldots ,N.
\label{balanced-cond}
\end{equation}

\noindent \textbf{Path/cycle }A \emph{strong path }(or directed chain) in a
digraph ${\mathscr{G}=}\{{%
%TCIMACRO{\TeXButton{V}{\mathscr{V}}}%
%BeginExpansion
\mathscr{V}%
%EndExpansion
,%
%TCIMACRO{\TeXButton{E}{\mathscr{E}}}%
%BeginExpansion
\mathscr{E}%
%EndExpansion
}\}$ is a sequence of \emph{distinct }nodes $v_{0},\ldots ,v_{q}\in {%
%TCIMACRO{\TeXButton{V}{\mathscr{V}}}%
%BeginExpansion
\mathscr{V}%
%EndExpansion
}$ such that $(v_{i},v_{i-1})\in $ ${%
%TCIMACRO{\TeXButton{E}{\mathscr{E}}}%
%BeginExpansion
\mathscr{E}%
%EndExpansion
,}$ $\forall i=1,\ldots ,q.$ If $v_{0}\equiv v_{q},$ the path is said to be
\emph{closed. }A \emph{weak }path is a sequence of \emph{distinct }nodes $%
v_{0},\ldots ,v_{q}\in {%
%TCIMACRO{\TeXButton{V}{\mathscr{V}}}%
%BeginExpansion
\mathscr{V}%
%EndExpansion
}$ such that either $(v_{i-1},v_{i})\in $ ${%
%TCIMACRO{\TeXButton{E}{\mathscr{E}}}%
%BeginExpansion
\mathscr{E}%
%EndExpansion
}$ or $(v_{i},v_{i-1})\in $ ${%
%TCIMACRO{\TeXButton{E}{\mathscr{E}}}%
%BeginExpansion
\mathscr{E}%
%EndExpansion
,}$ $\forall i=1,\ldots ,q.$ A\ (strong) \emph{cycle }(or circuit) is a
closed (strong) path.

\noindent \textbf{Directed tree/forest}\emph{\ } A digraph with $N$ nodes is
a (rooted) \emph{directed tree }if it has $N-1$ edges and there exists a
distinguished node, called the \emph{root }node,\emph{\ }which can reach
\emph{all} the other nodes by a (unique) strong path. Thus a directed tree
cannot have cycles and every node, except the root, has one and only one
incoming edge.\footnote{%
Observe that some literature (e.g., \cite{Lin-Francis, Jin-Murray}) defines
this concept using the opposite convention for the orientation of the edges.}
A digraph is a (directed) \emph{forest }if it consists of one or more
directed trees. A subgraph ${\mathscr{G}}_{s}{=}\{{%
%TCIMACRO{\TeXButton{V}{\mathscr{V}}}%
%BeginExpansion
\mathscr{V}%
%EndExpansion
}_{s}{,%
%TCIMACRO{\TeXButton{E}{\mathscr{E}}}%
%BeginExpansion
\mathscr{E}%
%EndExpansion
}_{s}\}$ of a digraph ${\mathscr{G}=}\{{%
%TCIMACRO{\TeXButton{V}{\mathscr{V}}}%
%BeginExpansion
\mathscr{V}%
%EndExpansion
,%
%TCIMACRO{\TeXButton{E}{\mathscr{E}}}%
%BeginExpansion
\mathscr{E}%
%EndExpansion
}\},$ with ${%
%TCIMACRO{\TeXButton{V}{\mathscr{V}}}%
%BeginExpansion
\mathscr{V}%
%EndExpansion
}_{s}\subseteq {%
%TCIMACRO{\TeXButton{V}{\mathscr{V}}}%
%BeginExpansion
\mathscr{V}%
%EndExpansion
}$ and ${%
%TCIMACRO{\TeXButton{E}{\mathscr{E}}}%
%BeginExpansion
\mathscr{E}%
%EndExpansion
}_{s}\subseteq {%
%TCIMACRO{\TeXButton{E}{\mathscr{E}}}%
%BeginExpansion
\mathscr{E}%
%EndExpansion
\cap }\left( {%
%TCIMACRO{\TeXButton{V}{\mathscr{V}}}%
%BeginExpansion
\mathscr{V}%
%EndExpansion
}_{s}\times {%
%TCIMACRO{\TeXButton{V}{\mathscr{V}}}%
%BeginExpansion
\mathscr{V}%
%EndExpansion
}_{s}\right) ,$ is a \emph{spanning }directed tree (or a \emph{spanning }%
directed\emph{\ }forest), if it is a directed tree (or a directed forest)
and it has the \emph{same} node set as ${\mathscr{G},}$ i.e., ${%
%TCIMACRO{\TeXButton{V}{\mathscr{V}}}%
%BeginExpansion
\mathscr{V}%
%EndExpansion
}_{s}\equiv {%
%TCIMACRO{\TeXButton{V}{\mathscr{V}}}%
%BeginExpansion
\mathscr{V}%
%EndExpansion
.}$ We say that a digraph ${\mathscr{G}=}\{{%
%TCIMACRO{\TeXButton{V}{\mathscr{V}}}%
%BeginExpansion
\mathscr{V}%
%EndExpansion
,%
%TCIMACRO{\TeXButton{E}{\mathscr{E}}}%
%BeginExpansion
\mathscr{E}%
%EndExpansion
}\}$ contains a spanning tree (or a spanning forest) if there exists a
subgraph of ${\mathscr{G}}$ that is a spanning directed tree (or a spanning
directed forest).

\noindent \textbf{Connectivity}\emph{. }In a digraph there are many
degrees of connectedness. In this paper we focus on the following. A
digraph is \emph{Strongly }Connected (SC) if, for \emph{every} pair
of nodes $v_{i}$ and $v_{j}$, there exists a strong path from
$v_{i}$ to $v_{j}$ and viceversa. A digraph is
\emph{Quasi}-\emph{Strongly }Connected (QSC) if, for \emph{every}
pair of nodes $v_{i}$ and $v_{j}$, there exists a node $r$ that can
reach both $v_{i}$ and $v_{j}$ by a \emph{strong} path. A digraph is
\emph{weakly }connected (WC) if any pair of distinct nodes can be
joined by a \emph{weak} path. A digraph is \emph{disconnected }if it
is not weakly\emph{\ }connected. \

According to the above definitions, it is straightforward to see that strong
connectivity implies quasi strong connectivity and that quasi strong
connectivity implies weak connectivity, but the converse, in general, does
not hold. For undirected graphs instead, the above notions of connectivity
are equivalent: An undirected graph is connected if any two distinct nodes
can be joined by a path. Moreover, it is easy to check that the quasi strong
connectivity of a digraph is equivalent to the existence of at least one
spanning directed tree in the graph (see, e.g., \cite[p. 133 ]%
{Berge-Houri-book}).
%As an example, in Figure \ref{Fig_example_graph} we draw three different
%digraphs. The digraph in $(a)$ is SC (and thus QSC), whereas the digraph in $%
%(b)$ is not SC, but it is QSC, as shown by the existence of a spanning tree
%(the dashed edges denote the edges of a spanning tree contained in the
%graph). The digraph in $(c)$ is not QSC but WC, since it does not contain
%any spanning tree.
%\begin{figure}[tbh]
%\centering\includegraphics[height=7cm]{graphs}\vspace{-0.5cm}
%\caption{{\protect\scriptsize Examples of digraphs: SC (subplot a), QSC
%(subplot b), WC (subplot c); example of graph reduction (subplot d).}}
%\label{Fig_example_graph}
%\end{figure}

\noindent \textbf{Condensation Digraph }When a digraph ${\mathscr{G}}$ is
WC, it may still contain strongly connected subgraphs. A maximal subgraph of
${\mathscr{G}}$ which is also SC is called \emph{Strongly Connected
Component }(SCC) of ${\mathscr{G}}${\footnote{%
Maximal means that there is no larger SC subgraph containing the nodes of
the considered component.} \cite{Brualdi-Ryser-book}.\ Using this concept,
any digraph }${\mathscr{G}=}\{{%
%TCIMACRO{\TeXButton{V}{\mathscr{V}}}%
%BeginExpansion
\mathscr{V}%
%EndExpansion
,%
%TCIMACRO{\TeXButton{E}{\mathscr{E}}}%
%BeginExpansion
\mathscr{E}%
%EndExpansion
}\}$ can be partitioned into SCCs, let us say ${\mathscr{G}}_{1}{\triangleq }%
\{{%
%TCIMACRO{\TeXButton{V}{\mathscr{V}}}%
%BeginExpansion
\mathscr{V}%
%EndExpansion
}_{1}{,%
%TCIMACRO{\TeXButton{E}{\mathscr{E}}}%
%BeginExpansion
\mathscr{E}%
%EndExpansion
}_{1}\},$ $\ldots ,{\mathscr{G}}_{K}{\triangleq }\{{%
%TCIMACRO{\TeXButton{V}{\mathscr{V}}}%
%BeginExpansion
\mathscr{V}%
%EndExpansion
}_{K}{,%
%TCIMACRO{\TeXButton{E}{\mathscr{E}}}%
%BeginExpansion
\mathscr{E}%
%EndExpansion
}_{K}\},$ where ${%
%TCIMACRO{\TeXButton{V}{\mathscr{V}}}%
%BeginExpansion
\mathscr{V}%
%EndExpansion
}_{j}\subseteq {%
%TCIMACRO{\TeXButton{V}{\mathscr{V}}}%
%BeginExpansion
\mathscr{V}%
%EndExpansion
}$ and ${%
%TCIMACRO{\TeXButton{E}{\mathscr{E}}}%
%BeginExpansion
\mathscr{E}%
%EndExpansion
}_{j}\subseteq {%
%TCIMACRO{\TeXButton{E}{\mathscr{E}}}%
%BeginExpansion
\mathscr{E}%
%EndExpansion
}$ denote the set of nodes and edges lying in the $j$-th SCC, respectively.
Using this structure, one can reduce the original digraph ${\mathscr{G}}$ to
the so called \emph{condensation }digraph ${\mathscr{G}}^{\star }{=}\{{%
%TCIMACRO{\TeXButton{V}{\mathscr{V}}}%
%BeginExpansion
\mathscr{V}%
%EndExpansion
}^{\star }{,%
%TCIMACRO{\TeXButton{E}{\mathscr{E}}}%
%BeginExpansion
\mathscr{E}%
%EndExpansion
}^{\star }\}{,}$ {by substituting each node set }${%
%TCIMACRO{\TeXButton{V}{\mathscr{V}}}%
%BeginExpansion
\mathscr{V}%
%EndExpansion
}_{i}$ of {each SCC }${\mathscr{G}}_{i}$ of ${\mathscr{G}}$ with a distinct
node $v_{i}^{\star }\in {%
%TCIMACRO{\TeXButton{V}{\mathscr{V}}}%
%BeginExpansion
\mathscr{V}%
%EndExpansion
}^{\star }$of ${\mathscr{G}}^{\star },$ and introducing an edge in ${%
\mathscr{G}}^{\star }$ from $v_{i}^{\star }$ to $v_{j}^{\star }$ if and only
if there exists \emph{some }edges from the $i$-th SCC ${\mathscr{G}}_{i}$
and the $j$-th SCC ${\mathscr{G}}_{j}$ {\cite[Ch. 3.2]{Brualdi-Ryser-book}.
An SCC that is reduced to the root node of a directed tree of the
condensation digraph is called \emph{Root }}SCC (RSCC).{\ Observe that, by
definition, the condensation digraph has no cycles \cite[Lemma 3.2.3]%
{Brualdi-Ryser-book}}. Building on this property, we have the following.
\vspace{-0.2cm}%\cite{Brualdi-Ryser-book}.

\begin{lemma}
\label{condensation-digraph-vertex-ordering}Let ${\mathscr{G}}^{\star }{=}\{{%
%TCIMACRO{\TeXButton{V}{\mathscr{V}}}%
%BeginExpansion
\mathscr{V}%
%EndExpansion
}^{\star }{,%
%TCIMACRO{\TeXButton{E}{\mathscr{E}}}%
%BeginExpansion
\mathscr{E}%
%EndExpansion
}^{\star }\}$ be the condensation digraph of $\mathscr{G},$ composed
by $K$ nodes. Then, the nodes of \ ${\mathscr{G}}^{\star }$ can
always be ordered
as $v_{1}^{\star },\ldots ,v_{K}^{\star }\in {%
%TCIMACRO{\TeXButton{V}{\mathscr{V}}}%
%BeginExpansion
\mathscr{V}%
%EndExpansion
}^{\star },$ so that all the existing edges in ${\mathscr{G}}^{\star }$ are
in the form%
\begin{equation}
(v_{i}^{\star },v_{j}^{\star })\in {%
%TCIMACRO{\TeXButton{E}{\mathscr{E}}}%
%BeginExpansion
\mathscr{E}%
%EndExpansion
}^{\star },\text{ \quad with \quad }1\leq j<i\leq K,
\label{Condensation_digraph_ordering}
\end{equation}%
where $v_{1}^{\star }$ has zero in-degree$.$\vspace{-0.2cm}
\end{lemma}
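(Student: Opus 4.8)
The plan is to prove the lemma by induction on the number $K$ of nodes of the condensation digraph, using the one structural fact already at our disposal: $\mathscr{G}^\star$ is acyclic (it contains no strong cycle, as recalled just before the statement). The lemma is essentially the classical fact that a directed acyclic graph admits a topological ordering; the only thing that needs care is matching the paper's head/tail convention so that ``zero in-degree'' of the first node and the inequality $j<i$ in (\ref{Condensation_digraph_ordering}) come out the right way. Recall that $(v_i^\star,v_j^\star)\in\mathscr{E}^\star$ means the edge is directed from the tail $v_j^\star$ to the head $v_i^\star$, and that $\deg_{\text{in}}$ counts heads; so a node of zero in-degree is exactly a ``source'', the node we want to label $v_1^\star$.

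First I would show that every nonempty acyclic digraph has a node of zero in-degree. Suppose not: in $\mathscr{G}^\star$ every node $v_i^\star$ has $\deg_{\text{in}}(v_i^\star)\ge 1$, i.e.\ is the head of at least one edge. Starting from an arbitrary node $w_0$ and repeatedly choosing, for each $t$, a node $w_{t+1}$ with $(w_t,w_{t+1})\in\mathscr{E}^\star$ (possible by the standing assumption), I obtain nodes $w_0,w_1,\ldots,w_K$, that is $K+1$ nodes in a set of size $K$; by the pigeonhole principle two of them coincide, which yields a closed chain of edges, hence a strong cycle in $\mathscr{G}^\star$, contradicting acyclicity. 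So some node has zero in-degree; label it $v_1^\star$.

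Next I would peel $v_1^\star$ off. The subgraph $\mathscr{G}'$ obtained from $\mathscr{G}^\star$ by deleting $v_1^\star$ and all edges incident with it has $K-1$ nodes and is still acyclic, since deleting nodes and edges cannot create a cycle. If $K=1$ there is nothing more to do (the condensation digraph has no loops, so $v_1^\star$ has zero in-degree vacuously). Otherwise, by the induction hypothesis the nodes of $\mathscr{G}'$ can be labelled $v_2^\star,\ldots,v_K^\star$ so that every edge of $\mathscr{G}'$ has the form $(v_i^\star,v_j^\star)$ with $2\le j<i\le K$. It remains to check that restoring $v_1^\star$ with index $1$ preserves the required form: since $v_1^\star$ has zero in-degree in $\mathscr{G}^\star$, there is no edge $(v_1^\star,v_k^\star)$, and the only edges incident with $v_1^\star$ are of the form $(v_i^\star,v_1^\star)$ with $i\ge 2$, which satisfy $(v_i^\star,v_j^\star)$ with $j=1<i$. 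Hence every edge of $\mathscr{G}^\star$ is of the stated form and $v_1^\star$ has zero in-degree, completing the induction.

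I do not expect a genuine obstacle here; the argument is routine, and the only real care-point is the bookkeeping of edge orientation described in the first paragraph, so that ``source'' and ``zero in-degree'' really coincide and the inequality ends up as $j<i$ rather than $i<j$. An equivalent, non-inductive phrasing is to construct the ordering directly by repeatedly extracting a zero-in-degree node from what remains of the digraph; in the final write-up I would present whichever version is shorter.
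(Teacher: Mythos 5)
Your proof is correct and is essentially the paper's argument: the paper establishes the ordering by the same iterative procedure of repeatedly extracting a zero-in-degree node from the reduced acyclic digraph, which is just the non-inductive phrasing you mention at the end. The only difference is that you explicitly justify, via the pigeonhole walk, that an acyclic digraph must contain a zero-in-degree node, a step the paper asserts without proof.
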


The ordering $v_{1}^{\star },\ldots ,v_{K}^{\star }$ satisfying (\ref%
{Condensation_digraph_ordering}) can be obtained by the following iterative
procedure. Starting from $v_{1}^{\star },$ remove $v_{1}^{\star }$ and all
its out-coming edges from ${\mathscr{G}}^{\star }.$ Since the reduced
digraph with $K-1$ nodes has no (strong) cycles by construction, there must
exist a node with zero in-degree in it. Let us denote such a node by $%
v_{2}^{\star }.$ Then, no edges in the form $(v_{2}^{\star },v_{j}^{\star
}), $ with $j>2,$ can exist in the reduced digraph (and thus in ${\mathscr{G}%
}^{\star }$). This justifies (\ref{Condensation_digraph_ordering}) for $i=2$
and $j=1,2.$ The rest of (\ref{Condensation_digraph_ordering}), for $j>2$,
is obtained by repeating the same procedure on the remaining nodes.

The connectivity properties of a digraph are related to the structure of its
\emph{condensation }digraph, as given in the following Lemma (we omit the
proof because of space limitations).\vspace{-0.2cm}

\begin{lemma}
\label{condensation-digraph}Let ${\mathscr{G}}^{\star }$ be the condensation%
\emph{\ }digraph of ${\mathscr{G}.}$ {Then: i) }${\mathscr{G}}${\ is SC if
and only if }${\mathscr{G}}^{\star }$is composed by a single node; ii) ${%
\mathscr{G}}${\ is QSC if and only if }${\mathscr{G}}^{\star }$contains a
spanning\ directed tree; iii) if ${\mathscr{G}}${\ is WC, then }${\mathscr{G}%
}^{\star }${\ contains either a spanning directed tree or a (weakly)
connected directed forest}.\vspace{-0.2cm}
\end{lemma}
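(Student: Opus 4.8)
The plan is to reduce all three parts to a single \emph{reachability correspondence} between $\mathscr{G}$ and its condensation $\mathscr{G}^\star$, and then read off each statement. Write $\pi$ for the map sending each node of $\mathscr{G}$ to the node $v_k^\star$ of $\mathscr{G}^\star$ that represents the SCC containing it; $\pi$ is surjective by construction. The engine of the proof is the claim: for any nodes $u,v$ of $\mathscr{G}$, $u$ reaches $v$ by a strong path in $\mathscr{G}$ if and only if either $\pi(u)=\pi(v)$ or $\pi(u)$ reaches $\pi(v)$ by a strong path in $\mathscr{G}^\star$. I would establish this first, since parts i) and ii) follow almost immediately from it, while part iii) needs in addition the preservation of weak connectivity and the acyclic ordering of Lemma \ref{condensation-digraph-vertex-ordering}.

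To prove the correspondence, the forward direction is routine: a strong path $u=w_0,\ldots,w_q=v$ projects to the sequence $\pi(w_0),\ldots,\pi(w_q)$; deleting consecutive repetitions (which occur exactly when two successive $w$'s lie in the same SCC) and using the defining rule of $\mathscr{E}^\star$ for the remaining steps yields a strong walk in $\mathscr{G}^\star$ from $\pi(u)$ to $\pi(v)$, which collapses to a strong path because $\mathscr{G}^\star$ is acyclic. The converse is the \emph{lifting} step and is the main technical point: given a strong path $\pi(u)=s_0,\ldots,s_m=\pi(v)$ in $\mathscr{G}^\star$, each edge $s_{\ell-1}\to s_\ell$ comes, by definition of the condensation, from an actual edge between some node $p_\ell$ of the SCC of $s_{\ell-1}$ and some node $q_\ell$ of the SCC of $s_\ell$; I then stitch these edges together by travelling inside each SCC from the entry node to the required exit node, which is possible because every SCC is strongly connected, and I also travel inside the first SCC from $u$ and inside the last SCC to $v$. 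This produces a strong path (or walk, reducible to a path) from $u$ to $v$ in $\mathscr{G}$. Care is needed here with the paper's convention that $(v_i,v_j)\in\mathscr{E}$ encodes flow from $v_j$ to $v_i$, so I would fix the orientation once and track it consistently throughout.

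Parts i) and ii) then follow. For i): if $\mathscr{G}$ is SC it is itself a strongly connected subgraph that is maximal, hence the unique SCC, so $\mathscr{G}^\star$ has a single node; conversely a single node means $\mathscr{G}$ consists of one SCC, i.e.\ is SC. For ii): by the equivalence quoted earlier, $\mathscr{G}$ is QSC iff some node $r$ reaches every node of $\mathscr{G}$ (the root of a spanning directed tree). By the correspondence together with surjectivity of $\pi$, this holds iff $\pi(r)$ reaches every node of $\mathscr{G}^\star$, i.e.\ iff $\mathscr{G}^\star$ contains a node reaching all others, which by the same equivalence applied to $\mathscr{G}^\star$ means $\mathscr{G}^\star$ contains a spanning directed tree.

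For iii) I would first note that weak connectivity is preserved under condensation: a weak path in $\mathscr{G}$ projects (again deleting repetitions) to a weak walk in $\mathscr{G}^\star$ joining the images, so $\mathscr{G}$ WC forces $\mathscr{G}^\star$ WC. Since $\mathscr{G}^\star$ is an acyclic weakly connected digraph, I invoke the ordering $v_1^\star,\ldots,v_K^\star$ of Lemma \ref{condensation-digraph-vertex-ordering} and build a spanning subgraph by selecting, for each node of positive in-degree, exactly one of its incoming edges (the sources, which are precisely the RSCCs, receive none). The result has every in-degree at most one and is acyclic, hence is a directed forest spanning $\mathscr{V}^\star$; its number of trees equals the number of sources. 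If there is a single source this forest is a spanning directed tree (the QSC case), and otherwise it is a genuine spanning directed forest while $\mathscr{G}^\star$ itself remains weakly connected --- which is exactly the ``spanning directed tree or (weakly) connected directed forest'' alternative asserted. The only real obstacle is the lifting step of the correspondence and the bookkeeping of the reversed edge convention; the forest construction and the deductions for i)--ii) are then straightforward, and I would flag that ``(weakly) connected'' in iii) refers to $\mathscr{G}^\star$, not to the forest subgraph (a forest with more than one tree is never weakly connected).
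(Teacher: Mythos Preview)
The paper does not actually prove this lemma: immediately before the statement it says ``we omit the proof because of space limitations.'' So there is no original argument to compare against, and your proposal must be judged on its own.

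Your approach is sound. The reachability correspondence between $\mathscr{G}$ and $\mathscr{G}^\star$ via the quotient map $\pi$ is the standard and natural device, and your two directions (projection and lifting) are argued correctly; the acyclicity of $\mathscr{G}^\star$ indeed guarantees that the projected walk is a path. Parts i) and ii) then drop out as you say, using the equivalence between QSC and existence of a spanning directed tree that the paper quotes from \cite{Berge-Houri-book}. For part iii), your preservation-of-weak-connectivity argument and the ``pick one incoming edge per non-source node'' construction of the spanning forest are both correct; the number of trees equals the number of RSCCs, exactly as used later in the paper (e.g.\ in Lemma~\ref{Lemma_spanning-forest} and in the necessity part of Theorem~\ref{Theorem_delay-linear_stability}).

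Your closing remark about the phrase ``(weakly) connected directed forest'' is well taken: a directed forest with two or more trees is by definition not weakly connected as a subgraph, so the qualifier can only sensibly attach to $\mathscr{G}^\star$ itself. Your reading---$\mathscr{G}^\star$ is weakly connected and contains a spanning directed forest (which is a single tree precisely in the QSC case)---is the one that makes the lemma true and is consistent with how the paper uses it downstream.
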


%In Figure \ref{Fig_example_graph} d) we show an example of graph reduction.
%The original graph contains three SCCs, whose node sets are $\{1,2,6\},$ $%
%\{3\}$ and $\{4,5\}$. The condensation digraph is then composed by three
%nodes and, since the original graph is QSC, it contains a directed spanning
%tree, whose {\emph{root }}SCC is given by the node set $\{1,2,6\}$ of the
%original graph.
The concept of condensation digraph is useful to understand the network
synchronization behavior, as shown in Section \ref{Sec:Sync-with-delays}. \
A similar idea was already used in \cite{Chai-Wu-2005, Jin-Murray} to study
leadership problems in coordinated multi-agent systems.\vspace{-0.2cm}

\subsection{Spectral properties of a Digraph\label%
{Sec:Spectral_properties_of_digraph}}

We recall now some basic relationships between the connectivity properties
of the digraph and the spectral properties of the matrices associated to the
digraph, since they play a fundamental role in the stability analysis of the
system proposed in this paper. In the following, we denote by $\mathbf{1}%
_{N} $ and $\mathbf{0}_{N}$ the $N$-length column vector of all ones and
zeros, respectively.

Given a digraph ${\mathscr{G}=}\{{%
%TCIMACRO{\TeXButton{V}{\mathscr{V}}}%
%BeginExpansion
\mathscr{V}%
%EndExpansion
,%
%TCIMACRO{\TeXButton{E}{\mathscr{E}}}%
%BeginExpansion
\mathscr{E}%
%EndExpansion
}\},$ we introduce the following matrices associated to ${\mathscr{G}}$:

\begin{itemize}
\item The $N \times N$ \emph{adjacency matrix} $\mathbf{A}$ is composed of
entries $\left[ \mathbf{A}\right] _{ij}\triangleq a_{ij},$ $i,j=1,\ldots ,N$
, equal to the weight associated with the edge $e_{ij}$, if $e_{ij}\in $ ${%
%TCIMACRO{\TeXButton{E}{\mathscr{E}}}%
%BeginExpansion
\mathscr{E}%
%EndExpansion
,}$ or equal to zero, otherwise;

\item The \emph{degree } \emph{matrix} $%
%TCIMACRO{\TeXButton{Delta-bold}{\boldsymbol{\Delta}}}%
%BeginExpansion
\boldsymbol{\Delta}%
%EndExpansion
$ is a diagonal matrix with diagonal entries $\left[
%TCIMACRO{\TeXButton{Delta-bold}{\boldsymbol{\Delta}}}%
%BeginExpansion
\boldsymbol{\Delta}%
%EndExpansion
\right] _{ii}\triangleq \deg _{\text{in}}(v_{i})$;

\item The (weighted) \emph{Laplacian } $\mathbf{L}$ is defined as%
\begin{equation}
\left[ \mathbf{L}\right] _{ij}\triangleq \left\{
\begin{array}{ll}
\dsum_{k\neq i=1}^{N}a_{ik}, & \text{if }j=i, \\
-a_{ij} & \text{if }j\neq i.%
\end{array}%
\right.  \label{Laplacian}
\end{equation}%
Using the adjacency matrix $\mathbf{A}$ and the degree matrix $%
%TCIMACRO{\TeXButton{Delta-bold}{\boldsymbol{\Delta}}}%
%BeginExpansion
\boldsymbol{\Delta}%
%EndExpansion
$, the Laplacian can be rewritten in compact form as $\mathbf{L}\triangleq
%TCIMACRO{\TeXButton{Delta-bold}{\boldsymbol{\Delta}}}%
%BeginExpansion
\boldsymbol{\Delta}%
%EndExpansion
-\mathbf{A.}$\footnote{%
Observe that the definition of Laplacian matrix as given in (\ref{Laplacian}%
) coincides with that used in the classical graph theory literature, except
for the convention we adopted in the orientation of the edges. This leads to
$\mathbf{L}$ expressed in terms of the in-degrees matrix, rather than of the
out-degrees matrix. Our choice is motivated by the physical interpretation
we gave to the edges' weights, as detailed in Section \ref{System-model}.}
\end{itemize}

By definition, the Laplacian matrix $\mathbf{L}$ in (\ref{Laplacian}) has
the following properties: i) it is a diagonally dominant matrix \cite%
{Horn-book}; ii) it has zero row sums; and iii) it has nonnegative diagonal
elements. From i)-iii), invoking Ger\v{s}gorin's disk Theorem \cite%
{Horn-book}, we have that zero is an eigenvalue of $\mathbf{L}$
corresponding to a right eigenvector in the $\limfunc{Null}\{\mathbf{L}%
\}\supseteq \limfunc{span}\{\mathbf{1}_{N}\},$ i.e.,
\begin{equation}
\mathbf{L\ 1}_{N}=\mathbf{0}_{N},  \label{zero-eig}
\end{equation}%
while all the other eigenvalues have positive real part. This also means
that $\limfunc{rank}(\mathbf{L})\leq N-1.$

Moreover, from (\ref{balanced-cond}) and (\ref{zero-eig}), it turns out that
\emph{balanced} digraphs can be equivalently characterized in terms of the
Laplacian matrix $\mathbf{L}$: A digraph is balanced if and only if $\mathbf{%
1}_{N}$ is also a left eigenvector of $\mathbf{L}$ associated with the zero
eigenvalue, i.e.,
\begin{equation}
\mathbf{1}_{N}^{T}\mathbf{L}=\mathbf{0}_{N}^{T},  \label{eq-balanced-cond}
\end{equation}%
or equivalently $\frac{1}{2}(\mathbf{L}+\mathbf{L}^T)$ is positive
semidefinite.

The relationship between the connectivity properties of a digraph and the
spectral properties of its Laplacian matrix are given by the following.%
\vspace{-0.3cm}

\begin{lemma}
\label{Lemma_spanning-forest}Let ${\mathscr{G}=}\{{%
%TCIMACRO{\TeXButton{V}{\mathscr{V}}}%
%BeginExpansion
\mathscr{V}%
%EndExpansion
,%
%TCIMACRO{\TeXButton{E}{\mathscr{E}}}%
%BeginExpansion
\mathscr{E}%
%EndExpansion
}\}$ be a digraph with Laplacian matrix $\mathbf{L}.$ The multiplicity of
the zero eigenvalue of $\mathbf{L}$ is equal to the minimum number of
directed trees contained in a spanning directed forest of ${\mathscr{G}.}$
\end{lemma}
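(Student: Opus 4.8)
The plan is to reduce the spectral statement to a combinatorial one by block‑triangularizing $\mathbf{L}$ along the strongly connected components (SCCs) of $\mathscr{G}$. First I would invoke Lemma~\ref{condensation-digraph-vertex-ordering}: relabel the SCCs $\mathscr{G}_{1},\ldots,\mathscr{G}_{K}$ so that every edge of the condensation digraph $\mathscr{G}^{\star}$ runs from a higher‑ to a lower‑indexed node, with $v_{1}^{\star}$ of zero in‑degree, and group the rows and columns of $\mathbf{L}$ into the corresponding blocks. With the in‑degree convention of~(\ref{Laplacian}), the off‑diagonal block $(p,q)$ of $\mathbf{L}$ is nonzero only if some edge of $\mathscr{E}$ has its head in $\mathscr{G}_{p}$ and its tail in $\mathscr{G}_{q}$, hence only if $q<p$; so the relabelled $\mathbf{L}$ is block lower triangular and $\det(s\mathbf{I}-\mathbf{L})=\prod_{p=1}^{K}\det(s\mathbf{I}-\mathbf{L}_{pp})$. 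Reading off the diagonal entries gives $\mathbf{L}_{pp}=\mathbf{L}(\mathscr{G}_{p})+\mathbf{D}_{p}$, where $\mathbf{L}(\mathscr{G}_{p})$ is the Laplacian of the (strongly connected) subdigraph $\mathscr{G}_{p}$ and $\mathbf{D}_{p}\geq \mathbf{0}$ is the diagonal matrix whose $i$‑th entry is the total weight of the edges that node $i\in \mathscr{G}_{p}$ receives from lower‑indexed SCCs; thus $\mathbf{D}_{p}=\mathbf{0}$ exactly when $v_{p}^{\star}$ has in‑degree zero in $\mathscr{G}^{\star}$, i.e.\ when $\mathscr{G}_{p}$ is a root SCC.

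Next I would evaluate each diagonal block at $s=0$. If $\mathbf{D}_{p}=\mathbf{0}$ then $\mathbf{L}_{pp}=\mathbf{L}(\mathscr{G}_{p})$ is the Laplacian of a strongly connected digraph, for which $0$ is a \emph{simple} eigenvalue (Perron--Frobenius, using that $(cI-\mathbf{L}_{pp})$ is nonnegative and irreducible with positive eigenvector $\mathbf{1}$), so this block contributes $1$ to the multiplicity. If $\mathbf{D}_{p}\neq \mathbf{0}$ then $\mathbf{L}_{pp}$ is irreducible, weakly diagonally dominant by rows, and strictly diagonally dominant in at least one row, hence nonsingular by the theorem on irreducibly diagonally dominant matrices~\cite{Horn-book}, and it contributes $0$. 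Therefore the multiplicity $m$ of the zero eigenvalue of $\mathbf{L}$ equals the number of root SCCs of $\mathscr{G}^{\star}$, equivalently the number of in‑degree‑zero nodes of the acyclic digraph $\mathscr{G}^{\star}$.

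It then remains to identify $m$ with the minimum number of trees in a spanning directed forest of $\mathscr{G}$. For the lower bound: no node outside a root SCC $\mathscr{G}_{p}$ can reach a node of $\mathscr{G}_{p}$ by a strong path, since lifting such a path to $\mathscr{G}^{\star}$ would produce an edge entering $v_{p}^{\star}$, contradicting its zero in‑degree; hence in any spanning directed forest the root of the tree covering a node of $\mathscr{G}_{p}$ must itself lie in $\mathscr{G}_{p}$, and since the root SCCs are pairwise disjoint the forest has at least $m$ trees. For the upper bound I would pick one node $\rho_{p}$ in each root SCC; because every node of the acyclic digraph $\mathscr{G}^{\star}$ is reachable from some in‑degree‑zero node and each SCC is strongly connected, the set $\{\rho_{p}\}$ jointly reaches all of $\mathscr{V}$, and a simultaneous breadth‑first exploration from these $m$ nodes yields a spanning directed forest with exactly $m$ trees. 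Combining the two bounds with the spectral count proves the claim.

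The step I expect to be the main obstacle is the bookkeeping that turns Lemma~\ref{condensation-digraph-vertex-ordering} into the block lower‑triangular form together with the precise identity $\mathbf{L}_{pp}=\mathbf{L}(\mathscr{G}_{p})+\mathbf{D}_{p}$, since one must track carefully the head/tail convention of~(\ref{Laplacian}) when translating edges of $\mathscr{E}$ into edges of $\mathscr{G}^{\star}$; and, on the combinatorial side, the lower‑bound argument, where one has to argue rigorously that a root SCC forces a root of the forest to lie inside it. The nonsingularity of the perturbed blocks is routine once irreducible diagonal dominance is invoked, and the breadth‑first construction for the upper bound is standard.
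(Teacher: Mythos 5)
Your proof is correct, but it is worth noting that the paper does not actually prove this lemma at all: it simply states that it ``comes directly from Theorems 9 and 10 of \cite{Wu-Linear_Algebra}.'' What you have done is reconstruct a self-contained argument, and the machinery you use is precisely the machinery the paper itself develops for \emph{other} purposes: the SCC ordering of Lemma~\ref{condensation-digraph-vertex-ordering}, the block lower-triangular form of $\mathbf{L}$ with diagonal blocks $\mathbf{L}(\mathscr{G}_p)+\mathbf{D}_p$ as in (\ref{L_bar}) and (\ref{L_r_forest}), the simplicity of the zero eigenvalue for an SC block (Corollary~\ref{Lemma_spanning-tree-SC}), and the nonsingularity of the perturbed blocks as in (\ref{full_rank_l_k}) --- all of which the paper uses to prove Lemma~\ref{Lemma_eigenvector_Laplacian} and the necessity part of Theorem~\ref{Theorem_delay-linear_stability}, but never assembles into a proof of the eigenvalue count. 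Your version is arguably tighter than the paper's own handling of the analogous step: where the paper justifies $\limfunc{rank}(\mathbf{B}_k)=r_k$ somewhat tersely from $\limfunc{Null}(\mathbf{L}_k)=\limfunc{span}(\mathbf{1})$ and $\mathbf{D}_k\mathbf{1}\neq\mathbf{0}$, you invoke irreducible diagonal dominance, which is airtight; and you correctly obtain \emph{algebraic} multiplicity by factoring the characteristic polynomial over the blocks and using Perron--Frobenius for algebraic simplicity of the root blocks, rather than only a rank (geometric) count. The combinatorial identification of the number of in-degree-zero nodes of the acyclic condensation digraph with the minimum number of trees in a spanning directed forest (lower bound via ``no external node reaches a root SCC,'' upper bound via simultaneous search from one representative per root SCC) is also sound and is the part that the paper outsources entirely to \cite{Wu-Linear_Algebra}. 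In short: same decomposition the paper uses elsewhere, but assembled into an actual proof the paper omits.
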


\vspace{-0.4cm}

\begin{corollary}
\label{Lemma_spanning-tree}The zero eigenvalue of $\mathbf{L}$ is \emph{%
simple} if and only if ${\mathscr{G}}$ contains a spanning directed tree
(or, equivalently, it is QSC).%\vspace{-0.4cm}
\end{corollary}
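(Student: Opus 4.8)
The plan is to read the corollary directly off Lemma~\ref{Lemma_spanning-forest}. By definition, the zero eigenvalue of $\mathbf{L}$ is \emph{simple} precisely when its multiplicity equals $1$. Lemma~\ref{Lemma_spanning-forest} identifies this multiplicity with the least number of directed trees occurring in a spanning directed forest of ${\mathscr{G}}$. Now a spanning directed forest consists of exactly one tree if and only if that single tree is itself spanning, i.e. if and only if ${\mathscr{G}}$ contains a spanning directed tree; and since any (nonempty) digraph possesses a spanning directed forest with at least one tree, the condition "minimum $=1$" is equivalent to "${\mathscr{G}}$ contains a spanning directed tree." This yields the first "if and only if." I would also record the equivalent spectral phrasing: combining with the row-sum identity $\mathbf{L}\,\mathbf{1}_N=\mathbf{0}_N$ of~(\ref{zero-eig}), the zero eigenvalue is simple exactly when $\limfunc{rank}(\mathbf{L})=N-1$.

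For the parenthetical clause, I would invoke the classical equivalence already recalled in Appendix~\ref{Sec_basic_defs_digraphs} (and attributed to~\cite[p.~133]{Berge-Houri-book}): a digraph is QSC if and only if it contains a spanning directed tree. If a detailed argument is desired, one direction is immediate — a spanning directed tree with root $r$ makes $r$ reach every vertex, so for any pair $v_i,v_j$ the node $r$ reaches both by strong paths and ${\mathscr{G}}$ is QSC. For the converse I would first promote the pairwise "common reacher" in the definition of QSC to a global one: by Lemma~\ref{condensation-digraph}, QSC forces the (necessarily acyclic) condensation digraph to contain a spanning directed tree, whose unique source component supplies a single SCC, hence a vertex $\bar r$, reaching all SCCs and therefore all vertices of ${\mathscr{G}}$; a breadth-first exploration from $\bar r$ along strong paths then produces a spanning directed tree.

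Since all the spectral content is already packaged in Lemma~\ref{Lemma_spanning-forest}, the corollary itself has essentially no hard step. The only points requiring a line of care are the bookkeeping that a one-component spanning directed forest is the same object as a spanning directed tree, and the remark that "simple" here is meant in the same sense as the "multiplicity" of Lemma~\ref{Lemma_spanning-forest} — which is harmless because the zero eigenvalue of a Laplacian is semisimple, so algebraic and geometric multiplicities coincide. The genuinely substantive work — relating the multiplicity of the zero eigenvalue of $\mathbf{L}$ to the spanning-forest structure of ${\mathscr{G}}$ — lives in the proof of Lemma~\ref{Lemma_spanning-forest}, which the corollary is free to cite wholesale.
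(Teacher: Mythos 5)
Your proof is correct and follows essentially the route the paper intends: the corollary is read off directly from Lemma~\ref{Lemma_spanning-forest} (multiplicity one if and only if the minimal spanning directed forest consists of a single tree, i.e.\ a spanning directed tree exists), combined with the classical equivalence between quasi-strong connectivity and the existence of a spanning directed tree already recalled in Appendix~\ref{Sec_basic_defs_digraphs}. The paper itself writes out no more detail than this---it simply points to Lemma~\ref{Lemma_spanning-forest} and cites external references where the corollary was proved independently---so your write-up, including the careful remark that ``simple'' and the ``multiplicity'' of the lemma refer to the same notion, is if anything more complete than the original.
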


Lemma \ref{Lemma_spanning-forest} comes directly form Theorem 9 and Theorem
10 of \cite{Wu-Linear_Algebra}. Corollary \ref{Lemma_spanning-tree} was
independently proved in many papers, such as \cite[Corollary 2]{ren-et-al},
\cite[Lemma 2]{Lin-Francis}. Observe that, since the strong connectivity of
the digraph implies QSC, the results provided in \cite{Olfati-Saber,
Jin-Murray} for SC digraphs, can be obtained as special case of Corollary %
\ref{Lemma_spanning-tree}. Specifically, we have the following.

\begin{corollary}
\label{Lemma_spanning-tree-SC}Let ${\mathscr{G}=}\{{%
%TCIMACRO{\TeXButton{V}{\mathscr{V}}}%
%BeginExpansion
\mathscr{V}%
%EndExpansion
,%
%TCIMACRO{\TeXButton{E}{\mathscr{E}}}%
%BeginExpansion
\mathscr{E}%
%EndExpansion
}\}$ be a digraph with Laplacian matrix $\mathbf{L}.$ If ${\mathscr{G}}$ is
SC, then $\mathbf{L}$ has a \emph{simple} zero eigenvalue and a positive
left-eigenvector associated to the zero eigenvalue.
\end{corollary}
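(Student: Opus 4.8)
The plan is to separate the two assertions. For the first, observe that a strongly connected digraph is in particular quasi-strongly connected: given any pair $v_i,v_j$, strong connectivity supplies a strong path from (say) $v_i$ to both, so $r=v_i$ is a common root. Hence Corollary~\ref{Lemma_spanning-tree} applies and the zero eigenvalue of $\mathbf{L}$ is simple. What remains is to exhibit a strictly positive left eigenvector $\mathbf{\gamma}$ for the zero eigenvalue, i.e.\ $\mathbf{\gamma}^{T}\mathbf{L}=\mathbf{0}_{N}^{T}$ with $\gamma_{i}>0$ for all $i$.

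For this I would reduce $\mathbf{L}$ to a nonnegative irreducible matrix by a spectral shift. Set $c\triangleq\max_{i}[\mathbf{L}]_{ii}$ and $\mathbf{M}\triangleq c\,\mathbf{I}_{N}-\mathbf{L}$. From the definition (\ref{Laplacian}), the off-diagonal entries of $\mathbf{M}$ are $a_{ij}\ge 0$ and the diagonal entries are $c-[\mathbf{L}]_{ii}\ge 0$, so $\mathbf{M}\ge\mathbf{0}$ entrywise; moreover $\mathbf{M}$ shares the off-diagonal sparsity pattern of the adjacency matrix $\mathbf{A}$, so the strong connectivity of $\mathscr{G}$ is exactly the irreducibility of $\mathbf{M}$ (the standard dictionary between digraph connectivity and matrix irreducibility). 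Finally, $\mathbf{L}\,\mathbf{1}_{N}=\mathbf{0}_{N}$ (cf.\ (\ref{zero-eig})) gives $\mathbf{M}\,\mathbf{1}_{N}=c\,\mathbf{1}_{N}$, so all row sums of $\mathbf{M}$ equal $c$ and therefore $\rho(\mathbf{M})=c$.

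Now invoke the Perron--Frobenius theorem for irreducible nonnegative matrices: the spectral radius $\rho(\mathbf{M})=c$ is a simple eigenvalue and has a strictly positive left eigenvector $\mathbf{\gamma}$, unique up to positive scaling. Unwinding, $\mathbf{\gamma}^{T}\mathbf{M}=c\,\mathbf{\gamma}^{T}$ means $\mathbf{\gamma}^{T}(c\,\mathbf{I}_{N}-\mathbf{L})=c\,\mathbf{\gamma}^{T}$, i.e.\ $\mathbf{\gamma}^{T}\mathbf{L}=\mathbf{0}_{N}^{T}$ with all $\gamma_{i}>0$, which is the desired conclusion; simplicity of the Perron root of $\mathbf{M}$ also re-proves the simplicity of the zero eigenvalue of $\mathbf{L}$, since $\mu\mapsto c-\mu$ is a bijection between the spectra preserving multiplicities. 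I do not anticipate a real obstacle: the only mildly delicate points are the identity $\rho(\mathbf{M})=c$ — which follows from the constant row sums, or alternatively from Ger\v{s}gorin's theorem applied to $\mathbf{L}$ together with the fact, already recalled in the excerpt, that every nonzero eigenvalue of $\mathbf{L}$ has positive real part — and the equivalence between strong connectivity and irreducibility, which may simply be quoted; everything else is a direct application of Perron--Frobenius.
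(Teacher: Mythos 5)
Your proof is correct, but it takes a genuinely different route from the paper. The paper never proves this corollary directly: the simplicity of the zero eigenvalue is obtained as a special case of Corollary~\ref{Lemma_spanning-tree} (SC $\Rightarrow$ QSC), which in turn rests on Lemma~\ref{Lemma_spanning-forest}, attributed to Theorems 9 and 10 of \cite{Wu-Linear_Algebra}; the existence of a \emph{positive} left eigenvector is simply credited to \cite{Olfati-Saber, Jin-Murray} with no argument given. You reuse the same SC $\Rightarrow$ QSC reduction for simplicity, but you supply a self-contained proof of the positivity via the shift $\mathbf{M}=c\,\mathbf{I}_{N}-\mathbf{L}$ with $c=\max_i[\mathbf{L}]_{ii}$: nonnegativity and irreducibility of $\mathbf{M}$, constant row sums giving $\rho(\mathbf{M})=c$, and Perron--Frobenius. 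All the steps check out (the only degenerate case, $c=0$, forces an edgeless graph, which is not SC for $N\geq 2$ and is trivial for $N=1$). What the paper's route buys is brevity and uniformity -- the spanning-forest lemma is needed anyway for the WC/clustered case -- while your route buys a complete, elementary argument that makes the positivity claim verifiable rather than cited, and that independently re-derives the simplicity of the zero eigenvalue since $\mu\mapsto c-\mu$ is a multiplicity-preserving bijection of spectra. This is worth having, not least because the proof of Lemma~\ref{Lemma_eigenvector_Laplacian} invokes exactly this corollary.
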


According to Corollary \ref{Lemma_spanning-tree}, because of (\ref{zero-eig}%
), the Laplacian of a QSC digraph has an isolated eigenvalue equal to zero,
corresponding to a right eigenvector in the $\limfunc{span}\{\mathbf{1}%
_{N}\}.$ Observe that, for \emph{undirected }graphs, Corollary \ref%
{Lemma_spanning-tree-SC} can be stated as follows: $\limfunc{rank}(\mathbf{L}%
)=N-1$ if and only if ${\mathscr{G}}$ is connected \cite{Fielder}. For
directed graphs, instead, the \textquotedblleft only if\textquotedblright\
part does not hold.

We derive now the structure of the left-eigenvector $\mathbf{\gamma
}$ of the Laplacian matrix $\mathbf{L}$ associated to the zero
eigenvalue, as a function of the network topology. This result is
instrumental to prove the main theorem of this paper. We have the
following.\vspace{-0.2cm}

\begin{lemma}
\label{Lemma_eigenvector_Laplacian}Let ${\mathscr{G}=}\{{%
%TCIMACRO{\TeXButton{V}{\mathscr{V}}}%
%BeginExpansion
\mathscr{V}%
%EndExpansion
,%
%TCIMACRO{\TeXButton{E}{\mathscr{E}}}%
%BeginExpansion
\mathscr{E}%
%EndExpansion
}\}$ be a digraph with $N$ nodes and Laplacian matrix $\mathbf{L}.$ Assume
that ${\mathscr{G}}$ is QSC with $K$ SCC's ${\mathscr{G}}_{1}{\triangleq }\{{%
%TCIMACRO{\TeXButton{V}{\mathscr{V}}}%
%BeginExpansion
\mathscr{V}%
%EndExpansion
}_{1}{,%
%TCIMACRO{\TeXButton{E}{\mathscr{E}}}%
%BeginExpansion
\mathscr{E}%
%EndExpansion
}_{1}\},$ $\ldots ,{\mathscr{G}}_{K}{\triangleq }\{{%
%TCIMACRO{\TeXButton{V}{\mathscr{V}}}%
%BeginExpansion
\mathscr{V}%
%EndExpansion
}_{K}{,%
%TCIMACRO{\TeXButton{E}{\mathscr{E}}}%
%BeginExpansion
\mathscr{E}%
%EndExpansion
}_{K}\},$ with ${%
%TCIMACRO{\TeXButton{V}{\mathscr{V}}}%
%BeginExpansion
\mathscr{V}%
%EndExpansion
}_{i}\subseteq {%
%TCIMACRO{\TeXButton{V}{\mathscr{V}}}%
%BeginExpansion
\mathscr{V}%
%EndExpansion
,}$ ${%
%TCIMACRO{\TeXButton{E}{\mathscr{E}}}%
%BeginExpansion
\mathscr{E}%
%EndExpansion
}_{i}\subseteq {%
%TCIMACRO{\TeXButton{E}{\mathscr{E}}}%
%BeginExpansion
\mathscr{E}%
%EndExpansion
}$, $\left\vert {%
%TCIMACRO{\TeXButton{V}{\mathscr{V}}}%
%BeginExpansion
\mathscr{V}%
%EndExpansion
}_{i}\right\vert =r_{i}$ and $\sum_{i}r_{i}=N,$ numbered w.l.o.g. so that ${%
\mathscr{G}}_{1}$ coincides with the RSCC of $\mathscr{G}$. Then, the
left-eigenvector $\mathbf{\gamma =[\gamma }_{1},\ldots ,\mathbf{\gamma }_{N}%
\mathbf{]}^{T}$ of $\mathbf{L}$ associated to the zero eigenvalue has the
following structure%
\begin{equation}
\mathbf{\gamma }_{i}=\left\{
\begin{array}{ll}
>0, & \quad \text{iff }v_{i}\in {%
%TCIMACRO{\TeXButton{V}{\mathscr{V}}}%
%BeginExpansion
\mathscr{V}%
%EndExpansion
}_{1}{,} \\
=0, & \quad \text{otherwise.}%
\end{array}%
\right.
\end{equation}%
If ${\mathscr{G}}_{1}$ is also balanced, then $\mathbf{\gamma }%
_{r_{1}}\triangleq \mathbf{[\gamma }_{1},\ldots ,\mathbf{\gamma }_{r_{1}}%
\mathbf{]}^{T}\in \limfunc{span}\{\mathbf{1}_{r_{1}}\},$ where $r_{1}{%
\triangleq }\left\vert {%
%TCIMACRO{\TeXButton{V}{\mathscr{V}}}%
%BeginExpansion
\mathscr{V}%
%EndExpansion
}_{1}\right\vert .$
\end{lemma}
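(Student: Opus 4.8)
The plan is to establish the structure of the left null-vector $\boldsymbol{\gamma}$ of the Laplacian by exploiting the block-triangular form of $\mathbf{L}$ induced by the condensation digraph and Lemma~\ref{condensation-digraph-vertex-ordering}. First I would order the SCCs as $v_1^\star,\ldots,v_K^\star$ so that all edges of ${\mathscr{G}}^\star$ go from higher to lower index, i.e.\ $(v_i^\star,v_j^\star)\in{\mathscr{E}}^\star$ implies $j<i$, with $v_1^\star$ (the RSCC, by hypothesis) having zero in-degree. Grouping the vertices of ${\mathscr{G}}$ accordingly, the Laplacian becomes block lower-triangular,
\begin{equation}
\mathbf{L}=\begin{bmatrix}\mathbf{L}_{11} & \mathbf{0} & \cdots & \mathbf{0}\\ \mathbf{L}_{21} & \mathbf{L}_{22} & \cdots & \mathbf{0}\\ \vdots & & \ddots & \vdots\\ \mathbf{L}_{K1} & \mathbf{L}_{K2} & \cdots & \mathbf{L}_{KK}\end{bmatrix},
\end{equation}
where $\mathbf{L}_{ii}$ is an $r_i\times r_i$ block and the zero blocks above the diagonal reflect that no SCC receives edges from an SCC with smaller index. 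The diagonal block $\mathbf{L}_{11}$ is precisely the Laplacian of the (strongly connected) RSCC ${\mathscr{G}}_1$, hence by Corollary~\ref{Lemma_spanning-tree-SC} it has a simple zero eigenvalue with strictly positive left eigenvector. For $i\ge2$, the block $\mathbf{L}_{ii}$ is a strictly diagonally dominant M-matrix with at least one strictly dominant row (because ${\mathscr{G}}_i$ receives at least one incoming edge from outside, contributing to the in-degree on the diagonal but not to the within-block row sum), so $\mathbf{L}_{ii}$ is nonsingular.

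Next I would write $\boldsymbol{\gamma}^T\mathbf{L}=\mathbf{0}_N^T$ in block form $\sum_{i\ge k}\boldsymbol{\gamma}_i^T\mathbf{L}_{ik}=\mathbf{0}^T$ for each $k=1,\ldots,K$, where $\boldsymbol{\gamma}_i$ collects the components of $\boldsymbol{\gamma}$ on ${\mathscr{V}}_i$. Reading these equations from $k=K$ downward: the $k=K$ equation is $\boldsymbol{\gamma}_K^T\mathbf{L}_{KK}=\mathbf{0}^T$, and since $\mathbf{L}_{KK}$ is nonsingular, $\boldsymbol{\gamma}_K=\mathbf{0}$. Feeding this into the $k=K-1$ equation gives $\boldsymbol{\gamma}_{K-1}^T\mathbf{L}_{K-1,K-1}=\mathbf{0}^T$, forcing $\boldsymbol{\gamma}_{K-1}=\mathbf{0}$, and inductively $\boldsymbol{\gamma}_i=\mathbf{0}$ for all $i\ge2$. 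The $k=1$ equation then collapses to $\boldsymbol{\gamma}_1^T\mathbf{L}_{11}=\mathbf{0}^T$, so $\boldsymbol{\gamma}_1$ is the left null-vector of the RSCC's Laplacian, which by Corollary~\ref{Lemma_spanning-tree-SC} is unique up to scaling and has all entries strictly positive. This yields the claimed dichotomy: $\gamma_i>0$ iff $v_i\in{\mathscr{V}}_1$, and $\gamma_i=0$ otherwise.

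For the final assertion, if ${\mathscr{G}}_1$ is additionally balanced, then by the characterization in \eqref{eq-balanced-cond} its Laplacian satisfies $\mathbf{1}_{r_1}^T\mathbf{L}_{11}=\mathbf{0}_{r_1}^T$, so $\mathbf{1}_{r_1}$ is a left null-vector of $\mathbf{L}_{11}$; by uniqueness (simplicity of the zero eigenvalue, Corollary~\ref{Lemma_spanning-tree-SC}) we get $\boldsymbol{\gamma}_{r_1}=\boldsymbol{\gamma}_1\in\limfunc{span}\{\mathbf{1}_{r_1}\}$.

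**The main obstacle** I anticipate is the nonsingularity of the off-root diagonal blocks $\mathbf{L}_{ii}$, $i\ge2$: one must verify carefully that being non-root in the condensation digraph guarantees at least one row of $\mathbf{L}_{ii}$ is \emph{strictly} diagonally dominant (equivalently, that ${\mathscr{G}}_i$ truly has an incoming cross-edge), and then invoke the version of Ger\v{s}gorin's theorem / the theory of irreducibly or weakly diagonally dominant matrices that rules out $0$ as an eigenvalue. The subtlety is that ${\mathscr{G}}_i$ is strongly connected (hence $\mathbf{L}_{ii}$ restricted to within-block structure is \emph{irreducible} with zero row sums on its own), and a single strictly dominant row of an irreducible weakly diagonally dominant matrix forces nonsingularity — this is the precise lemma to cite. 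Everything else is bookkeeping with the block-triangular back-substitution and the already-established spectral facts about Laplacians of strongly connected digraphs.
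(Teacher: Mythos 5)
Your proof is correct and follows essentially the same route as the paper: order the SCCs via the condensation digraph so that $\mathbf{L}$ becomes block lower triangular with the RSCC's Laplacian as the leading diagonal block, show the remaining diagonal blocks are nonsingular, and back-substitute in $\boldsymbol{\gamma}^{T}\mathbf{L}=\mathbf{0}^{T}$ to kill all components outside the RSCC, finishing with Corollary \ref{Lemma_spanning-tree-SC} for positivity and the balanced case. The only minor variation is your justification of the nonsingularity of the non-root blocks (irreducible weak diagonal dominance with at least one strictly dominant row), where the paper instead argues that $\limfunc{Null}(\mathbf{L}_{k})=\limfunc{span}\{\mathbf{1}_{r_{k}}\}$ together with $\mathbf{D}_{k}\mathbf{1}_{r_{k}}\neq\mathbf{0}_{r_{k}}$ forces the rows of $\mathbf{L}_{k}+\mathbf{D}_{k}$ to be independent; your version is, if anything, the more carefully justified of the two.
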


\begin{proof}
Since the digraph ${\mathscr{G}}$ is QSC, the set ${%
%TCIMACRO{\TeXButton{V}{\mathscr{V}}}%
%BeginExpansion
\mathscr{V}%
%EndExpansion
}_{1}$ contains either all $N$ or $0<r_{1}<N$ nodes of ${\mathscr{G}.}$

In the former case, ${%
%TCIMACRO{\TeXButton{V}{\mathscr{V}}}%
%BeginExpansion
\mathscr{V}%
%EndExpansion
}_{1}\equiv {%
%TCIMACRO{\TeXButton{V}{\mathscr{V}}}%
%BeginExpansion
\mathscr{V}%
%EndExpansion
}$ and thus the digraph is SC, by definition. Hence, according to Corollary %
\ref{Lemma_spanning-tree-SC} (cf. Appendix \ref%
{Sec:Spectral_properties_of_digraph}), the Laplacian matrix $\mathbf{L(}%
\mathscr{G}\mathbf{)}$ has a simple zero eigenvalue with left-eigenvector $%
\mathbf{\gamma >0.}$ If, in addition, $\mathscr{G}$ is balanced (and thus
also SC), then $\mathbf{\gamma }\in \limfunc{span}\left\{ \mathbf{1}%
_{N}\right\} .$

We consider now the latter case, i.e., $0<r_{1}<N.$ According to Lemma %
\ref{condensation-digraph}, the condensation digraph ${\mathscr{G}}^{\star }{%
=}\{{%
%TCIMACRO{\TeXButton{V}{\mathscr{V}}}%
%BeginExpansion
\mathscr{V}%
%EndExpansion
}^{\star }{,%
%TCIMACRO{\TeXButton{E}{\mathscr{E}}}%
%BeginExpansion
\mathscr{E}%
%EndExpansion
}^{\star }\}$ of $\mathscr{G}$ contains a spanning directed tree composed by
$K$ nodes $v_{1}^{\star },\ldots ,v_{K}^{\star }\in {%
%TCIMACRO{\TeXButton{V}{\mathscr{V}}}%
%BeginExpansion
\mathscr{V}%
%EndExpansion
}^{\star }$ (associated to the $K$ SCCs ${\mathscr{G}}_{1},$ $\ldots ,{%
\mathscr{G}}_{K})$, with root node $v_{1}^{\star }$ (associated to the SCC ${%
\mathscr{G}}_{1}$ of ${\mathscr{G}}$). From Lemma \ref%
{condensation-digraph-vertex-ordering}, we assume, w.l.o.g., that the nodes $%
v_{1}^{\star },\ldots ,v_{K}^{\star }\in {%
%TCIMACRO{\TeXButton{V}{\mathscr{V}}}%
%BeginExpansion
\mathscr{V}%
%EndExpansion
}^{\star }$ are ordered according to (\ref{Condensation_digraph_ordering}),
so that the Laplacian matrix of ${\mathscr{G}}^{\star }$ be a lower
triangular matrix.

Using the relationship between the original digraph ${\mathscr{G}}$ and its
condensation digraph ${\mathscr{G}}^{\star }$ (cf. Appendix \ref%
{Sec_basic_defs_digraphs}), the Laplacian matrix $\mathbf{L(}\mathscr{G}%
\mathbf{)}$ of $\mathscr{G}$ can be written as a lower \emph{block}
triangular matrix, i.e. %{\cite{Brualdi-Ryser-book}}
%\begin{equation}
%\mathbf{L(}\mathscr{G}\mathbf{)}=\left(
%\begin{array}{cccc}
%\mathbf{L}_{1} &  &  &  \\
%\ast  & \mathbf{B}_{2} &  &  \\
%\ast  & \ast  & \ddots  &  \\
%\ast  & \ldots  & \ast  & \mathbf{B}_{K}%
%\end{array}%
%\right) ,  \label{L_bar}
%\end{equation}%
\begin{equation}
\mathbf{L(}\mathscr{G}\mathbf{)}=\left(
\begin{array}{cccc}
\mathbf{L}_{1} & \mathbf{0} & \ldots & \mathbf{0} \\
\ast & \mathbf{B}_{2} & \ddots & \vdots \\
\ast & \ast & \ddots & \mathbf{0} \\
\ast & \ldots & \ast & \mathbf{B}_{K}%
\end{array}%
\right) ,  \label{L_bar}
\end{equation}%
with $\mathbf{L}_{1}=\mathbf{L}(\mathscr{G}_{1})\in
%TCIMACRO{\U{211d} }%
%BeginExpansion
\mathbb{R}
%EndExpansion
^{r_{1}\times r_{1}}$ denoting the $r_{1}\times r_{1}$ Laplacian matrix
associated to the root SCC $\mathscr{G}_{1},$ and $\mathbf{B}_{k}=\mathbf{L}%
_{k}+\mathbf{D}_{k}\in
%TCIMACRO{\U{211d} }%
%BeginExpansion
\mathbb{R}
%EndExpansion
^{r_{k}\times r_{k}},$ where $\mathbf{L}_{k}=\mathbf{L}(\mathscr{G}_{k})$ is
the $r_{k}\times r_{k}$ Laplacian matrix of the $k$-th SCC $\mathscr{G}_{k}$
of $\mathscr{G}$ and $\mathbf{D}_{k}$ is\ a nonnegative diagonal matrix
whose $i$-th entry is equal to the sum of the weights associated to the
edges outgoing from the nodes in $\mathscr{G}_{1},\ldots ,\mathscr{G}_{k-1}$
and incoming in the $i$-th node in $\mathscr{G}_{k}$.\footnote{%
We assume, w.l.o.g., that in each SCC $\mathscr{G}_{k}$ the nodes are
numbered from $1$ to $r_{k}.$} Observe that, since $\mathscr{G}$ is QSC,
each $\mathbf{D}_{k}$ has at least one positive entry; otherwise the SCC $%
\mathscr{G}_{k}$ would be decoupled from the other SCCs.

Since each $\mathscr{G}_{k}$ is SC by definition, we have $\limfunc{rank}(%
\mathbf{L}_{k})$ $=r_{k}-1$ (cf. Corollary \ref{Lemma_spanning-tree-SC}) and
$\limfunc{Null}(\mathbf{L}_{k})=\limfunc{span}(\mathbf{1}_{r_{k}})$ (see (%
\ref{zero-eig})). Using these properties and the fact that $\mathbf{D}_{k}%
\mathbf{1}_{r_{k}}\neq \mathbf{0}_{r_{k}}$, we have that the rows (columns)
of $\mathbf{L}_{k}+\mathbf{D}_{k}$ are linearly independent, or
equivalently, %
%
%
%$\limfunc{rank}(\mathbf{L}_{1})=r_{1}-1$
%(cf. Corollary \ref{Lemma_spanning-tree-SC}). This, using $\limfunc{rank}(%
%\mathbf{L})=\limfunc{rank}(\mathbf{L}_{1})+\sum_{k=2}^{K}\limfunc{rank}(%
%\mathbf{B}_{k})$ and $\sum_{i=1}^{K}r_{i}=N,$ provides
\begin{equation}
\limfunc{rank}(\mathbf{B}_{k})=r_{k},\quad \forall k=2,\ldots ,K.
\label{full_rank_l_k}
\end{equation}%
%
%
%
%
%
%
%
%
%
%
%
%
%
%
%
%
%
%
%i.e., all matrices $\mathbf{B}_{k}$ are nonsingular.

Using (\ref{L_bar}) and (\ref{full_rank_l_k}), we derive now the structure
of the left-eigenvector $\mathbf{\gamma }$ of $\mathbf{L}(\mathscr{G})$ in (%
\ref{L_bar}). Partitioning $\mathbf{\gamma }$ $\mathbf{=[\gamma }%
_{r_{1}}^{T},\mathbf{\gamma }_{r_{2}}^{T},\ldots ,\mathbf{\gamma }%
_{r_{K}}^{T}\mathbf{]}^{T}$ according to (\ref{L_bar}), with $\mathbf{\gamma
}_{r_{k}}\in
%TCIMACRO{\U{211d} }%
%BeginExpansion
\mathbb{R}
%EndExpansion
^{r_{k}},$ we have%
\begin{equation}
\mathbf{\gamma }_{r_{1}}^{T}\mathbf{L}_{1}=\mathbf{0}_{r_{1}}^{T}\quad \text{%
and\quad }\mathbf{\gamma }_{r_{k}}^{T}\mathbf{B}_{k}=\mathbf{0}_{r_{k}}^{T},%
\text{ \quad }k=2,\ldots ,K,  \label{left-eig-subgraph}
\end{equation}%
which, using Corollary \ref{Lemma_spanning-tree-SC} and (\ref{full_rank_l_k}%
), provides%
\begin{equation}
\mathbf{\gamma }_{r_{1}}>\mathbf{0}_{r_{1}}\quad \text{and\quad }\mathbf{%
\gamma }_{r_{k}}=\mathbf{0}_{r_{k}},\text{\quad }k=2,\ldots ,K,
\end{equation}%
respectively. If, in addition, $\mathscr{G}_{1}$ is balanced, then $\mathbf{%
\gamma }_{r_{1}}\in \limfunc{span}\left\{ \mathbf{1}_{r_{1}}\right\} .$
\end{proof}

\vspace{-0.3cm}

\section{Preliminary Results on Linear Functional Differential Equations
\label{Appendix_overview}}

In this section, we introduce some basic definitions on linear
functional differential equations and prove some intermediate
results that will be
extensively used in the proof of Theorem \ref%
{Theorem_delay-linear_stability}, as given in Appendix \ref%
{proof_Theorem_delay-linear_stability}.

To formally introduce the concept of functional differential equations, we
need the following notations: let $%
%TCIMACRO{\U{211d} }%
%BeginExpansion
\mathbb{R}
%EndExpansion
^{N}$ be an $N$-dimensional linear vector space over the real numbers with
vector norm $\left\Vert \cdot \right\Vert ;$ denoting by $\tau $ the maximum
time delay of the system, let $\mathcal{C}\triangleq \mathcal{C}\left(
[-\tau ,\ 0],%
%TCIMACRO{\U{211d} }%
%BeginExpansion
\mathbb{R}
%EndExpansion
^{N}\right) \mathcal{\ }$[or $\mathcal{C}^{1}\triangleq \mathcal{C}%
^{1}\left( [-\tau ,\ 0],%
%TCIMACRO{\U{211d} }%
%BeginExpansion
\mathbb{R}
%EndExpansion
^{N}\right) $] be the Banach space of continuous (or continuously
differentiable) functions mapping the interval $[-\tau ,\ 0]$ to $%
%TCIMACRO{\U{211d} }%
%BeginExpansion
\mathbb{R}
%EndExpansion
^{N}$ with the topology of uniform convergence, i.e., for any $\mathbf{\phi }%
\in $ $\mathcal{C}$ (or $\mathbf{\phi }\in $ $\mathcal{C}^{1}$) the norm of $%
\mathbf{\phi }$ is defined as $\left\vert \mathbf{\phi }\right\vert
_{s}=\sup_{-\tau \leq \vartheta \leq 0}\left\Vert \mathbf{\phi }(\vartheta
)\right\Vert $.

Since in this paper we considered only linear coupling among the
differential equations as given in (\ref{linear delayed system}), we
focus on \emph{Linear homogeneous }Delayed Differential Equations
(LDDEs) \
with a finite number of heterogeneous non-commensurate delays\footnote{%
Observe that the LDDE (\ref{system-translated}) falls in the more general
class of LDDE's studied in \cite{Bellman-Cooke}-\cite{Gu-book}. In fact, it
is straightforward to check that LDDE (\ref{system-translated}) can be
rewritten in the canonical form of \cite[Ch. 6, Eq. (6.3.2)]{Bellman-Cooke},
\cite[Ch. 3, Eq. (3.1)]{Gu-book}.} \cite{Bellman-Cooke}-\cite{Kuang-book}:

\begin{equation}
\overset{\cdot }{x}_{i}(t)=k_{i}\dsum\limits_{j\in \mathcal{N}%
_{i}}a_{ij}\left( x_{j}(t-\tau _{ij})-x_{i}(t)\right) ,\quad \quad \quad
\vspace{-0.2cm}\medskip \smallskip \text{ \ \ }i=1,\ldots ,N,\quad t>t_{0},
\label{system-translated}
\end{equation}
where $k_i$ is any arbitrary positive constant. Equation (\ref%
{system-translated}) means that the derivative of the state variables $%
\mathbf{x}$ at time $t$ depends on $t$ and $\mathbf{x}(\vartheta ),$ for $%
\vartheta \in \lbrack t-\tau ,$ $t].$ Hence, to uniquely specify the
evolution of the state beyond time $t_{0}$, it is required to specify the
initial state variables $\mathbf{x}(t)$ in a time interval of length $\tau ,$
from $t_{0}-\tau $ to $t_{0},$ i.e.,%
\begin{equation}
x_{i}(t_{0}+\vartheta )=\phi _{i}(\vartheta ),\text{ \quad }\vartheta \in
\lbrack -\tau ,\ 0],\text{ \ }i=1,\ldots ,N\text{.}
\end{equation}%
Here after, the initial conditions $\mathbf{\phi =}\left\{ \phi _{i}\right\}
_{i}$ are assumed to be taken in the set $\mathcal{C}_{\beta }^{1}$ of
(continuously differentiable) functions that are bounded in the norm%
\footnote{%
We used, without loss of generality, as vector norm $\left\Vert \cdot
\right\Vert $ in $%
%TCIMACRO{\U{211d} }%
%BeginExpansion
\mathbb{R}
%EndExpansion
^{N}$ the infinity norm $\left\Vert \cdot \right\Vert _{\infty },$ defined
as $\left\Vert \mathbf{x}\right\Vert _{\infty }\triangleq \max_{i}|x_{i}|.$}
$\left\vert \mathbf{\phi }\right\vert _{s}=\sup_{-\tau \leq \vartheta \leq
0}\left\Vert \mathbf{\phi }(\vartheta )\right\Vert _{\infty }$, i.e.,%
\begin{equation}
\mathcal{C}_{\beta }^{1}\triangleq \left\{ \mathbf{\phi }\in \mathcal{C}%
^{1}:\left\vert \phi \right\vert _{s}=\sup_{-\tau \leq \vartheta \leq
0}\left\Vert \mathbf{\phi }(\vartheta )\right\Vert _{\infty }\leq \beta
<+\infty \text{ }\right\} .  \label{C_beta}
\end{equation}

Given $\mathbf{\phi }\in \mathcal{C}_{\beta }^{1}$ and $t_{0},$ let $\mathbf{%
x}[t_{0},\mathbf{\phi }](t)$ denote the function $\mathbf{x}$\textbf{\ }at
time\textbf{\ }$t,$ with initial value $\mathbf{\phi }$ at time $t_{0},$
i.e., $\mathbf{x}(t_{0}+\vartheta )=\mathbf{\phi }(\vartheta ),$ with $%
\vartheta \in \lbrack -\tau ,\ 0]\mathbf{.}$ For the sake of notation, we
define $\mathbf{x}[\mathbf{\phi }](t)\triangleq \mathbf{x}[0,\mathbf{\phi }%
](t).$ A function $\mathbf{x}[t_{0},\mathbf{\phi }](t)$ is said to be a
solution to equation (\ref{system-translated}) on $[t_{0}-\tau ,\
t_{0}+t_{1})$ with initial value $\mathbf{\phi }$ at time $t_{0},$ if there
exist $t_{1}\geq 0$ such that: i) $\mathbf{x}\in \mathcal{C}\left(
[t_{0}-\tau ,\ t_{0}+t_{1}],%
%TCIMACRO{\U{211d} }%
%BeginExpansion
\mathbb{R}
%EndExpansion
^{n}\right) ;$ ii) $\mathbf{x}[t_{0},\mathbf{\phi }](t)$ satisfies (\ref%
{system-translated}), $\forall t\in \lbrack t_{0,}$ $t_{0}+t_{1}];$ and iii)
$\mathbf{x}(t_{0}+\vartheta )=\mathbf{\phi }(\vartheta ),$ with $\vartheta
\in \lbrack -\tau ,\ 0].$ It follows from \cite[Theorem 1.2]{Gu-book} that
such a solution to equation (\ref{system-translated}) exists and is unique.
We focus now on two concepts related to the trajectories of (\ref%
{system-translated}), namely boundedness and stability.

\begin{definition}
\label{Def:bounded}Given system (\ref{system-translated}), a solution $%
\mathbf{x}[t_{0},\mathbf{\phi }](t)$ is \emph{bounded }if there exists a $%
\beta =\beta (t_{0},\mathbf{\phi })$ such that $\left\Vert \mathbf{x}[t_{0},%
\mathbf{\phi }](t)\right\Vert <\beta (t_{0},\mathbf{\phi })$ for $t\geq
t_{0}-\tau .$ The solutions are \emph{uniformly bounded }if,\emph{\ }for any%
\emph{\ }$\alpha >0,$ there exists a $\beta =\beta (\alpha )>0$ such that
for all $t_{0}\in
%TCIMACRO{\U{211d} }%
%BeginExpansion
\mathbb{R}
%EndExpansion
,$ $\mathbf{\phi }\in $ $\mathcal{C}$ and $\left\vert \mathbf{\phi }%
\right\vert _{s}<\alpha $ we have $\left\Vert \mathbf{x}[t_{0},\mathbf{\phi }%
](t)\right\Vert <\beta $ for all $t\geq t_{0}.$
\end{definition}

As far as the stability notion is concerned, it is not at all different from
its counterpart for unretarded systems, except for the different assumptions
on the initial conditions. The interested reader may refer, e.g., to \cite%
{Gu-book, Kuang-book} for an in-depth treatment of this topic. We
focus here on
effective methods on proving the stability of LDDEs. Since system (\ref%
{system-translated}) is linear, the stability analysis can be carried out
either in the time-domain \cite{Gu-book, Kuang-book} or in the
frequency-domain \cite{Bellman-Cooke, Stepan-book}, as for classical
ordinary differential equations (see, e.g., \cite{Kailath-book}). In this
paper we focus on the latter approach. The same conclusions can also be
obtained using the time-domain analysis, based on Lyapunov-Krasovskii
functional \cite{Haddock-Terjeki}. Before stating the major result, we need
first the following intermediate definitions.

Let $\mathbf{%
%TCIMACRO{\U{2102} }%
%BeginExpansion
\mathbb{C}
%EndExpansion
}_{+}=\{s\in \mathbf{%
%TCIMACRO{\U{2102} }%
%BeginExpansion
\mathbb{C}
%EndExpansion
}:\limfunc{Re}\{s\}>0\},$ $\mathbf{%
%TCIMACRO{\U{2102} }%
%BeginExpansion
\mathbb{C}
%EndExpansion
}_{-}=\{s\in \mathbf{%
%TCIMACRO{\U{2102} }%
%BeginExpansion
\mathbb{C}
%EndExpansion
}:\limfunc{Re}\{s\}<0\},$ and $\overline{\mathbf{%
%TCIMACRO{\U{2102} }%
%BeginExpansion
\mathbb{C}
%EndExpansion
}}_{+}$ be the closure of $\mathbf{%
%TCIMACRO{\U{2102} }%
%BeginExpansion
\mathbb{C}
%EndExpansion
}_{+}$, i.e., $\overline{\mathbf{%
%TCIMACRO{\U{2102} }%
%BeginExpansion
\mathbb{C}
%EndExpansion
}}_{+}=\{s\in \mathbf{%
%TCIMACRO{\U{2102} }%
%BeginExpansion
\mathbb{C}
%EndExpansion
}:\limfunc{Re}\{s\}\geq 0\}.$ Denoting by $\mathcal{H}^{n\times m}$ the set
of $n\times m$\ matrices whose entries are analytic\footnote{%
A complex function is said to be analytic (or holomorphic) on a region $%
\mathcal{D\subseteq
%TCIMACRO{\U{2102} }%
%BeginExpansion
\mathbb{C}
%EndExpansion
}$ if it is complex differentiable at every point in $\mathcal{D}$, i.e.,
for any $z_{0}\in \mathcal{D}$ the function satisfies the Cauchy-Riemann
equations and has continuous first partial derivatives in the neighborhood
of $z_{0}$ (see, e.g., \cite[Theorem 11.2]{Rudin-book}).} and bounded
functions in $\mathbf{%
%TCIMACRO{\U{2102} }%
%BeginExpansion
\mathbb{C}
%EndExpansion
}_{+},$ let us introduce the $N\times N$ diagonal degree matrix $%
%TCIMACRO{\TeXButton{Delta-bold}{\boldsymbol{\Delta}}}%
%BeginExpansion
\boldsymbol{\Delta}%
%EndExpansion
\geq \mathbf{0}_{N\times N}$ and the complex matrix $\mathbf{H}(s)\in
\mathbf{%
%TCIMACRO{\U{2102} }%
%BeginExpansion
\mathbb{C}
%EndExpansion
}^{N\times N}\mathbf{,}$ defined respectively as
\begin{equation}
%TCIMACRO{\TeXButton{Delta-bold}{\boldsymbol{\Delta}}}%
%BeginExpansion
\boldsymbol{\Delta}%
%EndExpansion
\triangleq \limfunc{diag}\left( k_{1}\deg \nolimits_{\text{in}%
}(v_{1}),...,k_{N}\deg \nolimits_{\text{in}}(v_{N})\right) \text{\quad
and\quad }\left[ \mathbf{H}(s)\right] _{ij}\triangleq \left\{
\begin{array}{ll}
0, & \text{if }i=j, \\
k_{i}a_{ij}e^{-s\tau _{ij}}, & \text{if }i\neq j,%
\end{array}%
\right.   \label{def_Delta_and_H_matrices}
\end{equation}%
with $\deg \nolimits_{\text{in}}(v_{i})$ given in (\ref{in-out degree}) (see
Appendix \ref{Appendix_overview-Graph Theory}). Observe that $\mathbf{H}%
(s)\in \mathcal{H}^{N\times N}\mathbf{.}$ We can now provide the main result
of this section, stated in the following lemma.

\begin{lemma}
\label{theorem_stability_of_dynamical_system}Given system (\ref%
{system-translated}), assume that the following conditions are satisfied:%
\vspace{-0.2cm}

\begin{enumerate}
\item[\textbf{b1.}] The initial value functions $\mathbf{\phi }\in \mathcal{C%
}_{\beta }^{1},$ and the solutions $\mathbf{x}[\mathbf{\phi }](t)$ are
bounded; \vspace{-0.2cm}

\item[\textbf{b2.}] The characteristic equation associated to (\ref%
{system-translated})
\begin{equation}
p(s)\triangleq \det \left( s\mathbf{I}+%
%TCIMACRO{\TeXButton{Delta-bold}{\boldsymbol{\Delta}}}%
%BeginExpansion
\boldsymbol{\Delta}%
%EndExpansion
-\mathbf{H}(s)\right) =0,  \label{def_characteristic-function}
\end{equation}%
with $%
%TCIMACRO{\TeXButton{Delta-bold}{\boldsymbol{\Delta}}}%
%BeginExpansion
\boldsymbol{\Delta}%
%EndExpansion
$ and $\mathbf{H}(s)$ defined in (\ref{def_Delta_and_H_matrices}), has all
roots $\{s_{r}\}_{r}\in $ $\mathbf{%
%TCIMACRO{\U{2102} }%
%BeginExpansion
\mathbb{C}
%EndExpansion
}_{-},$ with \ at most one \emph{simple} root at $s=0.$\footnote{%
We assume, w.l.o.g., that the roots $\{s_{r}\}$ are arranged in
nonincreasing order with respect to the real part, i.e., $0=\limfunc{Re}%
\{s_{0}\}>\limfunc{Re}\{s_{1}\}\geq \limfunc{Re}\{s_{2}\}\geq ...$.}\vspace{%
-0.4cm}
\end{enumerate}

Then, system (\ref{system-translated}) is marginally stable, i.e., $\forall
\mathbf{\phi }\in \mathcal{C}_{\beta }^{1}$ and \ $\limfunc{Re}%
\{s_{1}\}<c<0, $\ there exist $t_{1}$ and $\alpha ,$ with $%
t_{0}<t_{1}<+\infty $ and $0<\alpha <+\infty ,$ independent of $\mathbf{\phi
,}$ and a vector $\mathbf{x}^{\infty },$ with $\left\Vert \mathbf{x}^{\infty
}\right\Vert <+\infty ,$ such that
\begin{equation}
\left\Vert \mathbf{x}[\mathbf{\phi }](t)-\mathbf{x}^{\infty }\right\Vert
\leq \alpha \left\vert \mathbf{\phi }\right\vert _{s}e^{ct},\qquad \forall
t>t_{1}.  \label{inequality_convergence_rate}
\end{equation}
\end{lemma}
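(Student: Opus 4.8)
The plan is to work in the frequency domain, as the statement suggests: represent the solution of (\ref{system-translated}) by an inverse Laplace transform, shift the Bromwich contour to the left past the simple root at the origin, and collect the residue. First I would Laplace-transform (\ref{system-translated}). By \textbf{b1} the solution $\mathbf{x}[\mathbf{\phi }](t)$ is bounded, so $\mathbf{X}(s)\triangleq\int_{0}^{\infty}e^{-st}\mathbf{x}[\mathbf{\phi }](t)\,dt$ converges and is analytic for $\limfunc{Re}\{s\}>0$. Splitting each delayed term $x_{j}(t-\tau_{ij})$ into its contribution from $[-\tau_{ij},0]$ (on which $x_{j}=\phi_{j}$) and from $[\tau_{ij},\infty)$ gives
\begin{equation}
\left( s\mathbf{I}+\boldsymbol{\Delta}-\mathbf{H}(s)\right)\mathbf{X}(s)=\mathbf{x}(0)+\boldsymbol{\Psi}(s),
\end{equation}
with $\boldsymbol{\Delta}$ and $\mathbf{H}(s)$ as in (\ref{def_Delta_and_H_matrices}) and $\boldsymbol{\Psi}(s)$ an \emph{entire} vector assembled from the integrals $\int_{-\tau_{ij}}^{0}e^{-s\theta}\phi_{j}(\theta)\,d\theta$; in particular, on any vertical line $\limfunc{Re}\{s\}=c$ one has $\Vert\boldsymbol{\Psi}(c+i\omega)\Vert\le C_{1}\left\vert \mathbf{\phi }\right\vert _{s}$ uniformly in $\omega$, with $C_{1}$ depending only on $\{k_{i},a_{ij},\tau_{ij}\}$. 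Hence $\mathbf{X}(s)=\left( s\mathbf{I}+\boldsymbol{\Delta}-\mathbf{H}(s)\right)^{-1}\!\left(\mathbf{x}(0)+\boldsymbol{\Psi}(s)\right)$ is meromorphic with poles contained in the zero set of the characteristic function $p(s)$ of (\ref{def_characteristic-function}), and $\mathbf{x}[\mathbf{\phi }](t)=\frac{1}{2\pi i}\int_{\gamma-i\infty}^{\gamma+i\infty}e^{st}\mathbf{X}(s)\,ds$ for any $\gamma>0$.

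The technical core is a resolvent estimate. Writing $\mathbf{G}(s)\triangleq s\mathbf{I}+\boldsymbol{\Delta}-\mathbf{H}(s)=s\mathbf{I}+\mathbf{B}(s)$, the entries $k_{i}a_{ij}e^{-s\tau_{ij}}$ of $\mathbf{H}(s)$ are bounded on every strip $c\le\limfunc{Re}\{s\}\le\gamma$, so $\Vert\mathbf{B}(s)\Vert$ is bounded there; a Neumann series then yields $\Vert\mathbf{G}(s)^{-1}\Vert\le 2/|s|$ for $|s|$ large in the strip, together with the refinements $\mathbf{G}(s)^{-1}=s^{-1}\mathbf{I}+\mathbf{R}(s)$ with $\Vert\mathbf{R}(s)\Vert=O(|s|^{-2})$ and $\Vert\tfrac{d}{ds}\mathbf{G}(s)^{-1}\Vert=O(|s|^{-2})$. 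I would then shift the contour from $\limfunc{Re}\{s\}=\gamma>0$ to $\limfunc{Re}\{s\}=c$, where $\limfunc{Re}\{s_{1}\}<c<0$. By \textbf{b2} the only root of $p$ with $\limfunc{Re}\{s\}\ge c$ is the simple root $s=0$; the resolvent bound makes the two horizontal segments at $\pm iT$ vanish as $T\to\infty$; and the residue theorem gives
\begin{equation}
\mathbf{x}[\mathbf{\phi }](t)=\mathbf{x}^{\infty }+\frac{1}{2\pi i}\int_{c-i\infty}^{c+i\infty}e^{st}\mathbf{X}(s)\,ds,\qquad \mathbf{x}^{\infty }\triangleq\lim_{s\to 0}s\,\mathbf{G}(s)^{-1}\left(\mathbf{x}(0)+\boldsymbol{\Psi}(s)\right),
\end{equation}
a finite constant vector with $\Vert\mathbf{x}^{\infty }\Vert\le C\left\vert \mathbf{\phi }\right\vert _{s}$ (and in fact proportional to $\mathbf{1}_{N}$, since $\mathbf{G}(0)$ is the weighted Laplacian, whose null space is $\limfunc{span}\{\mathbf{1}_{N}\}$, a structural fact that is exploited afterwards to identify the synchronized state).

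It remains to estimate the line integral. On $s=c+i\omega$ one has $\Vert e^{st}\mathbf{X}(s)\Vert\le e^{ct}C_{2}\left\vert \mathbf{\phi }\right\vert _{s}/(1+|\omega|)$, which is only conditionally integrable in $\omega$; a single integration by parts over $|\omega|\ge 1$, using $\Vert\tfrac{d}{d\omega}\mathbf{X}(c+i\omega)\Vert=O(|\omega|^{-2})$, yields $\left\Vert\int_{c-iT}^{c+iT}e^{st}\mathbf{X}(s)\,ds\right\Vert\le C_{3}\left\vert \mathbf{\phi }\right\vert _{s}e^{ct}$ uniformly in $T$ and in $t\ge t_{1}$ for a fixed $t_{1}>0$. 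Passing to the limit $T\to\infty$ gives (\ref{inequality_convergence_rate}) with $\alpha=C_{3}/(2\pi)$; since $\mathbf{\phi }\mapsto\mathbf{x}[\mathbf{\phi }]$ and $\mathbf{\phi }\mapsto\mathbf{x}^{\infty }$ are linear, both $\alpha$ and $t_{1}$ are independent of $\mathbf{\phi }$. The main obstacle is precisely this bundle of steps around the contour: controlling $\mathbf{G}(s)^{-1}$ uniformly on the vertical line and on the horizontal segments, justifying the contour shift, and handling the non-absolute convergence of the inversion integral; the remaining ingredients (existence and uniqueness of solutions and the transform identity) are routine, e.g.\ via \cite{Gu-book}.
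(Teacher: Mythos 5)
Your proposal is correct in substance and follows the same frequency-domain route as the paper: both arguments reduce the matter to the location of the zeros of the characteristic function $p(s)$ in (\ref{def_characteristic-function}) and then isolate the contribution of the simple root at $s=0$. The difference is one of granularity. The paper's own proof is essentially a citation: it invokes \cite[Theorems 6.5 and 6.7]{Bellman-Cooke} to get the finite exponential-series expansion of $\mathbf{x}[\mathbf{\phi }](t)$ over the characteristic roots to the right of the line $\limfunc{Re}\{s\}=c$, observes that under \textbf{b2)} this sum collapses to a single constant term $\mathbf{p}_{0}$ coming from the simple root at the origin, and uses \cite[Corollary 6.2]{Bellman-Cooke} to conclude $\left\Vert \mathbf{p}_{0}\right\Vert <+\infty $. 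You instead re-derive the needed special case of that expansion theorem from scratch (Laplace transform, resolvent bound, Bromwich contour shifted to $\limfunc{Re}\{s\}=c$, residue at $s=0$). That buys self-containedness and makes the dependence of $\alpha $ on $\left\vert \mathbf{\phi }\right\vert _{s}$ explicit, at the cost of reproving classical material. One technical point to tighten: your claimed bound $\left\Vert \tfrac{d}{d\omega }\mathbf{X}(c+i\omega )\right\Vert =O(|\omega |^{-2})$ does not follow directly, because $\tfrac{d}{ds}\mathbf{X}(s)$ contains the term $\mathbf{G}(s)^{-1}\boldsymbol{\Psi}^{\prime }(s)$, which is only $O(|\omega |^{-1})$; the fix is already implicit in your own splitting $\mathbf{G}(s)^{-1}=s^{-1}\mathbf{I}+\mathbf{R}(s)$ with $\mathbf{R}(s)=O(|s|^{-2})$ — the $s^{-1}\left( \mathbf{x}(0)+\boldsymbol{\Psi}(s)\right) $ piece should be inverted explicitly (it is the transform of a function that is constant for $t>\tau $, so it contributes nothing on the shifted contour for large $t$), and only the absolutely integrable remainder estimated on the line. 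With that repair the argument goes through and reproduces (\ref{inequality_convergence_rate}).
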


\begin{proof}
Under assumption \textbf{b1}), according to \cite[Theorem 6.5]{Bellman-Cooke}%
, the stability properties of system (\ref{system-translated}) are fully
determined by the roots of the characteristic equation associated to (\ref%
{system-translated}), as detailed next.\footnote{%
Observe that assumption \textbf{b1) }is only sufficient for the existence of
the Laplace transform of the solutions to (\ref{system-translated}).}
Denoting by $\mathcal{M}=\{\limfunc{Re}\{s_{r}\}:p(s_{r})=0\}$ the set of
real parts of the characteristic roots $\{s_{r}\}_{r}$, it follows from \cite%
[Theorem 6.7]{Bellman-Cooke} that, $\forall \mathbf{\phi }\in \mathcal{C}%
_{\beta }^{1}$ and \ $c\notin \mathcal{M},$\ there exist $t_{1}$ and $\alpha
,$ with\ $0<t_{1}<+\infty $ and $0<\alpha <+\infty $ independent on $\mathbf{%
\phi }$, such that
\begin{equation}
\left\Vert \mathbf{x}[\mathbf{\phi }](t)-\lim_{l\rightarrow +\infty
}\dsum\limits_{s_{r\in C_{l}:}\limfunc{Re}\{s_{r}\}>c}\mathbf{p}%
_{r}(t)e^{s_{r}t}\right\Vert \leq \alpha \left\vert \mathbf{\phi }%
\right\vert _{s}e^{ct},\qquad \forall t>t_{1},
\label{formula_on_convergence_bound}
\end{equation}%
where each $\mathbf{p}_{r}$ is a (vectorial) polynomial of degree less than
the multiplicity of $s_{r},$ $C_{l}$ denotes a contour in \ the complex
plane of radius increasing with $l$ and centered around $s=0$ (see \cite[%
Sec. 4.1]{Bellman-Cooke} for more details on how such contours $C_{l}$ need
to be chosen), and the sum in (\ref{formula_on_convergence_bound}) is taken
over all characteristic roots $s_{r}$within the contour $C_{l}$ and to the
right of the line $\limfunc{Re}\{s_{r}\}=c.$ Observe that, since the number
of such roots is finite\footnote{%
The reader who is familiar with Linear Retarded Functional Differential
Equations (LRFDE) may observe that this result comes directly from the fact
that the characteristic equation (\ref{def_characteristic-function}) does
not have neutral roots \cite[Ch. 12]{Bellman-Cooke}.} \cite[Theorem 1.5]%
{Gu-book} (see also \cite[Ch. 6.8 and Ch. 12]{Bellman-Cooke}), the limit in (%
\ref{formula_on_convergence_bound}) is always well-defined.

Using (\ref{formula_on_convergence_bound}), we prove now  that, under \textbf{%
b2),} system (\ref{system-translated}) is marginally stable.
Invoking the property that, for any given $\gamma \in
%TCIMACRO{\U{211d} }%
%BeginExpansion
\mathbb{R}
%EndExpansion
,$ the number of characteristic roots $s_{r}$ with $\limfunc{Re}%
\{s_{r}\}>\gamma $ is finite, \textbf{\ }one can always choose the constant $%
c\notin \mathcal{M}$ in (\ref{formula_on_convergence_bound}) so that $%
\limfunc{Re}\{s_{1}\}<c<\limfunc{Re}\{s_{0}\}=0,$ which leads to%
\begin{equation}
\left\Vert \mathbf{x}[\mathbf{\phi }](t)-\mathbf{p}_{0}\right\Vert \leq
\alpha \left\vert \mathbf{\phi }\right\vert _{s}e^{ct},\qquad \forall
t>t_{1},  \label{inequality_convergence_rate_proof}
\end{equation}%
where we have explicitly used the assumption that the (possible) root $%
s_{0}=0$ is simple and that there are no roots with positive real part. It
follows from \cite[Corollary 6.2]{Bellman-Cooke} that this condition is also
sufficient to guarantee that $\mathbf{p}_{0}$ is bounded, i.e., $\left\Vert
\mathbf{p}_{0}\right\Vert <+\infty .$ Setting in (\ref%
{inequality_convergence_rate_proof}) $\mathbf{x}^{\infty }=\mathbf{p}_{0},$
we obtain (\ref{inequality_convergence_rate}), which completes the proof.
\end{proof}

\noindent \textbf{Remark 1.} It follows from Lemma \ref%
{theorem_stability_of_dynamical_system} that, under assumptions \textbf{b1)}-%
\textbf{b2)}, all the solutions to (\ref{system-translated}) asymptotically
convergence to the constant vector $\mathbf{x}^{\infty }.$ Moreover, \
equation (\ref{inequality_convergence_rate}) provides an estimate of the
convergence rate of the system: as for systems without delays, the
convergence speed of (\ref{system-translated}) is exponential\footnote{%
We say that $\mathbf{x(}t\mathbf{)}$ converges exponentially toward $\mathbf{%
x}^{\infty }$ with rate $r<0$ if $\left\Vert \mathbf{x}(t)-\mathbf{x}%
^{\infty }\right\Vert \leq O\left( e^{rt}\right) .$} with rate arbitrarily
close to $\limfunc{Re}\{s_{1}\}$ that, in general, depends on network
topology and delays.

\noindent \textbf{Remark 2.} Lemma \ref%
{theorem_stability_of_dynamical_system} generalizes results of \cite%
{Stepan-book, Gu-book}, where the authors provided alternative conditions
for the \emph{asymptotic }stability of a Linear Retarded Differential
Equation (LRDE). Interestingly, Lemma \ref%
{theorem_stability_of_dynamical_system} contains some of the conditions of
\cite{Stepan-book, Gu-book} as a special case: System (\ref%
{system-translated}) is asymptotically stable if all the characteristic
roots of (\ref{def_characteristic-function}) have negative real part [see (%
\ref{inequality_convergence_rate})]. This conclusion is the same as that for
linear unretarded systems (see, e.g., \cite{Kailath-book}).

Moreover, Lemma \ref{theorem_stability_of_dynamical_system} can be easily
generalized to include the cases in which one is interested in
\textquotedblleft oscillatory\textquotedblright\ behaviors of system (\ref%
{system-translated}). It is straightforward to see that \emph{bounded}
oscillations arise if assumption \textbf{b2)} is replaced by the following
condition: All the roots of characteristic equation (\ref%
{def_characteristic-function}) have negative real part and the roots with
zero real part are simple.\vspace{-0.4cm}

\section{Proof of Theorem \protect\ref{Theorem_delay-linear_stability}\label%
{proof_Theorem_delay-linear_stability}}

In the following, for the sake of notation simplicity, we drop the
dependence of the state function from the observation, as this dependence
does not play any role in our proof.

\subsection{Sufficiency}

We prove that, under \textbf{a1)}-\textbf{a3)}, the quasi strong
connectivity of digraph ${\mathscr{G}}$ associated to the network in (\ref%
{linear delayed system}) is a sufficient condition for the system (\ref%
{linear delayed system}) to synchronize and that the synchronized state is
given by (\ref{bias_Theo}). To this end, we organize the proof according to
the following two steps.

We first show that, under \textbf{a1)}-\textbf{a2)} and the quasi-strong
connectivity of ${\mathscr{G},}$ the set of LRFDEs (\ref{linear delayed
system}) admits a solution in the form\vspace{-0.3cm}
\begin{equation}
x_{i}^{\star }(t)={\alpha }t+x_{i,0}^{\star },\qquad i=1,\ldots ,N,
\label{subclass_of_synch_state}
\end{equation}%
if ${\alpha =\omega }^{\star }$, where ${\omega }^{\star }$ is defined in (%
\ref{bias_Theo}) and $\{x_{i,0}^{\star }\}$ are constants that depend in
general on the system parameters and the initial conditions. This guarantees
the existence of the desired synchronized state (cf. Definition \ref%
{Definition_sync-state}). Then, invoking results of Appendix \ref%
{Appendix_overview}, we prove that, under \textbf{a1)}-\textbf{a3) }and the
quasi-strong connectivity of ${\mathscr{G},}$ such a synchronized states is
also globally asymptotically stable (according to Definition \ref%
{Definition_sync-state}).\vspace{-0.2cm}

\subsubsection{Existence of a synchronized state\label{Appendix_existence}}

Let us assume that conditions \textbf{a1)}-\textbf{a2) }are satisfied and
that ${\mathscr{G}=}\{{%
%TCIMACRO{\TeXButton{V}{\mathscr{V}}}%
%BeginExpansion
\mathscr{V}%
%EndExpansion
,%
%TCIMACRO{\TeXButton{E}{\mathscr{E}}}%
%BeginExpansion
\mathscr{E}%
%EndExpansion
}\}$ associated to (\ref{linear delayed system}) is QSC.
%For any\emph{\ given} $\alpha ,$ t
The synchronized state in the form (\ref{subclass_of_synch_state}) is a
solution to (\ref{linear delayed system}) if and only if it satisfies
equations (\ref{linear delayed system}) (cf. Appendix \ref{Appendix_overview}%
), i.e., if and only if there exist $\alpha$ and $\{x_{i,0}^{\star }\}$\
such that the following system of linear equations is feasible:\vspace{-0.2cm%
}
\begin{equation}
\frac{c_{i}\Delta \omega _{i}(\alpha )}{K}+\sum_{j\in \mathcal{N}%
_{i}}a_{ij}\left( x_{j,0}^{\star }-x_{i,0}^{\star }\right) =0,\qquad \forall
i=1,\ldots ,N,  \label{system-linear-eq}
\end{equation}%
where%
\begin{equation}
\Delta \omega _{i}(\alpha )\triangleq g_{i}(y_{i})-\alpha \left( 1+\frac{K}{%
c_{i}}\sum_{j\in \mathcal{N}_{i}}a_{ij}\,\tau _{ij}\right) .
\label{Delta_omega}
\end{equation}%
Introducing the weighted Laplacian $\mathbf{L}=\mathbf{L}({\mathscr{G}})$
associated to digraph ${\mathscr{G}}$ (cf. Section \ref%
{Appendix_overview-Graph Theory}), the system in (\ref{system-linear-eq})
can be equivalently rewritten as
\begin{equation}
\mathbf{Lx}_{0}^{\star }=\frac{1}{K}\mathbf{D}_{\mathbf{c}}\Delta \,%
\boldsymbol{\omega }(\alpha ),  \label{sys-linear-eqs}
\end{equation}%
where $\mathbf{x}_{0}^{\star }\triangleq \lbrack x_{1,0}^{\star },\ldots
,x_{N,0}^{\star }]^{T},$ $\mathbf{D}_{\mathbf{c}}\triangleq \limfunc{diag}%
(c_{1},\ldots ,c_{N}),$ and $\Delta \,\boldsymbol{\omega }(\alpha
)\triangleq \lbrack \Delta \,\omega _{1}(\alpha ),\ldots ,\Delta \,\omega
_{N}(\alpha )]^{T},$ with $\Delta \,\omega _{i}(\alpha )$ defined in (\ref%
{Delta_omega}). Observe that, under \textbf{a1)}-\textbf{a2) }and the
quasi-strong connectivity of ${\mathscr{G}}$, the graph Laplacian $\mathbf{L}
$ has the following properties (cf. Corollary \ref{Lemma_spanning-tree}):%
\begin{equation}
\mathrm{rank}(\mathbf{L})=N-1,\quad \mathcal{N}(\mathbf{L})=\limfunc{span}%
\left\{ \mathbf{1}_{N}\right\} ,\quad \text{and}\quad \mathcal{N}(\mathbf{L}%
^{T})=\limfunc{span}\left\{ \mathbf{\gamma }\right\} ,  \label{La-properties}
\end{equation}%
where $\mathcal{N}(\mathbf{L})$ denotes the (right) null-space of $\mathbf{L,%
}$ and $\mathbf{\gamma }$ is a left eigenvector of $\mathbf{L}$
corresponding to the (simple) zero eigenvalue of $\mathbf{L}$, i.e., $%
\mathbf{\gamma }^{T}\mathbf{L}=\mathbf{0}^{T}$.

Assume now that $\alpha$ is fixed. It follows from (\ref{La-properties})
that, for any \emph{given} $\alpha ,$ system (\ref{sys-linear-eqs}) admits a
solution if and only if $\ \mathbf{D}_{\mathbf{c}}\Delta \,\mathbf{\omega }%
(\alpha )\in \limfunc{span}\left\{ \mathbf{L}\right\} .$ Because of (\ref%
{La-properties}), we have
\begin{equation}
\mathbf{D}_{\mathbf{c}}\Delta \,\mathbf{\omega }(\alpha )\in \limfunc{span}%
\left\{ \mathbf{L}\right\} \Leftrightarrow \mathbf{\gamma }^{T}\mathbf{D}_{%
\mathbf{c}}\Delta \,\mathbf{\omega }(\alpha )=0.  \label{Cond_on_alpha}
\end{equation}%
It is easy to check that the value of $\alpha $ that satisfies (\ref%
{Cond_on_alpha}) is $\alpha =\omega ^{\star },$ with $\omega ^{\star }$
defined in (\ref{bias_Theo}). Hence, if $\alpha =\omega ^{\star },$ %
%Thus, if $\alpha $ is such that condition in (\ref{Cond_on_alpha})
%is satisfied - implying $\alpha =\omega ^{\star },$ with $\omega ^{\star }$
%defined in (\ref{bias_Theo}) -
the synchronized state in the form (\ref{subclass_of_synch_state}) is a
solution to (\ref{linear delayed system}), for any given set of $\{\tau
_{ij}\},$ $\{g_{i}\},$ $\{c_{i}\}$, $\{a_{ij}\}$ and $K\neq 0$. The
structure of the left eigenvector $\mathbf{\gamma }$ associated to the zero
eigenvalue of $\mathbf{L}$ as given in (\ref{bias_Theo}) comes directly from
Lemma \ref{Lemma_eigenvector_Laplacian}.\vspace{-0.2cm}

Setting $\alpha =\omega ^{\star }$,
%and given $\{\tau _{ij}\},$ $\{g_{i}\},$ $%
%\{c_{i}\}$, $\{a_{ij}\}$ and $K\neq 0$,
system (\ref{sys-linear-eqs}) admits $\infty ^{1}$ solutions, given by
\begin{equation}
\mathbf{x}_{0}^{\star }=\frac{1}{K}\mathbf{\mathbf{L}^{\sharp }D}_{\mathbf{c}%
}\Delta \,\mathbf{\omega }(\omega ^{\star })+\limfunc{span}\{\mathbf{1}%
_{N}\}\triangleq \overline{\mathbf{x}}_{0}+\limfunc{span}\{\mathbf{1}_{N}\},
\label{solutions_linear_system}
\end{equation}%
where
\begin{equation}
\overline{\mathbf{x}}_{0}\triangleq \frac{1}{K}\mathbf{\mathbf{L}^{\sharp }D}%
_{\mathbf{c}}\Delta \,\mathbf{\omega }(\omega ^{\star }),
\label{minimum-norm-solution}
\end{equation}%
$\Delta \omega _{i}(\omega ^{\star })$ is obtained by (\ref{Delta_omega})
setting $\alpha =\omega ^{\star }$ and $\mathbf{\mathbf{L}}_{\mathbf{A}%
}^{\sharp }$ is the generalized inverse of the Laplacian $\mathbf{\mathbf{L}}
$ \cite{Campbell-Meyer}.

\subsubsection{Global Asymptotic Stability of the Synchronized State\ }

To prove the global asymptotic stability of the synchronized state of system
(\ref{linear delayed system}), whose existence has been proved in Appendix %
\ref{Appendix_existence}, we use the following intermediate result (see
Appendix \ref{Appendix_overview} for the definitions used in the lemma).

\begin{lemma}[{\protect\cite[Theorem 2.2]{boyd-SH}}]
\label{Lemma-Boyd}Let $\mathbf{H}(s)\in \mathcal{H}^{N\times N}$ and $\rho
\left( \mathbf{H}(s)\right) $ denote the spectral radius of $\mathbf{H(}s%
\mathbf{).}$ Then, $\rho \left( \mathbf{H}(s)\right) $ is a subharmonic%
\footnote{%
See, e.g., \cite[Ch. 12]{Rudin-book}, \cite{boyd-SH}, for the definition of
subharmonic function.} bounded (above) function on $\overline{\mathbf{%
%TCIMACRO{\U{2102} }%
%BeginExpansion
\mathbb{C}
%EndExpansion
}}_{+}.$
\end{lemma}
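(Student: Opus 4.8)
The plan is to obtain this as an instance of the classical fact that the spectral radius of a holomorphic family of matrices is subharmonic, combined with the elementary bound $\rho(\mathbf{H}(s))\le\|\mathbf{H}(s)\|$. Since each entry of $\mathbf{H}(s)$ is analytic on $\mathbf{\mathbb{C}}_{+}$ by hypothesis, $s\mapsto\mathbf{H}(s)$ is a holomorphic matrix-valued function there, so invoking Vesentini's theorem on the subharmonicity of the spectral radius of a holomorphic Banach-algebra-valued map gives the claim at once (and in fact that $\log\rho(\mathbf{H}(s))$ is subharmonic). For a self-contained argument I would reconstruct the standard proof of that fact as follows.

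First I would reduce to showing that $\log\rho(\mathbf{H}(s))$ is subharmonic on $\mathbf{\mathbb{C}}_{+}$; since $t\mapsto e^{t}$ is convex and increasing, $\rho(\mathbf{H}(s))=\exp\bigl(\log\rho(\mathbf{H}(s))\bigr)$ is then subharmonic as well (the case $\rho(\mathbf{H}(\cdot))\equiv0$ being trivial). Next, for every integer $k\ge1$ the entries of $\mathbf{H}(s)^{k}$ are finite sums of products of entries of $\mathbf{H}(s)$ — the factors $e^{-s\tau_{ij}}$ being entire — hence analytic on $\mathbf{\mathbb{C}}_{+}$. Therefore for each pair of unit vectors $u,v$ the scalar $v^{*}\mathbf{H}(s)^{k}u$ is analytic, so $\log|v^{*}\mathbf{H}(s)^{k}u|$ is subharmonic; writing the (operator) norm as $\|\mathbf{H}(s)^{k}\|=\sup_{\|u\|=\|v\|=1}|v^{*}\mathbf{H}(s)^{k}u|$ exhibits $\log\|\mathbf{H}(s)^{k}\|$ as a supremum of subharmonic functions which is upper semicontinuous, hence subharmonic.

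I would then fix $s_{0}\in\mathbf{\mathbb{C}}_{+}$ and a closed disc $\{|s-s_{0}|\le r\}\subset\mathbf{\mathbb{C}}_{+}$, apply the sub-mean-value inequality to $\log\|\mathbf{H}(s)^{k}\|$, divide by $k$, and let $k\to\infty$:
\[
\frac{1}{k}\log\|\mathbf{H}(s_{0})^{k}\|\;\le\;\frac{1}{2\pi}\int_{0}^{2\pi}\frac{1}{k}\log\|\mathbf{H}(s_{0}+re^{i\theta})^{k}\|\,d\theta .
\]
By Gelfand's formula the left-hand side converges to $\log\rho(\mathbf{H}(s_{0}))$. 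On the right-hand side the integrands are bounded above by $\log\|\mathbf{H}(s)\|$, which is bounded on the circle because $\mathbf{H}\in\mathcal{H}^{N\times N}$; hence the reverse Fatou lemma applies and $\limsup_{k}$ of the integral is at most $\frac{1}{2\pi}\int_{0}^{2\pi}\log\rho(\mathbf{H}(s_{0}+re^{i\theta}))\,d\theta$. This gives the sub-mean-value inequality for $\log\rho(\mathbf{H}(\cdot))$ over every such circle; together with the upper semicontinuity of $s\mapsto\rho(\mathbf{H}(s))$ (the spectral radius being upper semicontinuous in the matrix entries, which depend continuously on $s$) it yields subharmonicity of $\log\rho(\mathbf{H}(s))$ on $\mathbf{\mathbb{C}}_{+}$, and hence on $\overline{\mathbf{\mathbb{C}}}_{+}$ where $\mathbf{H}$ extends. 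Boundedness above is then immediate: $\rho(\mathbf{H}(s))\le\|\mathbf{H}(s)\|\le M<+\infty$ since every entry $k_{i}a_{ij}e^{-s\tau_{ij}}$ is bounded for $\mathrm{Re}\,s\ge0$ and $\tau_{ij}\ge0$.

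The one delicate point I expect is the exchange of the limit $k\to\infty$ with the circle integral in the third step, since $\log\rho$ can take the value $-\infty$ on a polar set; this is exactly where the uniform upper bound coming from $\mathbf{H}\in\mathcal{H}^{N\times N}$ is essential, as it legitimizes the reverse Fatou lemma. The remaining ingredients — subharmonicity of $\log|\cdot|$ of an analytic function, stability of subharmonicity under suprema and under the increasing convex map $\exp$, and Gelfand's formula $\rho(M)=\lim_{k}\|M^{k}\|^{1/k}$ — are standard, so the argument is routine once that interchange is secured (or one simply cites Vesentini's theorem, as the paper effectively does through \cite{boyd-SH}).
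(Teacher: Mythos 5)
The paper does not actually prove this lemma: it is imported verbatim as Theorem 2.2 of \cite{boyd-SH}, so there is no internal argument to compare yours against. Your reconstruction is the standard proof of Vesentini's theorem specialized to matrices, and it is correct: writing $\log\|\mathbf{H}(s)^{k}\|$ as a supremum of the subharmonic functions $\log|v^{*}\mathbf{H}(s)^{k}u|$, passing to the limit via Gelfand's formula, and justifying the interchange with the circle average by the reverse Fatou lemma using the uniform upper bound $\tfrac{1}{k}\log\|\mathbf{H}(s)^{k}\|\leq\log\|\mathbf{H}(s)\|$ (available precisely because $\mathbf{H}\in\mathcal{H}^{N\times N}$ requires bounded entries) are exactly the right ingredients, and you correctly flag the Fatou interchange as the only delicate step. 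Two cosmetic remarks: in finite dimensions $\rho$ is in fact continuous in the entries, not merely upper semicontinuous, so that step is even easier than you suggest; and the passage from subharmonicity on the open half-plane $\mathbb{C}_{+}$ to the closed set $\overline{\mathbb{C}}_{+}$ is glossed over in your last step, but the lemma as stated in the paper is equally loose on this point (subharmonicity is an open-set notion, and $\mathcal{H}^{N\times N}$ only guarantees analyticity on $\mathbb{C}_{+}$), so this is an imprecision inherited from the source rather than a gap in your argument. Also note that your final appeal to the explicit entries $k_{i}a_{ij}e^{-s\tau_{ij}}$ is unnecessary for the lemma as stated, since boundedness is already built into the definition of $\mathcal{H}^{N\times N}$; the lemma holds for arbitrary members of that class, which is how the paper uses it.
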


We first rewrite system (\ref{linear delayed system}) in a more convenient
form, as detailed next. Consider the following change of variables
\begin{equation}
\Psi _{i}(t)\triangleq x_{i}(t)-\left( \omega ^{\star }t+\overline{x}%
_{i,0}\right) ,\qquad i=1,\ldots ,N,  \label{Phi-Transformaiton}
\end{equation}%
where $\omega ^{\star }$ and $\{\overline{x}_{i,0}\}$ are defined in (\ref%
{bias_Theo}) and (\ref{minimum-norm-solution}), respectively, so that the
original system (\ref{linear delayed system}) can be equivalently rewritten
in terms of the new variables $\{\Psi _{i}(t)\}_{i}$ as%
\begin{equation}
\begin{array}{l}
\dot{\Psi}_{i}(t)=\Delta \omega _{i}(\omega ^{\star })+\dfrac{K}{c_{i}}%
\dsum\limits_{j\in \mathcal{N}_{i}}a_{ij}\left( \Psi _{j}(t-\tau _{ij})-\Psi
_{i}(t)+\overline{x}_{j,0}-\overline{x}_{i,0}\right) ,\medskip \\
\Psi _{i}(\vartheta )=\phi _{i}(\vartheta )\triangleq \widetilde{\phi }%
_{i}(\vartheta )-\omega ^{\star }\vartheta -\overline{x}_{i,0},\quad
\vartheta \in \lbrack -\tau ,\ 0],%
\end{array}%
\qquad i=1,\ldots ,N,\quad t\geq 0,  \label{Psi-system}
\end{equation}%
where $\widetilde{\mathbf{\phi }}$ are the initial value functions of the
original system (\ref{linear delayed system}). Using (\ref%
{minimum-norm-solution}) [see also (\ref{system-linear-eq}), with $\alpha
=\omega ^{\star }$], system (\ref{Psi-system}) becomes
\begin{equation}
\begin{array}{l}
\dot{\Psi}_{i}(t)=k_{i}\dsum\limits_{j\in \mathcal{N}_{i}}a_{ij}\left( \Psi
_{j}(t-\tau _{ij})-\Psi _{i}(t)\right) ,\quad \quad \quad \vspace{-0.2cm}%
\medskip \smallskip \\
\Psi _{i}(\vartheta )=\phi _{i}(\vartheta ),\quad \vartheta \in \lbrack
-\tau ,\ 0],%
\end{array}%
\text{ \ \ }i=1,\ldots ,N,\quad t\geq 0,  \label{system-translated_2}
\end{equation}%
where, for the sake of convenience, we defined $k_{i}\triangleq K/c_{i}>0,$
for $i=1,\ldots ,N.$

It follows from (\ref{system-translated_2}) that the synchronized state of
system (\ref{linear delayed system}), as given in (\ref%
{subclass_of_synch_state}), is globally asymptotically stable (according to
Definition \ref{Definition_sync-state}) if system in (\ref%
{system-translated_2}) is marginally stable (cf. Appendix \ref%
{Appendix_overview}). To prove the marginal stability of system (\ref%
{system-translated_2}), it is sufficient to show that system (\ref%
{system-translated_2}) satisfies Lemma \ref%
{theorem_stability_of_dynamical_system}. To this end, we organize the rest
of the proof in the \ following two steps.

\begin{description}
\item[Step 1.] We show that, under \textbf{a1)}, \textbf{a3),} all the
solutions $\mathbf{\Psi }[\mathbf{\phi }](t)$ to system (\ref%
{system-translated_2}) are uniformly bounded, as required by assumption
\textbf{b1) }of\textbf{\ }Lemma \ref{theorem_stability_of_dynamical_system};

\item[Step 2.] We prove that, under \textbf{a1)}-\textbf{a3)} and the
quasi-strong connectivity of the digraph, the characteristic equation
associated to system (\ref{system-translated_2}) has all the roots in $%
%TCIMACRO{\U{2102} }%
%BeginExpansion
\mathbb{C}
%EndExpansion
_{-}$ and a simple root in $s=0,$ which satisfies assumption \textbf{b2) }of%
\textbf{\ }Lemma \ref{theorem_stability_of_dynamical_system}.
\end{description}

\noindent \textbf{Step 1. }Given any arbitrary $\beta <+\infty ,$
assumptions \textbf{a1)}, \textbf{a3)}\ are sufficient to guarantee that all
the trajectories of (\ref{system-translated_2}) are uniformly bounded (cf.
Definition \ref{Def:bounded}), as shown next. Since $\mathbf{\phi }\in
\mathcal{C}_{\beta }^{1},$ we have (see (\ref{C_beta}))%
\begin{equation}
\left\vert \Psi _{i}\left( \vartheta \right) \right\vert \leq \beta ,\qquad
\forall i=1,\ldots ,N,\quad \vartheta \in \lbrack -\tau ,\ 0].
\label{Boundness_init_cond}
\end{equation}%
Condition (\ref{Boundness_init_cond}) is sufficient for $\{\Psi _{i}[\mathbf{%
\phi }](t)\}$ to be uniformly bounded for all $t>0.$ In fact, assume that $%
\{\Psi _{i}[\mathbf{\phi }](t)\}$ are not bounded. Then, according to
Definition \ref{Def:bounded} of Appendix \ref{Appendix_overview}, there must
exist some $\overline{t}>0$ and a set $\mathcal{J}\subseteq \{1,\ldots ,N\}$
such that\footnote{%
For the sake of notation, we omit in the following the dependence of $%
\mathbf{\Psi }[\mathbf{\phi }](t)$ on $\mathbf{\phi .}$}
\begin{equation}
\left\Vert \mathbf{\Psi }(t)\right\Vert _{\infty }\leq \beta ,\quad \forall
t<\overline{t},  \label{Phi_less_beta}
\end{equation}%
and
\begin{equation}
|\Psi _{j}\left( \overline{t}\right) |=\beta \quad \text{and}\quad \overset{%
\cdot }{\Psi }_{j}(\overline{t})=\left\{
\begin{array}{ll}
<0, & \text{if }\Psi _{j}\left( \overline{t}\right) =-\beta , \\
>0, & \text{if }\Psi _{j}\left( \overline{t}\right) =\beta ,%
\end{array}%
\right. \quad j\in \mathcal{J}\text{.}  \label{uniformly-boundness-cond_2}
\end{equation}%
This would imply that, for some $t>\overline{t},$ $|\Psi _{j}\left(
\overline{t}\right) |>\beta .$

Given (\ref{Phi_less_beta})-(\ref{uniformly-boundness-cond_2}) and $j\in
\mathcal{J}$, we have two possibilities, namely:

\begin{description}
\item[i)] $\Psi _{j}\left( \overline{t}\right) =\beta$ and $\overset{\cdot }{%
\Psi }_{j}(\overline{t})>0.$ Since $|\Psi _{i}(\overline{t}-\tau
_{ji})|<\beta ,$ $\forall i=1,\ldots ,N$ (see (\ref{Phi_less_beta})) and $%
k_{j}>0$, from (\ref{system-translated_2}) we have
\begin{equation}  \label{Phi_ineq}
\overset{\cdot }{\Psi }_{j}(\overline{t})=k_{j}\sum_{i\in \mathcal{N}%
_{j}}a_{ji}\left( \Psi _{i}(\overline{t}-\tau _{ji})-\Psi _{j}\left(
\overline{t}\right) \right) =k_{j}\sum_{i\in \mathcal{N}_{j}}a_{ji}\left(
\Psi _{i}(\overline{t}-\tau _{ji})-\beta \right) \leq 0,
\end{equation}%
where in the last inequality, we used (\ref{Boundness_init_cond}).
Expression (\ref{Phi_ineq}) contradicts the assumption $\overset{\cdot }{%
\Psi }_{j}(\overline{t})>0.$

\item[ii)] $\Psi _{j}\left( \overline{t}\right) =-\beta$ and $\overset{\cdot
}{\Psi }_{j}(\overline{t})<0.$ However, since
\begin{equation}
\overset{\cdot }{\Psi }_{j}(\overline{t})=k_{j}\sum_{i\in \mathcal{N}%
_{j}}a_{ji}\left( \Psi _{i}(\overline{t}-\tau _{ji})-\Psi _{j}(\overline{t}%
)\right) =k_{j}\sum_{i\in \mathcal{N}_{j}}a_{ji}\left( \Psi _{i}(\overline{t}%
-\tau _{ji})+\beta \right) \geq 0,
\end{equation}%
a contradiction with $\overset{\cdot }{\Psi }_{j}(\overline{t})<0$ results.
\end{description}

Thus, any solution $\mathbf{\Psi }[\mathbf{\phi }](t)$ to (\ref%
{system-translated_2}) is bounded and remains in $\mathcal{C}_{\beta }^{1}$
for any $t>0.$

\noindent \textbf{Step 2. } We study now the characteristic equation\ (\ref%
{def_characteristic-function}) associated to system (\ref%
{system-translated_2}), assuming that \textbf{a1)}-\textbf{a3)} are
satisfied and that digraph ${\mathscr{G}}$ is QSC. First of all, observe
that, since $\boldsymbol{\Delta}-\mathbf{H}(0)=K \mathbf{D}_{c} \mathbf{L}$,
we have
\begin{equation}
p(0)=\det \left(
%TCIMACRO{\TeXButton{Delta-bold}{\boldsymbol{\Delta}}}%
%BeginExpansion
\boldsymbol{\Delta}%
%EndExpansion
-\mathbf{H}(0)\right) =K\det \left( \mathbf{D}_{c}\right) \det \left(
\mathbf{L}\right) =0,  \label{zero_solution}
\end{equation}%
where $%
%TCIMACRO{\TeXButton{Delta-bold}{\boldsymbol{\Delta}}}%
%BeginExpansion
\boldsymbol{\Delta}%
%EndExpansion
$ and $\mathbf{H}(s)$ are defined in (\ref{def_Delta_and_H_matrices}), and
the last equality in (\ref{zero_solution}) is due to the properties of
Laplacian matrix $\mathbf{L}=\mathbf{L}({\mathscr{G}})$ of digraph ${%
\mathscr{G}}$ (see (\ref{zero-eig})). It follows from (\ref{zero_solution})
that $p(s)$ has a root in $s=0,$ corresponding to the zero eigenvalue of the
Laplacian $\mathbf{L}$ (recall that $K\det \left( \mathbf{D}_{c}\right) \neq
0$). Since the digraph is assumed to be QSC, according to Corollary \ref%
{Lemma_spanning-tree}, such a root is simple.

Thus, to complete the proof, we need to show that $p(s)$ does not have any
solution in $\overline{\mathbf{%
%TCIMACRO{\U{2102} }%
%BeginExpansion
\mathbb{C}
%EndExpansion
}}_{+}\backslash \{0\},$ i.e.,%
\begin{equation}
\det \left( s\mathbf{I+}%
%TCIMACRO{\TeXButton{Delta-bold}{\boldsymbol{\Delta}}}%
%BeginExpansion
\boldsymbol{\Delta}%
%EndExpansion
-\mathbf{H}(s)\right) \neq 0,\quad \forall s\in \overline{\mathbf{%
%TCIMACRO{\U{2102} }%
%BeginExpansion
\mathbb{C}
%EndExpansion
}}_{+}\backslash \{0\}.  \label{cond_zeros_in_C_r}
\end{equation}%
Since $s\mathbf{I+}%
%TCIMACRO{\TeXButton{Delta-bold}{\boldsymbol{\Delta}}}%
%BeginExpansion
\boldsymbol{\Delta}%
%EndExpansion
$ is nonsingular in $\overline{\mathbf{%
%TCIMACRO{\U{2102} }%
%BeginExpansion
\mathbb{C}
%EndExpansion
}}_{+}\backslash \{0\}$ [recall that, under \textbf{a1)}, $%
%TCIMACRO{\TeXButton{Delta-bold}{\boldsymbol{\Delta}}}%
%BeginExpansion
\boldsymbol{\Delta}%
%EndExpansion
\geq \mathbf{0}$, with at least one positive diagonal entry], (\ref%
{cond_zeros_in_C_r}) is equivalent to
\begin{equation}
\det \left( \mathbf{I-}\left( s\mathbf{I+}%
%TCIMACRO{\TeXButton{Delta-bold}{\boldsymbol{\Delta}}}%
%BeginExpansion
\boldsymbol{\Delta}%
%EndExpansion
\right) ^{-1}\mathbf{H}(s)\right) \neq 0,\quad \forall s\in \overline{%
\mathbf{%
%TCIMACRO{\U{2102} }%
%BeginExpansion
\mathbb{C}
%EndExpansion
}}_{+}\backslash \{0\},
\end{equation}%
which leads to the following sufficient condition for (\ref%
{cond_zeros_in_C_r}):%
\begin{equation}
\rho \left( s\right) \triangleq \rho \left( \left( s\mathbf{I+}%
%TCIMACRO{\TeXButton{Delta-bold}{\boldsymbol{\Delta}}}%
%BeginExpansion
\boldsymbol{\Delta}%
%EndExpansion
\right) ^{-1}\mathbf{H}(s)\right) <1,\quad \forall s\in \overline{\mathbf{%
%TCIMACRO{\U{2102} }%
%BeginExpansion
\mathbb{C}
%EndExpansion
}}_{+}\backslash \{0\}.  \label{cond_spectrum_less_one}
\end{equation}%
Since $\left( s\mathbf{I+}%
%TCIMACRO{\TeXButton{Delta-bold}{\boldsymbol{\Delta}}}%
%BeginExpansion
\boldsymbol{\Delta}%
%EndExpansion
\right) ^{-1}\in \mathcal{H}^{N\times N}$ and $\mathbf{H}(s)\in \mathcal{H}%
^{N\times N}$ (cf. Appendix \ref{Appendix_overview}), it follows from Lemma %
\ref{Lemma-Boyd} that the spectral radius $\rho \left( s\right) $ in (\ref%
{cond_spectrum_less_one}) is a subharmonic function on $\overline{\mathbf{%
%TCIMACRO{\U{2102} }%
%BeginExpansion
\mathbb{C}
%EndExpansion
}}_{+}.$ As a direct consequence, we have, among all, that $\rho \left(
s\right) $ is a continuous bounded function on $\overline{\mathbf{%
%TCIMACRO{\U{2102} }%
%BeginExpansion
\mathbb{C}
%EndExpansion
}}_{+}$\footnote{%
Observe that $\rho \left( s\right) $ is well-defined in $s=0,$ and $\rho
\left( 0\right) =1.$} and satisfies the \emph{maximum modulus principle }%
(see, e.g., \cite[Ch. 12]{Rudin-book}): $\rho \left( s\right) $ achieves its
\emph{global }maximum only on the boundary of $\overline{\mathbf{%
%TCIMACRO{\U{2102} }%
%BeginExpansion
\mathbb{C}
%EndExpansion
}}_{+}.$\footnote{%
According to the maximum modulus theorem \cite[Theorem 10.24]{Rudin-book},
the only possibility for $\rho \left( s\right) $ to reach its global maximum
also on the interior of $\overline{\mathbf{%
%TCIMACRO{\U{2102} }%
%BeginExpansion
\mathbb{C}
%EndExpansion
}}_{+}$ is that $\rho \left( s\right) $ be constant over all $\overline{%
\mathbf{%
%TCIMACRO{\U{2102} }%
%BeginExpansion
\mathbb{C}
%EndExpansion
}}_{+}$, which is not the case.} Since $\rho \left( s\right) $ is strictly
proper in $\overline{\mathbf{%
%TCIMACRO{\U{2102} }%
%BeginExpansion
\mathbb{C}
%EndExpansion
}}_{+},$ i.e., $\rho \left( s\right) \rightarrow 0$ as $\left\vert
s\right\vert \rightarrow +\infty $ while keeping $s\in \overline{\mathbf{%
%TCIMACRO{\U{2102} }%
%BeginExpansion
\mathbb{C}
%EndExpansion
}}_{+}$, it follows that
\begin{equation}
\sup_{s\in \mathbf{%
%TCIMACRO{\U{2102} }%
%BeginExpansion
\mathbb{C}
%EndExpansion
}_{+}}\rho \left( s\right) <\sup_{s\in \overline{\mathbf{%
%TCIMACRO{\U{2102} }%
%BeginExpansion
\mathbb{C}
%EndExpansion
}}_{+}}\rho \left( s\right) \leq \sup_{\omega \in
%TCIMACRO{\U{211d} }%
%BeginExpansion
\mathbb{R}
%EndExpansion
}\rho \left( j\omega \right) .  \label{sup_spectral_radius}
\end{equation}%
Using (\ref{sup_spectral_radius}), we infer that condition (\ref%
{cond_spectrum_less_one}) is satisfied if
\begin{equation}
\rho \left( j\omega \right) =\rho \left( \left( j\omega \mathbf{I+}%
%TCIMACRO{\TeXButton{Delta-bold}{\boldsymbol{\Delta}}}%
%BeginExpansion
\boldsymbol{\Delta}%
%EndExpansion
\right) ^{-1}\mathbf{H}(j\omega )\right) <1,\quad \forall \omega \in
%TCIMACRO{\U{211d} }%
%BeginExpansion
\mathbb{R}
%EndExpansion
\backslash \{0\}.
\end{equation}%
For any matrix norm $\left\Vert \mathbf{\cdot }\right\Vert $, since $\rho
\left( \mathbf{A}\right) \leq \left\Vert \mathbf{A}\right\Vert$ $\forall
\mathbf{A\in
%TCIMACRO{\U{2102} }%
%BeginExpansion
\mathbb{C}
%EndExpansion
}^{N\times M}$ \cite[Theorem 5.6.9]{Horn-book}, using the \emph{maximum row
sum matrix norm } $\left\Vert \mathbf{\cdot }\right\Vert _{\infty }$ defined
as \cite[Definition 5.6.5]{Horn-book}%
\begin{equation}
\left\Vert \mathbf{A}\right\Vert _{\infty }\triangleq \max_{r=1,\ldots
,N}\dsum\limits_{q=1}^{M}\left\vert \left[ \mathbf{A}\right]
_{rq}\right\vert ,
\end{equation}%
we have%
\begin{eqnarray}
\rho \left( j\omega \right) &\leq &\left\Vert \left( j\omega \mathbf{I+}%
%TCIMACRO{\TeXButton{Delta-bold}{\boldsymbol{\Delta}}}%
%BeginExpansion
\boldsymbol{\Delta}%
%EndExpansion
\right) ^{-1}\mathbf{H}(j\omega )\right\Vert _{\infty }=\max_{r=1,\ldots
,N}\dsum\limits_{q\neq r}\left\vert \frac{k_{r}\text{ }a_{rq}}{j\omega
+k_{r}\deg \nolimits_{\text{in}}(v_{r})}e^{-j\omega \tau _{rq}}\right\vert
\notag \\
&= &\max_{r=1,\ldots ,N}\left\vert \frac{k_{r}\deg \nolimits_{\text{in}%
}(v_{r})}{j\omega +k_{r}\deg \nolimits_{\text{in}}(v_{r})}\right\vert \leq 1,
\label{spectrum_inequality_on_omega}
\end{eqnarray}%
where in the last inequality the equality is reached if and only if $\omega
=0.$ It follows from (\ref{spectrum_inequality_on_omega}) that $\rho \left(
j\omega \right) <1$ for all $\omega \neq 0,$ which guarantees that condition
(\ref{cond_spectrum_less_one}) is satisfied.

This proves that assumption \textbf{b2) }of Lemma \ref%
{theorem_stability_of_dynamical_system} holds true. Hence, all the
trajectories $\mathbf{\Psi }[\mathbf{\phi }](t)\rightarrow \mathbf{\Psi }%
^{\infty }$ as $t\rightarrow +\infty ,$ with exponential rate arbitrarily
close to $r\triangleq \{\min_i \limfunc{Re}\{s_{i}\}: p(s_{i})=0\,\, \text{%
and}\,\, s_i\neq 0\}$, where $p(s)$ is defined in (\ref%
{def_characteristic-function}) and $\mathbf{\Psi }^{\infty }$ satisfies the
linear system of equations $\mathbf{L\Psi }^{\infty }=\mathbf{0}$, whose
solution is $\mathbf{\Psi }^{\infty }\in \limfunc{span}\{\mathbf{1}_{N}\}$
(Corollary \ref{Lemma_spanning-tree}). In other words, system (\ref%
{system-translated_2}) exponentially reaches the consensus on the state.
%TCIMACRO{\TeXButton{QED}{\hspace{\fill}\rule{1.5ex}{1.5ex}}}%
%BeginExpansion
\hspace{\fill}\rule{1.5ex}{1.5ex}%
%EndExpansion
\vspace{-0.2cm}

\subsection{Necessity}

We prove the necessity of the condition by showing that, if the digraph ${%
\mathscr{G}}$ of (\ref{linear delayed system}) is not QSC, different
clusters of nodes synchronize on different values. This local
synchronization is in contrast with the definition of (global)
synchronization, as given in Definition \ref{Definition_sync-state}. Hence,
if the overall system has to synchronize, the digraph associated to the
system must be QSC.

Assume that the digraph ${\mathscr{G}}$ associated to (\ref{linear delayed
system}) is not QSC, but WC with $K$ SCCs and, let us say, $r\leq K$
(distinct) RSCCs . Then, according to Lemma \ref{condensation-digraph}, the
condensation digraph ${\mathscr{G}}^{\star }{=}\{{%
%TCIMACRO{\TeXButton{V}{\mathscr{V}}}%
%BeginExpansion
\mathscr{V}%
%EndExpansion
}^{\star }{,%
%TCIMACRO{\TeXButton{E}{\mathscr{E}}}%
%BeginExpansion
\mathscr{E}%
%EndExpansion
}^{\star }\}$ contains a spanning directed forest with $r$ distinct roots
(associated to the $r$ RSCCs of the $K$ SCCs of ${\mathscr{G}}$)$.$ Ordering
the nodes $v_{1}^{\star },\ldots ,v_{K}^{\star }\in {%
%TCIMACRO{\TeXButton{V}{\mathscr{V}}}%
%BeginExpansion
\mathscr{V}%
%EndExpansion
}^{\star }$ according to Lemma \ref{condensation-digraph-vertex-ordering},
and exploring the relationship between ${\mathscr{G}}^{\star }$and ${%
\mathscr{G}}$ (cf. Appendix \ref{Appendix_existence}), one can write the
Laplacian matrix $\mathbf{L}=\mathbf{L}({\mathscr{G}})$ as an $r$-reducible
matrix \cite{Horn-book}, i.e., %\begin{equation}
%\mathbf{L(}\mathscr{G}\mathbf{)}=\left(
%\begin{array}{cccccc}
%\mathbf{L}_{1} &  &  &  &  &  \\
%& \ddots &  &  &  &  \\
%&  & \mathbf{L}_{r} &  &  &  \\
%\ast & \ast & \ast & \mathbf{B}_{r+1} &  &  \\
%\ast & \ast & \ast & \ast & \ddots &  \\
%\ast & \ast & \ast & \ast & \ast & \mathbf{B}_{K}%
%\end{array}%
%\right) ,  \label{L_r_forest}
%\end{equation}%
\begin{equation}
\mathbf{L(}\mathscr{G}\mathbf{)}=\left(
\begin{array}{cccccc}
\mathbf{L}_{1} & \mathbf{0} & \cdots & \cdots & \cdots & \mathbf{0} \\
\mathbf{0} & \ddots & \ddots & \mathbf{\cdots } & \mathbf{\cdots } & \vdots
\\
\mathbf{0} & \mathbf{0} & \mathbf{L}_{r} & \ddots & \mathbf{\cdots } & \vdots
\\
\ast & \ast & \ast & \mathbf{B}_{r+1} & \ddots & \vdots \\
\ast & \ast & \ast & \ast & \ddots & \mathbf{0} \\
\ast & \ast & \ast & \ast & \ast & \mathbf{B}_{K}%
\end{array}%
\right) ,  \label{L_r_forest}
\end{equation}%
where the first $r$ diagonal blocks are the Laplacian matrices of the $r$
RSCCs of ${\mathscr{G},}$ and the $\{\mathbf{B}_{k}\},$ with $k>r,$ are the
nonsingular matrices associated to the remaining SCCs. Each of these
matrices can be written as the linear combination of the Laplacian matrix of
the corresponding SCC and a nonnegative diagonal matrix with at least one
positive diagonal entry. The structure of $\mathbf{L}$ given in (\ref%
{L_r_forest}) shows that the RSCCs associated to the first $r$ diagonal
blocks are totally decoupled from each other. Hence, (at least) the state
derivatives of the nodes in each of these $r$ RSCCs reach a common value
(since the corresponding subdigraphs are SC by construction) that, in
general, is different for any of the SCCs. This is sufficient for the
overall system not to reach a global synchronization.

\def\baselinestretch{1}
\small

\end{document}